\definecolor[named]{urlblue}{cmyk}{1,0.58,0,0.21}
\tikzstyle{vertex}=[circle,fill=white,draw=black,minimum size=8pt,inner sep=0pt]
\definecolor[named]{mGray}{rgb}{0.31,0.31,0.33}
\definecolor[named]{mBlue}{RGB}{122,148,229}
\definecolor[named]{mTurquoise}{RGB}{0,152,161}
\definecolor[named]{mGreen}{RGB}{87,171,39}
\definecolor[named]{mYellow}{RGB}{254,200,18}
\definecolor[named]{mRed}{RGB}{204,7,30}
\definecolor[named]{mOrange}{RGB}{255,128,0}
\definecolor[named]{mViolet}{RGB}{97,33,88}
\definecolor[named]{mDarkBlue}{RGB}{0,84,159}
\definecolor[named]{mLightGreen}{RGB}{189,205,0}
\newtheorem{theorem}{Theorem}[section]
\newtheorem{lemma}[theorem]{Lemma}
\newtheorem{corollary}[theorem]{Corollary}
\theoremstyle{definition}
\newtheorem{definition}[theorem]{Definition}
\theoremstyle{remark}
\newtheorem{remark}[theorem]{Remark}
\newtheorem{claim}[theorem]{Claim}
\newenvironment{claimproof}{\begin{proof}}{\end{proof}}
\newcommand{\WL}[2]{\chi_{\sf WL}^{#1}\![#2]}
\newcommand{\WLit}[3]{\chi_{(#2)}^{#1}[#3]}
\newcommand{\tWL}[3]{\chi_{(#1,#2)\text{-}{\sf WL}}[#3]}
\newcommand{\CA}{{\mathcal A}}
\newcommand{\CE}{{\mathcal E}}
\newcommand{\CG}{{\mathcal G}}
\newcommand{\CH}{{\mathcal H}}
\newcommand{\CM}{{\mathcal M}}
\newcommand{\CO}{{\mathcal O}}
\newcommand{\CP}{{\mathcal P}}
\newcommand{\CQ}{{\mathcal Q}}
\newcommand{\CS}{{\mathcal S}}
\newcommand{\NN}{{\mathbb N}}
\DeclareMathOperator{\im}{im}
\DeclareMathOperator{\id}{id}
\DeclareMathOperator{\Iso}{Iso}
\DeclareMathOperator{\Aut}{Aut}
\DeclareMathOperator{\Sym}{Sym}
\DeclareMathOperator{\mgamma}{\widehat{\Gamma}}
\DeclareMathOperator{\dist}{dist}
\DeclareMathOperator{\cl}{cl}
\DeclareMathOperator{\parent}{par}
\newcommand{\orcid}[1]{\href{https://orcid.org/#1}{\includegraphics[height=1.8ex]{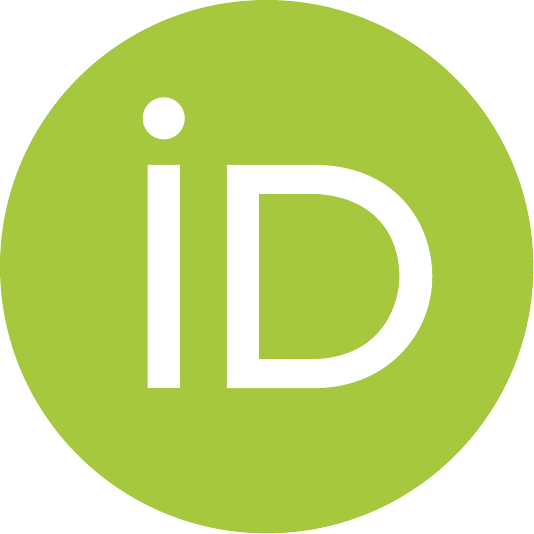}}}
\newcommand{\email}[1]{\href{mailto:#1}{\texttt{#1}}}
\title{Isomorphism Testing Parameterized by Genus and Beyond}
\author{
Daniel Neuen \orcid{0000-0002-4940-0318}\\
University of Bremen\\
\email{dneuen@uni-bremen.de}
}
\date{}
\begin{document}

\maketitle

\begin{abstract}
 We give an isomorphism test for graphs of Euler genus $g$ running in time $2^{\CO(g^4 \log g)}n^{\CO(1)}$.
 Our algorithm provides the first explicit upper bound on the dependence on $g$ for an fpt isomorphism test parameterized by the Euler genus of the input graphs.
 The only previous fpt algorithm runs in time $f(g)n$ for some function $f$ (Kawarabayashi 2015).
 Actually, our algorithm even works when the input graphs only exclude $K_{3,h}$ as a minor.
 For such graphs, no fpt isomorphism test was known before.
 
 The algorithm builds on an elegant combination of simple group-theoretic, combinatorial, and graph-theoretic approaches.
 In particular, our algorithm relies on the notion $(t,k)$-WL-bounded graphs which provide a powerful tool to combine group-theoretic techniques with the standard Weisfeiler-Leman algorithm.
 This concept may be of independent interest.
\end{abstract}

\section{Introduction}

Determining the computational complexity of the Graph Isomorphism Problem is a long-standing open question in theoretical computer science (see, e.g., \cite{Karp72}).
The problem is easily seen to be contained in NP, but it is neither known to be in PTIME nor known to be NP-complete.
In a breakthrough result, Babai \cite{Babai16} recently obtained a quasipolynomial-time algorithm for testing isomorphism of graphs (i.e., an algorithm running in time $n^{\CO((\log n)^c)}$ where $n$ denotes the number of vertices of the input graphs, and $c$ is some constant), achieving the first improvement over the previous best algorithm running in time $n^{\CO(\sqrt{n / \log n})}$ \cite{BabaiKL83} in over three decades.
However, it remains wide open whether GI can be solved in polynomial time.

In this work, we are concerned with the parameterized complexity of isomorphism testing.
While polynomial-time isomorphism tests are known for a large variety of restricted graph classes (see, e.g., \cite{Bodlaender90,GroheM15,GroheS15,HopcroftT72,Luks82,Ponomarenko91}),
for several important structural parameters such as maximum degree or rank-width, it is still unknown whether isomorphism testing is fixed-parameter tractable (i.e., whether there is an isomorphism algorithm running in time $f(k)n^{\CO(1)}$ where $k$ denotes the graph parameter in question, $n$ the number of vertices of the input graphs, and $f$ is some function).
On the other hand, there has also been significant progress in recent years.
In 2015, Lokshtanov et al.\ \cite{LokshtanovPPS17} obtained the first fpt isomorphism test parameterized by the tree-width $k$ of the input graph running in time $2^{\CO(k^5 \log k)}n^5$.
This algorithm was later improved by Grohe et al.\ \cite{GroheNSW20} to a running time of $2^{\CO(k \cdot (\log k)^c)}n^3$ (for some constant $c$).
In the same year, Kawarabayashi \cite{Kawarabayashi15} obtained the first fpt isomorphism test parameterized by the Euler genus $g$ of the input graph running in time $f(g)n$ for some function $f$.
While Kawarabayashi's algorithm achieves optimal dependence on the number of vertices of the input graphs, it is also extremely complicated and it provides no explicit upper bound on the function $f$.
Indeed, the algorithm spans over multiple papers \cite{Kawarabayashi15,KawarabayashiM08,KawarabayashiMN21} and builds on several deep structural results for graphs of bounded genus.
Both results are generalized by Lokshtanov, Pilipczuk, Pilipczuk and Saurabh \cite{LokshtanovPPS22} who provide an fpt isomorphism test parameterized by the Hadwiger number\footnote{The Hadwiger number of a graph $G$ is the maximum number $h$ such that $K_h$ is a minor of $G$.} of the input graphs running in time $f(h)n^{\CO(1)}$ for some function $f$.
Again, the algorithm is extremely complicated and the authors are even unable to give a \emph{computable} upper bound for the function $f$.
This actually means that their isomorphism test is not an fpt algorithm in a strict sense \cite{CyganFKLMPPS15,FlumG06}.

The last two results naturally raise the question of whether there are simpler algorithms that also achieve a better dependence on the parameter in question (e.g., the function $f$ is exponentially bounded).
In this work, we present such an isomorphism test for graphs of Euler genus at most $g$ running in time $2^{\CO(g^4 \log g)}n^{\CO(1)}$.
In contrast to Kawarabayashi's algorithm, our algorithm does not require any deep graph-theoretic insights, but rather builds on an elegant combination of well-established and simple group-theoretic, combinatorial, and graph-theoretic ideas.
In particular, this enables us to provide the first explicit upper bound on the dependence on $g$ for an fpt isomorphism test.
Actually, the only property our algorithm exploits is that graphs of genus $g$ exclude $K_{3,h}$ as a minor for $h \geq 4g+3$ \cite{Ringel65}.
In other words, our main result is an fpt isomorphism test for graphs excluding $K_{3,h}$ as a minor.

\begin{theorem}
 The Graph Isomorphism Problem for graphs excluding $K_{3,h}$ as a minor can be solved in time $2^{\CO(h^4 \log h)}n^{\CO(1)}$.
\end{theorem}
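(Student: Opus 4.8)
The plan is to reduce to $3$-connected graphs, show that $3$-connected $K_{3,h}$-minor-free graphs are $(t,k)$-WL-bounded with $k=\CO(1)$ and $t$ polynomial in $h$, and then invoke the isomorphism algorithm for $(t,k)$-WL-bounded graphs, substituting the parameters at the end. \textbf{Reduction to the $3$-connected case.} First I would reduce, by the standard decomposition of a (vertex- and arc-colored) graph into connected, biconnected, and triconnected components --- via the block-cut tree and the SPQR/Tutte decomposition --- to testing isomorphism of the $3$-connected torsos, encoding virtual edges by additional colors and processing the decomposition tree bottom-up in the usual way. The point that makes this reduction safe here is that every torso occurring in the decomposition is a \emph{minor} of the input graph: each virtual edge can be realized by contracting the part of the graph it represents, and distinct virtual edges of a torso lie in pairwise disjoint parts. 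Hence all torsos again exclude $K_{3,h}$ as a minor, and it suffices to solve the problem for colored $3$-connected $K_{3,h}$-minor-free graphs. (Low connectivity genuinely has to be removed first: e.g.\ a star is $K_{3,h}$-minor-free but individualizing its center leaves a huge unresolved color class.)

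\textbf{The structural heart.} Next I would prove the key claim: if $G$ is $3$-connected and excludes $K_{3,h}$ as a minor, then after individualizing a bounded number of vertices (which we may afford to do by trying all choices) the colored graph is $(t,2)$-WL-bounded with $t$ polynomial in $h$; that is, repeatedly individualizing all color classes of size at most $t$ and reapplying $2$-WL reaches a discrete coloring. Starting from an individualized vertex $v_0$, color refinement already recognizes the BFS layers around $v_0$. The crucial estimate is that no color class of size larger than $t=\mathrm{poly}(h)$ can survive this process in a $3$-connected $K_{3,h}$-minor-free graph: all vertices of such a class lie in one BFS layer and share the same WL-type relative to the already resolved part, so, using $3$-connectivity and Menger's theorem to route three pairwise disjoint connected ``hub'' sets together with disjoint paths through the layering, $h$ of these vertices and the three hubs would form a $K_{3,h}$ minor --- a contradiction. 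The same kind of argument rules out the process getting stuck before the coloring is discrete.

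\textbf{Algorithm and wrap-up.} Finally I would run the isomorphism test for $(t,k)$-WL-bounded graphs: execute the WL-closure process on both inputs in parallel, and whenever a color class of size at most $t$ is to be individualized, branch over the at most $t!$ matchings of it to the corresponding class in the other graph --- but organize this group-theoretically in the style of Luks, carrying along the isomorphism coset found so far, so that the whole width-$t$ tower of individualizations is handled by hypergraph-isomorphism techniques for groups all of whose composition factors act on at most $t$ points, rather than by naive branching. This gives running time $2^{\mathrm{poly}(t)}n^{\CO(k)}$; plugging in $k=\CO(1)$ and $t=\mathrm{poly}(h)$ and tracking the polynomials yields the stated bound $2^{\CO(h^4\log h)}n^{\CO(1)}$.

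I expect the \emph{structural claim} of the middle step to be the main obstacle: it is the only place the $K_{3,h}$-minor-free hypothesis is used, and extracting the $K_{3,h}$ minor from a large unresolved color class while keeping $t$ merely polynomial in $h$ (and keeping the WL-dimension $k$ constant) requires a quantitatively careful interplay of the BFS layering with Menger's theorem. The group-theoretic component of the last step also needs care, but there one is mostly recombining established tools.
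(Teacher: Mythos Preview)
Your proposal has a genuine gap in the running-time analysis of the last step. You assert that the isomorphism test for $(t,k)$-WL-bounded graphs runs in time $2^{\mathrm{poly}(t)}n^{\CO(k)}$, but that is not what the Luks-style machinery delivers. The $(t,k)$-WL process may have to individualize $\Theta(n)$ colour classes of size at most $t$ before the colouring becomes discrete (picture a long path-like structure where each split only exposes the next small class). Organising this ``group-theoretically in the style of Luks'' means the isomorphism coset you carry lives in a $\mgamma_t$-group acting on the full $n$-point domain, and you must then solve String/Hypergraph Isomorphism relative to that group. Miller's algorithm does this in time $n^{\CO(t)}$, and even the recent quasipolynomial refinements only give $n^{\CO((\log t)^c)}$; in either case the dependence on $t$ sits in the \emph{exponent of $n$}, not as a multiplicative $2^{\mathrm{poly}(t)}$ factor. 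With $t=\mathrm{poly}(h)$ you therefore recover only an XP algorithm --- essentially the already-known $n^{\CO((\log h)^c)}$ bound --- and not an fpt one. (Your structural claim itself is fine: a $3$-connected $K_{3,h}$-minor-free graph is indeed $(h-1,1)$-WL-bounded after individualizing three vertices. The problem is that this fact alone cannot yield fpt.)

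The paper's way around this is to force $t$ down to a \emph{constant}: it works with the $(2,2)$-closure. The price is that a $3$-connected $K_{3,h}$-minor-free graph is \emph{not} globally $(2,2)$-WL-bounded after individualizing boundedly many vertices, so one cannot simply run the $(t,k)$-WL isomorphism test once on the whole graph. Instead the key structural lemma says that for $D=\cl_{2,2}^G(X)$, every component $Z$ of $G-D$ has $|N_G(Z)|<h$. This small-separator property yields an isomorphism-invariant tree decomposition of adhesion less than $h$ in which each bag, after individualizing at most $h^4$ carefully chosen vertices, becomes $(2,2)$-WL-bounded. Now the $2^{\CO(h^4\log h)}$ factor arises honestly from the at most $(h^4)!$ matchings of these individualized vertices \emph{per bag}; the per-bag isomorphism test is genuinely $n^{\CO(1)}$ because $t=k=2$ are constants; and the bounded adhesion lets one combine bags by dynamic programming. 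Your BFS/Menger sketch is pointing at the right kind of minor-extraction argument, but it has to be aimed at bounding the separators outside the $(2,2)$-closure rather than at bounding colour-class sizes in an $(h,2)$-WL process.
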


For the algorithm, we combine different approaches to the Graph Isomorphism Problem.
On a high-level, our algorithm follows a simple decomposition strategy which decomposes the input graph $G$ into pieces such that the interplay between the pieces is simple.
Here, the main idea is to define the pieces in such a way that, after fixing a small number of vertices, the automorphism group of $G$ restricted to a piece $D \subseteq V(G)$ is similar to the automorphism groups of graphs of maximum degree $3$.
This allows us to test isomorphism between the pieces using the group-theoretic graph isomorphism machinery dating back to Luks's polynomial-time isomorphism test for graphs of bounded maximum degree \cite{Luks82}.

In order to capture the restrictions on the automorphism group, we introduce the notion of \emph{$(t,k)$-WL-bounded graphs} which generalize so-called $t$-CR-bounded graphs.
The class of $t$-CR-bounded graphs was originally defined by Ponomarenko \cite{Ponomarenko89} and was recently rediscovered in \cite{Neuen22b,GroheNW23,Neuen22} in a series of works eventually leading to an algorithm testing isomorphism of graphs excluding $K_h$ as a topological subgraph in time $n^{\CO((\log h)^c)}$ (for some constant $c$).
Intuitively speaking, a graph $G$ is \emph{$t$-CR-bounded} if an initially uniform vertex-coloring $\chi$ can be turned into a discrete coloring (i.e., a coloring where every vertex has its own color) by repeatedly (a) applying the standard Color Refinement algorithm, and (b) splitting all color classes of size at most $t$.
We define \emph{$(t,k)$-WL-bounded} graphs in the same way, but replace the Color Refinement algorithm by the well-known Weisfeiler-Leman algorithm of dimension $k$ (see, e.g., \cite{CaiFI92,ImmermanL90}).
It seems that this natural extension of $t$-CR-bounded graphs has not been explicitly considered so far in the literature (although related ideas appear as early as \cite{Weisfeiler76}), and we start by building a polynomial-time isomorphism test for such graphs using the group-theoretic methods developed by Luks \cite{Luks82} as well as a simple extension due to Miller \cite{Miller83b}.
Actually, it turns out that isomorphism of $(t,k)$-WL-bounded graphs can even be tested in time $n^{\CO(k \cdot (\log t)^c)}$ using recent extensions \cite{Neuen22b} of Babai's quasipolynomial-time isomorphism test.
However, since we only apply these methods for $t=k=2$, there is no need for our algorithm to rely on such sophisticated subroutines.

Now, as the main structural insight, we prove that each $3$-connected graph $G$ that excludes $K_{3,h}$ as a minor admits (after fixing $3$ vertices) an isomorphism-invariant rooted tree decomposition $(T,\beta)$ such that the adhesion width (i.e., the maximal intersection between two bags) is bounded by $h$.
Additionally, each bag $\beta(t)$, $t \in V(T)$, can be equipped with a set $\gamma(t) \subseteq \beta(t)$ of size $|\gamma(t)| \leq h^4$ such that, after fixing all vertices in $\gamma(t)$, $G$ restricted to $\beta(t)$ is $(2,2)$-WL-bounded.
Given such a decomposition, isomorphisms can be computed by a simple bottom-up dynamic programming strategy along the tree decompositions.
For each bag, isomorphism is tested by first individualizing all vertices from $\gamma(t)$ at an additional factor of $|\gamma(t)|! = 2^{\CO(h^4 \log h)}$ in the running time.
Following the individualization of these vertices, our algorithm can then simply rely on a polynomial-time isomorphism test for $(2,2)$-WL-bounded graphs.
Here, we incorporate the partial solutions computed in the subtree below the current bag via a simple gadget construction.

To compute the decomposition $(T,\beta)$, we also build on the notion of $(2,2)$-WL-bounded graphs.
Given a set $X \subseteq V(G)$, we define the \emph{$(2,2)$-closure} to be the set $D = \cl_{2,2}^G(X)$ of all vertices appearing in a singleton color class after artificially individualizing all vertices from $X$, and performing the $(2,2)$-WL procedure.
As one of the main technical contributions, we can show that the interplay between $D$ and its complement in $G$ is simple (assuming $G$ excludes $K_{3,h}$ as a minor).
To be more precise, building on various properties of the $2$-dimensional Weisfeiler-Leman algorithm, we show that $|N_G(Z)| < h$ for every connected component $Z$ of $G - D$.
This allows us to choose $D = \cl_{2,2}^G(X)$ as the root bag of $(T,\beta)$ for some carefully chosen set $X$, and obtain the decomposition $(T,\beta)$ by recursion.

\medskip

The remainder of this work is structured as follows.
In the next section, we give the necessary preliminaries.
In Section \ref{sec:t-k-wl} we introduce $(t,k)$-WL-bounded graphs and provide a  polynomial-time isomorphism test for such graphs.
Next, we give a more detailed overview on our fpt algorithm in Section \ref{sec:overview}.
The tree decomposition is then computed in Sections \ref{sec:disjoint-subtrees} and \ref{sec:decomposition}.
Finally, we assemble the main algorithm in Section \ref{sec:main-algorithm}.

\section{Preliminaries}
\label{sec:preliminaries}

\subsection{Graphs}

A \emph{graph} is a pair $G = (V(G),E(G))$ consisting of a \emph{vertex set} $V(G)$ and an \emph{edge set} $E(G)$.
All graphs considered in this paper are finite and simple (i.e., they contain no loops or multiple edges).
Moreover, unless explicitly stated otherwise, all graphs are undirected.
For an undirected graph $G$ and $v,w \in V(G)$, we write $vw$ as a shorthand for $\{v,w\} \in E(G)$.
The \emph{neighborhood} of a vertex $v \in V(G)$ is denoted by $N_G(v)$.
The \emph{degree} of $v$, denoted by $\deg_G(v)$, is the number of edges incident with $v$, i.e., $\deg_G(v)=|N_G(v)|$.
For $X \subseteq V(G)$, we define $N_G[X] \coloneqq X \cup \bigcup_{v \in X}N_G(v)$ and $N_G(X) \coloneqq N_G[X] \setminus X$.
If the graph $G$ is clear from context, we usually omit the index and simply write $N(v)$, $\deg(v)$, $N[X]$ and $N(X)$.

We write $K_{\ell,h}$ to denote the complete bipartite graph on $\ell$ vertices on the left side and $h$ vertices on the right side.
A graph is \emph{regular} if every vertex has the same degree.
A bipartite graph $G=(V_1,V_2,E)$ is called \emph{$(d_1,d_2)$-biregular} if all vertices $v_i \in V_i$ have degree $d_i$ for both $i \in \{1,2\}$.
In this case $d_1 \cdot |V_1| = d_2 \cdot |V_2| = |E|$.
By a double edge counting argument, for each subset $S \subseteq V_i$, $i\in\{1,2\}$, it holds that $|S| \cdot d_i \leq |N_G(S)| \cdot d_{3-i}$.
A bipartite graph is \emph{biregular}, if there are $d_1,d_2 \in \NN$ such that $G$ is $(d_1,d_2)$-biregular.
Each biregular graph satisfies the Hall condition, i.e., for all $S \subseteq V_1$ it holds $|S| \leq |N_G(S)|$ (assuming $|V_1| \leq |V_2|$).
Thus, by Hall's Marriage Theorem, each biregular graph contains a matching of size $\min(|V_1|,|V_2|)$.

A \emph{path of length $k$} from $v$ to $w$ is a sequence of distinct vertices $v = u_0,u_1,\dots,u_k = w$ such that $u_{i-1}u_i \in E(G)$ for all $i \in [k] \coloneqq \{1,\dots,k\}$.
The \emph{distance} between two vertices $v,w \in V(G)$, denoted by $\dist_G(v,w)$, is the length of a shortest path between $v$ and $w$.
As before, we omit the index $G$ if it is clear from context.
For two sets $A,B\subseteq V(G)$, we denote by $E_G(A,B) \coloneqq \{vw\in E(G)\mid v\in A,w\in B\}$.
Also, $G[A,B]$ denotes the graph with vertex set $A\cup B$ and edge set $E_G(A,B)$.
For $A \subseteq V(G)$, we denote by $G[A] \coloneqq G[A,A]$ the induced subgraph on $A$.
Moreover, $G - A$ denotes the subgraph induced by the complement of $A$, that is, the graph $G - A \coloneqq G[V(G) \setminus A]$.
For $F \subseteq E(G)$, we also define $G - F$ to be the graph obtained from $G$ by removing all edges contained in $F$ (the vertex set remains unchanged).
A graph $H$ is a \emph{subgraph} of $G$, denoted by $H \subseteq G$, if $V(H) \subseteq V(G)$ and $E(H) \subseteq E(G)$. 
A graph $H$ is a \emph{minor} of $G$ if $H$ can be obtained from $G$ by deleting vertices and edges, as well as contracting edges.
The graph $G$ \emph{excludes $H$ as a minor} if it does not have a minor isomorphic to $H$.

A \emph{tree decomposition} of a graph $G$ is a pair $(T,\beta)$ where $T$ is a tree and $\beta\colon V(T) \rightarrow 2^{V(G)}$ ($2^{V(G)}$ denotes the power set of $V(G)$) such that
\begin{enumerate}[label = (T.\arabic*)]
 \item for every $vw \in E(G)$ there is some $t \in V(T)$ such that $v,w \in \beta(t)$, and
 \item for every $v \in V(G)$ the set $\{t \in V(T) \mid v \in \beta(t)\}$ is non-empty and connected in $T$.
\end{enumerate}
The \emph{width} of the decomposition $(T,\beta)$ is $\max_{t \in V(T)} |\beta(t)| - 1$.
Also, the \emph{adhesion width} of $(T,\beta)$ is $\max_{st \in E(T)} |\beta(t) \cap \beta(s)|$.

An \emph{isomorphism} from $G$ to a graph $H$ is a bijection $\varphi\colon V(G) \rightarrow V(H)$ that respects the edge relation, that is, for all~$v,w \in V(G)$, it holds that~$vw \in E(G)$ if and only if $\varphi(v)\varphi(w) \in E(H)$.
Two graphs $G$ and $H$ are \emph{isomorphic}, written $G \cong H$, if there is an isomorphism from~$G$ to~$H$.
We write $\varphi\colon G\cong H$ to denote that $\varphi$ is an isomorphism from $G$ to $H$.
Also, $\Iso(G,H)$ denotes the set of all isomorphisms from $G$ to $H$.
The automorphism group of $G$ is $\Aut(G) \coloneqq \Iso(G,G)$.
Observe that, if $\Iso(G,H) \neq \emptyset$, it holds that $\Iso(G,H) = \Aut(G)\varphi \coloneqq \{\gamma\varphi \mid \gamma \in \Aut(G)\}$ for every isomorphism $\varphi \in \Iso(G,H)$.

A \emph{vertex-colored graph} is a tuple $(G,\chi_V)$ where $G$ is a graph and $\chi_V\colon V(G) \rightarrow C$ is a mapping into some set $C$ of colors, called \emph{vertex-coloring}.
Similarly, an \emph{arc-colored graph} is a tuple $(G,\chi_E)$, where $G$ is a graph and $\chi_E\colon\{(u,v) \mid \{u,v\} \in E(G)\} \rightarrow C$ is a mapping into some color set $C$, called \emph{arc-coloring}.
Observe that colors are assigned to directed edges, i.e., the directed edge $(v,w)$ may obtain a different color than $(w,v)$.
We also consider vertex- and arc-colored graphs $(G,\chi_V,\chi_E)$ where $\chi_V$ is a vertex-coloring and $\chi_E$ is an arc-coloring.
Typically, $C$ is chosen to be an initial segment $[n]$ of the natural numbers.
To be more precise, we generally assume that there is a linear order on the set of all potential colors which, for example, allows us to identify a minimal color appearing in a graph in a unique way.
Isomorphisms between vertex- and arc-colored graphs have to respect the colors of the vertices and arcs.

\subsection{Weisfeiler-Leman Algorithm}

The Weisfeiler-Leman algorithm, originally introduced by Weisfeiler and Leman in its $2$-di\-men\-sio\-nal version \cite{WeisfeilerL68}, forms one of the most fundamental subroutines in the context of isomorphism testing.

Let~$\chi_1,\chi_2\colon V^k \rightarrow C$ be colorings of the~$k$-tuples of vertices of~$G$, where~$C$ is a finite set of colors. 
We say $\chi_1$ \emph{refines} $\chi_2$, denoted $\chi_1 \preceq \chi_2$, if $\chi_1(\bar v) = \chi_1(\bar w)$ implies $\chi_2(\bar v) = \chi_2(\bar w)$ for all $\bar v,\bar w \in V^{k}$.
The colorings $\chi_1$ and $\chi_2$ are \emph{equivalent}, denoted $\chi_1 \equiv \chi_2$,  if $\chi_1 \preceq \chi_2$ and $\chi_2 \preceq \chi_1$.

We describe the \emph{$k$-dimensional Weisfeiler-Leman algorithm} ($k$-WL) for all $k \geq 1$.
For an input graph $G$ let $\WLit{k}{0}{G}\colon (V(G))^{k} \rightarrow C$ be the coloring where each tuple is colored with the isomorphism type of its underlying ordered subgraph.
More precisely, $\WLit{k}{0}{G}(v_1,\dots,v_k) = \WLit{k}{0}{G}(v_1',\dots,v_k')$ if and only if, for all $i,j \in [k]$, it holds that
$v_i = v_j \Leftrightarrow v_i'= v_j'$ and $v_iv_j \in E(G) \Leftrightarrow v_i'v_j' \in E(G)$.
If the graph is equipped with a coloring the initial coloring $\WLit{k}{0}{G}$ also takes the input coloring into account.
More precisely, for a vertex-coloring $\chi_V$, it additionally holds that $\chi_V(v_i) = \chi_V(v_i')$ for all $i \in [k]$.
And for an arc-coloring $\chi_E$, it is the case that $\chi_E(v_i,v_j) = \chi_E(v_i',v_j')$ for all $i,j \in [k]$ such that $v_iv_j \in E(G)$.

We then recursively define the coloring $\WLit{k}{i}{G}$ obtained after $i$ rounds of the algorithm.
For $k \geq 2$ and $\bar v = (v_1,\dots,v_k) \in (V(G))^k$ let

\[\WLit{k}{i+1}{G}(\bar v) \coloneqq \Big(\WLit{k}{i}{G}(\bar v), \CM_i(\bar v)\Big)\]
where
\[\CM_i(\bar v) \coloneqq \Big\{\!\!\Big\{\big(\WLit{k}{i}{G}(\bar v[w/1]),\dots,\WLit{k}{i}{G}(\bar v[w/k])\big) \;\Big\vert\; w \in V(G) \Big\}\!\!\Big\}\]
and $\bar v[w/i] \coloneqq (v_1,\dots,v_{i-1},w,v_{i+1},\dots,v_k)$ is the tuple obtained from $\bar v$ by replacing the $i$-th entry by $w$ (and $\{\!\{\dots\}\!\}$ denotes a multiset).
For $k = 1$ the definition is similar, but we only iterate over neighbors of $v$, i.e., $\WLit{1}{i+1}{G}(v) \coloneqq \Big(\WLit{1}{i}{G}(v), \CM_i(v)\Big)$
where
\[\CM_i(v) \coloneqq \Big\{\!\!\Big\{\WLit{1}{i}{G}(w) \;\Big\vert\; w \in N_G(v) \Big\}\!\!\Big\}.\]
By definition, $\WLit{k}{i+1}{G} \preceq \WLit{k}{i}{G}$ for all $i \geq 0$.
Hence, there is a minimal~$i_\infty$ such that $\WLit{k}{i_{\infty}}{G} \equiv \WLit{k}{i_{\infty}+1}{G}$ and for this $i_\infty$ the coloring $\WL{k}{G} \coloneqq \WLit{k}{i_\infty}{G}$ is the \emph{$k$-stable} coloring of $G$.

The $k$-dimensional Weisfeiler-Leman algorithm takes as input a (vertex- or arc-)colored graph $G$ and returns (a coloring that is equivalent to) $\WL{k}{G}$.
This can be implemented in time $\CO(n^{k+1}\log n)$ (see \cite{ImmermanL90}).

\subsection{Group Theory}
\label{sec:group-theory}

We introduce the group-theoretic notions required in this work.
For a general background on group theory we refer to \cite{Rotman99}, whereas background on permutation groups can be found in \cite{DixonM96}.

\paragraph{Permutation Groups}

A \emph{permutation group} acting on a set $\Omega$ is a subgroup $\Gamma \leq \Sym(\Omega)$ of the symmetric group.
The size of the permutation domain $\Omega$ is called the \emph{degree} of $\Gamma$.
If $\Omega = [n]$, then we also write $S_n$ instead of $\Sym(\Omega)$.
For $\gamma \in \Gamma$ and $\alpha \in \Omega$ we denote by $\alpha^{\gamma}$ the image of $\alpha$ under the permutation $\gamma$.
For $A \subseteq \Gamma$, we denote by $\Gamma_{(A)} \coloneqq \{\gamma \in \Gamma \mid \forall \alpha \in A\colon \alpha^\gamma = \alpha\}$ is pointwise stabilizer of $A$ in $\Gamma$.
For $A \subseteq \Omega$ and $\gamma \in \Gamma$ let $A^{\gamma} \coloneqq \{\alpha^{\gamma} \mid \alpha \in A\}$.
The set $A$ is \emph{$\Gamma$-invariant} if $A^{\gamma} = A$ for all $\gamma \in \Gamma$.
For a partition $\CP$ of $\Omega$ let $\CP^\gamma \coloneqq \{A^\gamma \mid A \in \CP\}$.
Observe that $\CP^\gamma$ is again a partition of $\Omega$.
We say $\CP$ is \emph{$\Gamma$-invariant} if $\CP^{\gamma} = \CP$ for all $\gamma \in \Gamma$.

For $A \subseteq \Omega$ and a bijection $\theta\colon \Omega \rightarrow \Omega'$ we denote by $\theta[A]$ the restriction of $\theta$ to the domain $A$.
For a $\Gamma$-invariant set $A \subseteq \Omega$, we denote by $\Gamma[A] \coloneqq \{\gamma[A] \mid \gamma \in \Gamma\}$ the induced action of $\Gamma$ on $A$, i.e., the group obtained from $\Gamma$ by restricting all permutations to $A$.
More generally, for every set $\Lambda$ of bijections with domain $\Omega$, we denote by $\Lambda[A] \coloneqq \{\theta[A] \mid \theta \in \Lambda\}$.
Similarly, for a partition $\CP$ of $\Omega$, we denote by $\theta[\CP]\colon \CP \rightarrow \CP'$ the mapping defined via $\theta(A) \coloneqq \{\theta(\alpha) \mid \alpha \in A\}$ for all $A \in \CP$.
As before, $\Lambda[\CP] \coloneqq \{\theta[\CP] \mid \theta \in \Lambda\}$.

\paragraph{Algorithms for Permutation Groups}

Next, let us review some basic facts about algorithms for permutation groups.
More details can be found in \cite{Seress03}.

In order to perform computational tasks for permutation groups efficiently the groups are represented by generating sets of small size.
Indeed, most algorithms are based on so-called strong generating sets,
which can be chosen of size quadratic in the size of the permutation domain of the group and can be computed in polynomial time given an arbitrary generating set (see, e.g., \cite{Seress03}).

\begin{theorem}[cf.\ \cite{Seress03}] 
 \label{thm:permutation-group-library}
 Let $\Gamma \leq \Sym(\Omega)$ and let $S$ be a generating set for $\Gamma$.
 Then the following tasks can be performed in time polynomial in $|\Omega|$ and $|S|$:
 \begin{enumerate}
  \item compute the order of $\Gamma$,
  \item given $\gamma \in \Sym(\Omega)$, test whether $\gamma \in \Gamma$,
  \item compute the orbits of $\Gamma$, and
  \item given $A \subseteq \Omega$, compute a generating set for $\Gamma_{(A)}$.
 \end{enumerate}
\end{theorem}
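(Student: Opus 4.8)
The plan is to recall the Schreier--Sims framework, from which all four items fall out at once. I would fix an ordering $\Omega = \{\alpha_1,\dots,\alpha_m\}$ and work with the stabilizer chain $\Gamma = \Gamma^{(0)} \geq \Gamma^{(1)} \geq \dots \geq \Gamma^{(m)} = 1$, where $\Gamma^{(i)} \coloneqq \Gamma_{(\{\alpha_1,\dots,\alpha_i\})}$. The central object to compute is a \emph{strong generating set} (SGS) relative to this ordering, that is, a set $S^\ast \subseteq \Gamma$ with $\langle S^\ast \cap \Gamma^{(i)}\rangle = \Gamma^{(i)}$ for every $i$; alongside it one maintains, for each $i$, the fundamental orbit $\alpha_i^{\Gamma^{(i-1)}}$ and a Schreier tree encoding a transversal of $\Gamma^{(i)}$ in $\Gamma^{(i-1)}$. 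I claim the theorem reduces to producing this data in time polynomial in $|\Omega|$ and $|S|$ for an arbitrary prescribed ordering of $\Omega$.

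Granting this, the four items would follow by routine bookkeeping. For item~1 I would invoke Lagrange's theorem: $|\Gamma| = \prod_{i=1}^{m} |\Gamma^{(i-1)} : \Gamma^{(i)}| = \prod_{i=1}^{m} |\alpha_i^{\Gamma^{(i-1)}}|$, each factor being the size of a stored orbit. For item~2 I would \emph{sift} the given $\gamma$ down the chain: if $\alpha_1^\gamma$ is not in the orbit $\alpha_1^{\Gamma}$ then $\gamma \notin \Gamma$; otherwise I multiply $\gamma$ by the inverse of the transversal element carrying $\alpha_1$ to $\alpha_1^\gamma$, obtaining an element that fixes $\alpha_1$, recurse down the chain, and conclude $\gamma \in \Gamma$ precisely when the final residue is the identity. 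Item~3 needs only the elementary orbit algorithm — breadth-first closure of a point under the generators in $S$, in time $\CO(|\Omega|\cdot|S|)$ — and no SGS at all. For item~4 I would relabel $\Omega$ so that $A = \{\alpha_1,\dots,\alpha_{|A|}\}$ occupies the first positions, compute an SGS $S^\ast$ for this ordering, and return $S^\ast \cap \Gamma^{(|A|)}$, which generates $\Gamma^{(|A|)} = \Gamma_{(A)}$ by the defining property of an SGS.

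So the work is concentrated in computing an SGS with its orbit/transversal data, which is exactly the Schreier--Sims algorithm. Descending the chain, I would compute the orbit $\alpha_1^{\langle S\rangle}$ and a Schreier tree, so that by \emph{Schreier's lemma} the associated Schreier generators generate $\Gamma^{(1)}$; I would then recursively build an SGS for $\Gamma^{(1)}$ and, by repeated sifting, check that the putative stabilizer relations close up, promoting any failure witness to a new generator and repeating until everything is consistent.

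The hard part — and really the only subtle point — is the running time. Schreier's lemma yields up to $|\alpha_1^{\Gamma}|\cdot|S| \leq |\Omega|\cdot|S|$ generators for $\Gamma^{(1)}$, so iterating down a chain of length $m$ naively would inflate the generating set to size on the order of $|\Omega|^{m}$. The fix, due to Sims, is to keep the generating sets small at every level: after each step one applies a filter (the Sims filter, or Jerrum's filter), using the fact that every subgroup of $\Sym(\Omega)$ admits a strong generating set of size $\CO(|\Omega|^2)$ — more precisely, a point stabilizer inside a group on $t$ generators requires at most $t + \binom{|\Omega|}{2}$ further generators — to replace any generating set by an equivalent one of size $\CO(|\Omega|^2)$. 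With the generator count held at $\CO(|\Omega|^2)$ throughout, each of the at most $|\Omega|$ levels costs time polynomial in $|\Omega|$, while the one-time reduction of the input set $S$ to size $\CO(|\Omega|^2)$ costs time polynomial in $|\Omega|$ and $|S|$; together these give the claimed bound. The full analysis is in \cite{Seress03}.
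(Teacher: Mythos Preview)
The paper does not prove this theorem at all; it is stated as a known fact with the citation ``cf.\ \cite{Seress03}'' and no argument is given. Your sketch via the Schreier--Sims framework (stabilizer chain, strong generating set, sifting, Schreier's lemma with a filter to control generator blow-up) is precisely the standard proof one finds in that reference, and it is correct.
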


\paragraph{Groups with Restricted Composition Factors}

In this work, we shall be interested in a particular subclass of permutation groups, namely groups with restricted composition factors.
Let $\Gamma$ be a group.
A \emph{subnormal series} is a sequence of subgroups $\Gamma = \Gamma_0 \geq \Gamma_1 \geq \dots \geq \Gamma_k = \{\id\}$ such that $\Gamma_i$ is a normal subgroup of $\Gamma_{i-1}$ for all $i \in [k]$.
The length of the series is $k$ and the groups $\Gamma_{i-1} / \Gamma_{i}$ are the factor groups of the series, $i \in [k]$.
A \emph{composition series} is a strictly decreasing subnormal series of maximal length. 
For every finite group $\Gamma$ all composition series have the same family (considered as a multiset) of factor groups (cf.\ \cite{Rotman99}).
A \emph{composition factor} of a finite group $\Gamma$ is a factor group of a composition series of $\Gamma$.

\begin{definition}
 For $d \geq 2$ let $\mgamma_d$ denote the class of all groups $\Gamma$ for which every composition factor of $\Gamma$ is isomorphic to a subgroup of $S_d$.
 A group $\Gamma$ is a \emph{$\mgamma_d$-group} if it is contained in the class $\mgamma_d$.
\end{definition}

Let us point out the fact that there are two similar classes of groups usually referred by $\Gamma_d$ in the literature.
The first is the class denoted by $\mgamma_d$ here originally introduced by Luks \cite{Luks82}, while the second one, for example used in \cite{BabaiCP82}, in particular allows composition factors that are simple groups of Lie type of bounded dimension.

\paragraph{Group-Theoretic Tools for Isomorphism Testing}

In this work, the key group-theoretic subroutine is an isomorphism test for hypergraphs where the input group is a $\mgamma_d$-group.
Two hypergraphs $\CH_1 = (V_1,\CE_1)$ and $\CH_2 = (V_2,\CE_2)$ are isomorphic if there is a bijection $\varphi\colon V_1 \rightarrow V_2$ such that $E \in \CE_1$ if and only if $E^{\varphi} \in \CE_2$ for all $E \in 2^{V_1}$
(where $E^\varphi \coloneqq \{\varphi(v) \mid v \in E\}$ and $2^{V_1}$ denotes the power set of $V_1$).
We write $\varphi\colon \CH_1 \cong \CH_2$ to denote that $\varphi$ is an isomorphism from $\CH_1$ to $\CH_2$.
Consistent with previous notation, we denote by $\Iso(\CH_1,\CH_2)$ the set of isomorphisms from $\CH_1$ to $\CH_2$.
More generally, for $\Gamma \leq \Sym(V_1)$ and a bijection $\theta\colon V_1 \rightarrow V_2$, we define
\[\Iso_{\Gamma\theta}(\CH_1,\CH_2) \coloneqq \{\varphi \in \Gamma\theta \mid \varphi\colon \CH_1 \cong \CH_2\}.\]
In this work, we define the Hypergraph Isomorphism Problem to take as input two hypergraphs $\CH_1 = (V_1,\CE_1)$ and $\CH_2 = (V_2,\CE_2)$, a group $\Gamma \leq \Sym(V_1)$ and a bijection $\theta\colon V_1 \rightarrow V_2$, and the goal is to compute a representation of $\Iso_{\Gamma\theta}(\CH_1,\CH_2)$.
The following algorithm forms a crucial subroutine.

\begin{theorem}[Miller \cite{Miller83b}]
 \label{thm:hypergraph-isomorphism-gamma-d}
 Let $\CH_1 = (V_1,\CE_1)$ and $\CH_2 = (V_2,\CE_2)$ be two hypergraphs and let $\Gamma \leq \Sym(V_1)$ be a $\mgamma_d$-group and $\theta\colon V_1 \rightarrow V_2$ a bijection.
 Then $\Iso_{\Gamma\theta}(\CH_1,\CH_2)$ can be computed in time $(n+m)^{\CO(d)}$ where $n \coloneqq |V_1|$ and $m \coloneqq |\CE_1|$.
\end{theorem}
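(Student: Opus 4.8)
The plan is to prove the statement by Luks's group-theoretic recursion, lifted from strings to set systems. I would first recast it as a recursive subroutine that, given a coset $\Gamma\theta$ with $\Gamma$ a $\mgamma_d$-group together with two hypergraphs, returns $\Iso_{\Gamma\theta}(\CH_1,\CH_2)$ as a coset $\Lambda\sigma$ (or the empty set), represented by a generating set for $\Lambda$ and one element $\sigma$; this representation is closed under the two operations the recursion uses, namely taking a union over a transversal and intersecting with a further constraint. The recursion branches on the structure of the action of $\Gamma$ on $V_1$, and it rests on two standard group-theoretic facts: every subgroup and every quotient of a $\mgamma_d$-group is again a $\mgamma_d$-group, so the class is preserved throughout; and a primitive $\mgamma_d$-group of degree $r$ has order at most $r^{\CO(d)}$ (a classical fact, see e.g.\ \cite{Luks82}). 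A hypergraph isomorphism is moreover completely determined by the underlying vertex bijection, so there is no separate symmetry on the hyperedge side to search over; the entire difficulty is combinatorial bookkeeping.

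If $\Gamma$ is transitive on $V_1$, I would take a maximal non-trivial block system $\CB$, so that the induced action of $\Gamma$ on $\CB$ is primitive. When $\Gamma$ is already primitive we have $|\Gamma\theta| \le n^{\CO(d)}$, so we may simply enumerate all of $\Gamma\theta$ and test each candidate directly in time $(n+m)^{\CO(1)}$. Otherwise, let $N \le \Gamma$ be the kernel of the action on $\CB$; then $|\Gamma / N| \le |\CB|^{\CO(d)}$, and, using the permutation-group library of Theorem~\ref{thm:permutation-group-library}, I would compute a right transversal $\delta_1,\dots,\delta_\ell$ of $N$ in $\Gamma$ with $\ell \le |\CB|^{\CO(d)}$ and recurse on the $\ell$ subproblems $\Iso_{N\delta_i\theta}(\CH_1,\CH_2)$, returning their union. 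Since $N$ fixes every block setwise, its orbits properly refine $V_1$, so each recursive call lands in the intransitive case on a smaller group; the classical Luks accounting --- a branching factor of $|\CB|^{\CO(d)}$ at a node whose children live on orbits whose sizes sum to $n$ --- then bounds the number of leaves of the whole recursion by $n^{\CO(d)}$.

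If $\Gamma$ is intransitive, with orbits $\Delta_1,\dots,\Delta_r$ and $r \ge 2$, I would expose one orbit $\Delta := \Delta_1$ at a time in the Luks ``window'' fashion: write each hyperedge as the disjoint union of its trace on $\Delta$ and its trace on $V_1 \setminus \Delta$, group the hyperedges accordingly, restrict the current coset to those elements that are consistent with the exposed part, and then recurse on the rest. Concretely this amounts to a recursive call on the induced sub-hypergraphs on $V_1 \setminus \Delta$ (with the group $\Gamma[V_1 \setminus \Delta]$, whose degree is strictly smaller) whose output is pulled back into $\Gamma$, followed by a recursive call that aligns the trace-structure on $\Delta$ and is intersected with the previous result. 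I expect this orbit step to be the main obstacle: in Luks's original string setting it is trivial because a string is read off pointwise, whereas here a single hyperedge straddles the orbit boundary, so one must (i) perform the grouping-by-trace so that it is itself isomorphism-invariant, (ii) argue that the number of distinct traces, and hence the number of hyperedges carried into the sub-instances, never blows up along the recursion, and (iii) glue the $\Delta$-part and the $(V_1 \setminus \Delta)$-part back together in a way that reflects the linking inside $\Gamma$ rather than inside the full product $\Gamma[\Delta] \times \Gamma[V_1\setminus\Delta]$ --- this careful set-system bookkeeping on top of Luks's framework is exactly Miller's contribution. Since every leaf of the recursion is resolved in time $(n+m)^{\CO(1)}$, there are $n^{\CO(d)}$ of them, and there is a further $(n+m)^{\CO(1)}$ overhead at each internal node, the total running time is $(n+m)^{\CO(d)}$, as claimed.
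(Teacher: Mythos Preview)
The paper does not prove this theorem. It is stated as a known result, attributed to Miller \cite{Miller83b}, and used as a black-box subroutine (see the discussion at the end of Section~\ref{sec:group-theory}); there is no argument in the paper to compare your attempt against.

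That said, your sketch is a faithful outline of the standard Luks--Miller approach: the transitive case via a maximal block system and the $r^{\CO(d)}$ bound on primitive $\mgamma_d$-groups is exactly Luks's recursion, and you correctly identify that Miller's contribution is the bookkeeping in the intransitive case, where hyperedges straddle orbits and one must track their traces without blowing up the instance size. Your description of that step is honest about being incomplete --- you flag points (i)--(iii) as the obstacles rather than resolving them --- so what you have is a correct plan rather than a proof. If you wanted to complete it, the key missing ingredient is an explicit description of how the hyperedge set is encoded so that restricting to an orbit and to its complement yields two instances whose combined size is controlled (Miller does this by passing to a labelled bipartite incidence structure between vertices and hyperedges, which linearises the problem and makes the orbit-splitting step behave like the string case).
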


\begin{theorem}[Neuen \cite{Neuen22b}]
 \label{thm:hypergraph-isomorphism-gamma-d-fast}
 Let $\CH_1 = (V_1,\CE_1)$ and $\CH_2 = (V_2,\CE_2)$ be two hypergraphs and let $\Gamma \leq \Sym(V_1)$ be a $\mgamma_d$-group and $\theta\colon V_1 \rightarrow V_2$ a bijection.
 Then $\Iso_{\Gamma\theta}(\CH_1,\CH_2)$ can be computed in time $(n+m)^{\CO((\log d)^c)}$ for some constant $c$ where $n \coloneqq |V_1|$ and  $m \coloneqq |\CE_1|$.
\end{theorem}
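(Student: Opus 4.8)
We only sketch the argument. The plan is to recast the statement as a string isomorphism problem and then run a group-theoretic recursion in the style of Luks~\cite{Luks82}, but with the combinatorial partitioning machinery of Babai's quasipolynomial-time graph isomorphism algorithm~\cite{Babai16} built in, so that the exponent of $n+m$ depends on $d$ only through $\log d$ rather than on $d$ itself as in Miller's Theorem~\ref{thm:hypergraph-isomorphism-gamma-d}. First I would reduce the input $(\CH_1,\CH_2,\Gamma,\theta)$ to the task of computing $\Iso_\Gamma(x,y)$ for two strings $x,y$ over a domain of size polynomial in $n+m$: after the standard preprocessing that uses $\theta$ to identify $V_1$ with $V_2$, one is left with a single permutation group $\Gamma\leq\Sym(\Omega)$, and the hyperedge sets are carried along the recursion explicitly, exactly as in the proof of Theorem~\ref{thm:hypergraph-isomorphism-gamma-d}. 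Each recursive call returns a coset $\Delta\sigma$ with $\Delta\leq\Gamma$ (or the empty set), represented by a generating set.

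The recursion branches on the action of $\Gamma$ on its domain $\Omega$. If $\Gamma$ is intransitive, process its orbits one after another, in each step refining the current ambient coset by the isomorphisms compatible with the part of $\Omega$ processed so far; since a hyperedge may meet several orbits, this is genuinely iterative rather than a direct product, as in every string isomorphism algorithm. If $\Gamma$ is transitive but imprimitive, pass to a minimal $\Gamma$-invariant block system $\CB$; as $\mgamma_d$ is closed under taking induced actions on invariant partitions and under subgroups, $\Gamma[\CB]$ is again a $\mgamma_d$-group of strictly smaller degree, so one recurses on the block action and then, for each block-level partial solution, on the kernel acting inside the blocks. To keep the recursion tree from blowing up one aggregates the resulting (quasipolynomially many) branches using Babai's local-certificates-and-aggregation technique, so that the whole tree ends up with $(n+m)^{\CO((\log d)^c)}$ leaves rather than an exponential number.

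The heart of the argument is the primitive case, where $\Gamma\leq\Sym(\Omega)$ is primitive with $|\Omega|=n$. Here I would invoke the structure theory of primitive $\mgamma_d$-groups: combining the O'Nan--Scott theorem with Cameron's description of the large primitive groups and the bound of Babai--Cameron--P\'alfy~\cite{BabaiCP82}, such a $\Gamma$ is, up to bounded-index and bounded-quotient modifications, either of order $n^{\CO(\log d)}$, or a Cameron group whose socle is $A_m^k$ acting in product action on $\binom{[m]}{\ell}^k$ where $m,k\leq d$ (because $A_m$ and $A_k$ occur as composition factors and hence must embed into $S_d$). In the first subcase one simply enumerates $\Gamma$ from its generators in time $n^{\CO(\log d)}\cdot\mathrm{poly}(n+m)$ and tests the candidates. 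In the Cameron subcase one runs Babai's combinatorial machinery --- the Design Lemma, the Split-or-Johnson routine, and the Local Certificates procedure --- to produce, canonically, either a non-trivial coloring of $\Omega$ that breaks primitivity (so that we recurse with real progress), or an explicit Johnson-type structure through which the problem transfers to a much smaller combinatorial substructure whose governing parameter is at most $d$; because the relevant alternating parameters are bounded by $d$, every place where Babai's analysis would produce a factor $\log n$ here produces only a factor $\log d$, which is exactly what yields the claimed exponent.

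I expect the Cameron subcase, together with the global accounting, to be the main obstacle. One has to verify that every combinatorial split produced by the Design Lemma, Split-or-Johnson and Local Certificates either passes to a strictly smaller domain or makes multiplicative progress against a potential depending only on $d$ and on the orbit/block data, so that the recursion tree really has only $(n+m)^{\CO((\log d)^c)}$ leaves; and that the hyperedge count $m$ never grows under the group-theoretic reductions and is only partitioned (never duplicated) under the combinatorial ones, so that dragging the hyperedge set along costs merely an additive $\mathrm{poly}(m)$ per node. A secondary point is checking that all canonicity statements in Babai's algorithm remain valid after being relativized to the coset $\Gamma\theta$ and to the additional hyperedge structure; this is routine but essential, since the entire recursion depends on these canonical choices commuting with its steps.
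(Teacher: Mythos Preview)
This theorem is not proved in the paper at all: it is quoted verbatim from \cite{Neuen20} and used as a black-box subroutine (alongside Miller's Theorem~\ref{thm:hypergraph-isomorphism-gamma-d}). There is therefore no ``paper's own proof'' to compare your sketch against. Your outline is a plausible high-level summary of the approach taken in the cited reference --- reduce to string isomorphism, run a Luks-style recursion, and handle the primitive case via Babai's machinery with the key observation that the alternating parameters are bounded by $d$ so the exponent scales in $\log d$ --- but none of that belongs in this paper, which simply invokes the result.
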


Observe that both algorithms given by the two theorems tackle the same problem.
The second algorithm is asymptotically much faster, but it is also much more complicated and the constant factors in the exponent of the running time are likely to be much higher.
Since this paper only applies either theorem for $d=2$, it seems to be preferable to use the first algorithm.
Indeed, the first result is a simple extension of Luks's well-known isomorphism test for bounded-degree graphs \cite{Luks82}, and thus the underlying algorithm is fairly simple.
For all these reasons, we mostly build on Theorem \ref{thm:hypergraph-isomorphism-gamma-d}.
However, for future applications of the techniques presented in this work, it might be necessary to build on Theorem \ref{thm:hypergraph-isomorphism-gamma-d-fast} to benefit from the improved run time bound.
For this reason, we shall provide variants of our results building on Theorem \ref{thm:hypergraph-isomorphism-gamma-d-fast} wherever appropriate.

For the sake of completeness, I remark that there is another algorithm tackling the problem covered by both theorems running in time $n^{\CO(d)}m^{\CO(1)}$ \cite{SchweitzerW19}.
However, this variant is not beneficial to us since $m = \CO(n)$ for all applications considered in this paper.

\section{Allowing Weisfeiler and Leman to Split Small Color Classes}
\label{sec:t-k-wl}

In this section, we introduce the concept of \emph{$(t,k)$-WL-bounded graphs} and provide a polynomial-time isomorphism test for such graphs for all constant values of $t$ and $k$.
The final fpt isomorphism test for graphs excluding $K_{3,h}$ as a minor builds on this subroutine for $t = k = 2$.

The concept of $(t,k)$-WL-bounded graphs is a natural extension of $t$-CR-bounded graphs which were already introduced by Ponomarenko in the late 1980's \cite{Ponomarenko89} and which were recently rediscovered in \cite{Neuen22b,GroheNW23,Neuen22}.
Intuitively speaking, a graph $G$ is \emph{$t$-CR-bounded}, $t \in \NN$, if an initially uniform vertex-coloring $\chi$ (i.e., all vertices receive the same color) can be turned into the discrete coloring (i.e., each vertex has its own color) by repeatedly
\begin{itemize}
 \item performing the Color Refinement algorithm (expressed by the letters `CR'), and
 \item taking a color class $[v]_\chi \coloneqq \{w \in V(G) \mid \chi(w) = \chi(v)\}$ of size $|[v]_\chi| \leq t$ and assigning each vertex from the class its own color.
\end{itemize}

A very natural extension of this idea to replace the Color Refinement algorithm by the Weisfeiler-Leman algorithm for some fixed dimension $k$.
This leads us to the notion of \emph{$(t,k)$-WL-bounded graphs} (the letters `CR' are replaced by `$k$-WL').
In particular, $(t,1)$-WL-bounded graphs are exactly the $t$-CR-bounded graphs.
Maybe surprisingly, it seems that this simple extension has not been considered so far in the literature.

\begin{definition}
 \label{def:t-cr-bounded}
 A vertex- and arc-colored graph $G = (V,E,\chi_V,\chi_E)$ is \emph{$(t,k)$-WL-bounded}
 if the sequence $(\chi_i)_{i \geq 0}$ reaches a discrete coloring where $\chi_0 \coloneqq \chi_V$,
 \[\chi_{2i+1}(v) \coloneqq \WL{k}{V,E,\chi_{2i},\chi_E}(v,\dots,v)\]
 and
 \[\chi_{2i+2}(v) \coloneqq \begin{cases}
                     (v,1)              & \text{if } |[v]_{\chi_{2i+1}}| \leq t\\
                     (\chi_{2i+1}(v),0) & \text{otherwise}
                    \end{cases}\]
 for all $i \geq 0$.
 
 Also, for the minimal $i_\infty \geq 0$ such that $\chi_{i_\infty} \equiv \chi_{i_\infty+1}$, we refer to $\chi_{i_\infty}$ as the \emph{$(t,k)$-WL-stable} coloring of $G$ and denote it by $\tWL{t}{k}{G}$.
\end{definition}

At this point, the reader may wonder why $(\chi_i)_{i \geq 0}$ is chosen as a sequence of vertex-colorings and not a sequence of colorings of $k$-tuples of vertices (since $k$-WL also colors $k$-tuples of vertices).
While such a variant certainly makes sense, it still leads to the same class of graphs.
Let $G$ be a graph and let $\chi \coloneqq \WL{k}{G}$.
The main insight is that, if there is some color $c \in \im(\chi)$ for which $|\chi^{-1}(c)| \leq t$, then there is also a color $c' \in \im(\chi)$ for which $|\chi^{-1}(c')| \leq t$ and $\chi^{-1}(c') \subseteq \{(v,\dots,v) \mid v \in V(G)\}$.
In other words, one can not achieve any additional splitting of color classes by also considering non-diagonal color classes.

We also need to extend several notions related to $t$-CR-bounded graphs.
Let $G$ be a graph and let $X \subseteq V(G)$ be a set of vertices.
Let $\chi_X^*\colon V(G) \rightarrow C$ be the vertex-coloring obtained from individualizing all vertices in the set $X$, i.e., $\chi_X^*(v) \coloneqq (v,1)$ for $v \in X$ and $\chi_X^*(v) \coloneqq (0,0)$ for $v \in V(G) \setminus X$.
Let $\chi \coloneqq \tWL{t}{k}{G,\chi_X^*}$ denote the $(t,k)$-WL-stable coloring with respect to the input graph $(G,\chi_X^*)$.
We define the \emph{$(t,k)$-closure} of the set $X$ (with respect to $G$) to be the set
\[\cl_{t,k}^G(X) \coloneqq \left\{v \in V(G) \mid |[v]_{\chi}| = 1\right\}.\]
It is not difficult to verify that $\cl_{t,k}^G$ is a closure operator.
Indeed, $X \subseteq \cl_{t,k}^G(X)$ for all $X \subseteq V(G)$ by definition.
If $X \subseteq Y$ then $\chi_Y^* \preceq \chi_X^*$ which implies that $\tWL{t}{k}{G,\chi_Y^*} \preceq \tWL{t}{k}{G,\chi_X^*}$.
So $\cl_{t,k}^G(X) \subseteq \cl_{t,k}^G(Y)$.
Finally, $\cl_{t,k}^G(X) = \cl_{t,k}^G(\cl_{t,k}^G(X))$ since applying the $(t,k)$-WL procedure to a coloring $\chi_X^*$ for a second time does not lead to a finer coloring.

For $v_1,\dots,v_\ell \in V(G)$ we use $\cl_{t,k}^G(v_1,\dots,v_\ell)$ as a shorthand for $\cl_{t,k}^G(\{v_1,\dots,v_\ell\})$.
If the input graph is equipped with a vertex- or arc-coloring, all definitions are extended in the natural way. 

For the rest of this section, we concern ourselves with designing a polynomial-time isomorphism test for $(t,k)$-WL-bounded graphs.
Actually, we shall prove a slightly stronger result which turns out to be useful later on.
The main idea for the algorithm is to build a reduction to the isomorphism problem for $(t,1)$-WL-bounded graphs for which such results are already known \cite{Ponomarenko89,Neuen22b}.
Indeed, isomorphism of $(t,1)$-WL-bounded graphs can be reduced to the Hypergraph Isomorphism Problem for $\mgamma_t$-groups.
Since one may be interested in using different subroutines for solving the Hypergraph Isomorphism Problem for $\mgamma_t$-groups (see the discussion at the end of Section \ref{sec:group-theory}), most results are stated via an oracle for the Hypergraph Isomorphism Problem on $\mgamma_t$-groups.

\begin{theorem}[\mbox{\cite[Lemma 5.5]{Neuen22b}}]
 \label{thm:isomorphism-t-cr-bounded}
 Let $G_1$ and $G_2$ be two vertex- and arc-colored graphs that are $(t,1)$-WL-bounded.
 Then there is an algorithm computing $\Iso(G_1,G_2)$ in polynomial time using oracle access to the Hypergraph Isomorphism Problem for $\mgamma_t$-groups.
 Moreover, $\Aut(G_1) \in \mgamma_t$.
\end{theorem}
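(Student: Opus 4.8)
The plan is to reduce a single instance of the isomorphism problem for $(t,1)$-WL-bounded graphs to polynomially many instances of the Hypergraph Isomorphism Problem for $\mgamma_t$-groups and then invoke Theorem~\ref{thm:hypergraph-isomorphism-gamma-d}. By Definition~\ref{def:t-cr-bounded} with $k=1$, the defining refinement process alternates Color Refinement with individualising all colour classes of size at most $t$, and by assumption it reaches the discrete colouring. Let $\hat\chi_0 \succeq \hat\chi_1 \succeq \dots \succeq \hat\chi_{\ell^*}$ be the colourings of $G_1$ after the successive Color Refinement steps; then $\hat\chi_{\ell^*}$ is discrete and $\ell^* = \CO(n)$. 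The crucial combinatorial observation is that although the individualisation steps depend on an arbitrary labelling of the (size at most $t$) classes being split, the \emph{partition} of $V(G_1)$ into colour classes after each round is canonical, and two distinct colour classes are never merged in a later round.

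I would first prove the ``moreover''. For $\ell \in \{0,\dots,\ell^*\}$ let $D_\ell$ be the set of vertices lying in a singleton class of $\hat\chi_\ell$, and put $\Pi_\ell \coloneqq \Aut(G_1)_{(D_\ell)}$. Since $\hat\chi_0$ is obtained canonically from $(G_1,\chi_V,\chi_E)$, every automorphism fixes $D_0$ pointwise, so $\Pi_0 = \Aut(G_1)$; and $\Pi_{\ell^*} = \{\id\}$ because $\hat\chi_{\ell^*}$ is discrete. Fix $\ell \geq 1$ and let $S_1,\dots,S_r$ be the size at most $t$ classes of $\hat\chi_{\ell-1}$ individualised in round $\ell$. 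Any $\varphi \in \Pi_{\ell-1}$ preserves $\hat\chi_{\ell-1}$ (fixing $D_{\ell-1}$ pointwise leaves all earlier individualisations invariant, and Color Refinement is canonical), hence $\varphi$ fixes each $S_j$ setwise as the $S_j$ carry pairwise distinct colours. The restriction homomorphism $\Pi_{\ell-1} \to \prod_{j=1}^{r}\Sym(S_j)$ has kernel exactly $\Pi_\ell$: if $\varphi \in \Pi_{\ell-1}$ fixes $D_{\ell-1} \cup S_1 \cup \dots \cup S_r$ pointwise then it preserves $\hat\chi_\ell$, and therefore fixes $D_\ell$ pointwise. Thus $\Pi_\ell \trianglelefteq \Pi_{\ell-1}$ with $\Pi_{\ell-1}/\Pi_\ell$ embedding into $\prod_j\Sym(S_j)$, which is a $\mgamma_t$-group since $|S_j| \leq t$ for all $j$. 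As $\mgamma_t$ is closed under extensions, induction along the subnormal series $\{\id\} = \Pi_{\ell^*} \trianglelefteq \Pi_{\ell^*-1} \trianglelefteq \dots \trianglelefteq \Pi_0 = \Aut(G_1)$ shows $\Aut(G_1) \in \mgamma_t$; and every subgroup of a $\mgamma_t$-group is a $\mgamma_t$-group.

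For the algorithm I would compute $\Iso(G_1,G_2)$ (and, as the special case $G_2 = G_1$, the group $\Aut(G_1)$) by building up a coset $\Lambda_\ell = \Gamma_\ell\theta_\ell$ of bijections $V(G_1)\to V(G_2)$ with $\Gamma_\ell \in \mgamma_t$ along this tower, for $\ell$ running from $\ell^*$ down to $0$, so that $\Lambda_0 = \Iso(G_1,G_2)$. Running the process in lockstep on $G_2$ and comparing the (canonical) colour-class patterns, we may stop with $\Iso(G_1,G_2) = \emptyset$ as soon as the two patterns disagree; otherwise in round $\ell^*$ the colourings are discrete and $\Lambda_{\ell^*}$ is the unique colour-respecting bijection (or empty), checked directly against the edge and arc-colour relations. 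To pass from round $\ell$ to round $\ell-1$, the only freedom re-introduced by un-individualising $S_1,\dots,S_r$ is the choice of how the at most $t$ vertices of each $S_j$ in $G_1$ are matched with those in $G_2$; these matchings form a coset of $\prod_j\Sym(S_j) \in \mgamma_t$. Together with $\Lambda_\ell$ (and suitable coset representatives lifting each matching through the remaining rounds) this yields a coset $\Delta\theta$ with $\Delta \in \mgamma_t$ --- an extension of $\mgamma_t$-groups, exactly as in the previous paragraph --- that contains $\Iso(G_1,G_2)$; a single call to the oracle of Theorem~\ref{thm:hypergraph-isomorphism-gamma-d} on the hypergraph encodings of $(G_1,\chi_E)$ and $(G_2,\chi_E)$ with group $\Delta$ and bijection $\theta$ then sieves out the genuine isomorphisms and returns $\Lambda_{\ell-1}$. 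Each of the $\CO(n)$ rounds costs polynomial time plus one oracle call, and in the special case the oracle returns $\Aut(G_1)$ as a subgroup of a $\mgamma_t$-group. Substituting Theorem~\ref{thm:hypergraph-isomorphism-gamma-d-fast} for Theorem~\ref{thm:hypergraph-isomorphism-gamma-d} throughout yields the faster variant referred to in the text.

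The step I expect to require the most care is making the preceding coset bookkeeping precise despite the non-canonical individualisation --- in particular, choosing the lifts of the per-round matchings so that the resulting group $\Delta$ is genuinely an iterated extension of products of symmetric groups of degree at most $t$ (and hence a $\mgamma_t$-group), rather than accidentally containing a large alternating section. This is where the observation that distinct colour classes are never merged is essential: it guarantees that the per-round ambiguity is a \emph{product} $\prod_j\Sym(S_j)$ of small symmetric groups rather than a wreath product with a large top group. The remaining ingredients --- the standard encoding of an arc-coloured graph as a hypergraph without affecting the composition factors of the acting group, and the closure of $\mgamma_t$ under subgroups and extensions --- are routine.
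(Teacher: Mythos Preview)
The paper does not prove this statement at all; it is quoted verbatim from \cite[Lemma~20]{Neuen20} and used as a black box. So there is no ``paper's own proof'' to compare your sketch against, and I will simply comment on the sketch itself.

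Your argument for the ``moreover'' part is correct. The chain $\Pi_{\ell^*}\trianglelefteq\dots\trianglelefteq\Pi_0=\Aut(G_1)$ with each quotient embedding into a direct product of symmetric groups of degree at most $t$ is exactly the right picture, and the subtle point you flag --- that although $\hat\chi_\ell$ itself depends on the labelling, the induced \emph{partition} is automorphism-invariant --- is both necessary and sufficient here.

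The algorithmic half, however, is muddled. You want $\Lambda_0=\Iso(G_1,G_2)$, yet you initialise at $\Lambda_{\ell^*}$ with the unique colour-respecting bijection \emph{checked to be an isomorphism}. But the individualisation choices on $G_1$ and $G_2$ are made independently, so for isomorphic inputs this bijection can perfectly well fail to be an isomorphism, giving $\Lambda_{\ell^*}=\emptyset$. From an empty coset there is nothing to ``expand by the freedom of round $\ell$'', and your per-round oracle call cannot recover. More generally, your description conflates two different objects: an over-approximating coset $\Gamma\theta\supseteq\Iso(G_1,G_2)$ (which you need in order to invoke the oracle) and the filtered result $\Iso_{\Gamma\theta}(G_1,G_2)$ (which, once $\Gamma\theta$ is large enough, already equals $\Iso(G_1,G_2)$ and makes further rounds pointless).

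The fix is simple and in the spirit of what you wrote: drop the filtering. Take $\theta_{\ell^*}$ to be the unique colour-respecting bijection \emph{without} testing whether it is an isomorphism, set $\Gamma_{\ell^*}=\{\id\}$, and then, going from $\ell$ down to $\ell-1$, enlarge $\Gamma_\ell$ by the lifts of a generating set of $\prod_j\Sym(S_j)$ through the remaining rounds (this is the part you already identified as delicate, and it is indeed where the $\mgamma_t$-extension structure is established). At the end you hold a single coset $\Gamma_0\theta_0$ with $\Gamma_0\in\mgamma_t$ and $\Iso(G_1,G_2)\subseteq\Gamma_0\theta_0$; one oracle call then returns $\Iso(G_1,G_2)$.
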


\begin{theorem}
 \label{thm:bounding-group-t-k-wl}
 Let $G_1$ and $G_2$ be two vertex- and arc-colored graphs and let $\chi_i \coloneqq \tWL{t}{k}{G_i}$.
 Also let $\CP_i = \{[v]_{\chi_i} \mid v \in V(G_i)\}$ be the partition into color classes of $\chi_i$.
 Then $\CP_1^\varphi = \CP_2$ for all $\varphi \in \Iso(G_1,G_2)$.
 
 Moreover, using oracle access to the Hypergraph Isomorphism Problem for $\mgamma_t$-groups,
 in time $n^{\CO(k)}$ one can compute a $\mgamma_t$-group $\Gamma \leq \Sym(\CP_1)$ and a bijection $\theta \colon \CP_1 \rightarrow \CP_2$ such that
 \[\Iso(G_1,G_2)[\CP_1] \subseteq \Gamma\theta.\]
 In particular, $\Aut(G_1)[\CP_1] \in \mgamma_t$.
\end{theorem}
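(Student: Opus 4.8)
The plan is to prove the two assertions separately: the first (isomorphism-invariance of the partition) by a direct induction along the refinement sequence of Definition~\ref{def:t-cr-bounded}, and the second (the $\mgamma_t$-bound and its computation) by essentially re-running the argument behind Theorem~\ref{thm:isomorphism-t-cr-bounded} (that is, \cite[Lemma~20]{Neuen20}) with the $k$-dimensional Weisfeiler-Leman algorithm in place of Color Refinement. For the first part I would show by induction on $j\geq 0$ that every $\varphi\in\Iso(G_1,G_2)$ satisfies $(\CP_j^{(1)})^\varphi=\CP_j^{(2)}$, where $\CP_j^{(i)}$ is the partition of $V(G_i)$ into color classes of the coloring $\chi_j$ from Definition~\ref{def:t-cr-bounded} computed for $G_i$. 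The base case $j=0$ holds since $\varphi$ respects the input coloring. For the inductive step, if $\chi_{j+1}$ arises from $\chi_j$ by a $k$-WL round, then by the induction hypothesis there is a bijection between the color sets of $\chi_j^{(1)}$ and $\chi_j^{(2)}$ turning $\varphi$ into a color-preserving isomorphism; since renaming the input colors does not affect the output partition of $k$-WL, and $k$-WL commutes with color-preserving isomorphisms (a routine induction over the WL rounds), we obtain $(\CP_{j+1}^{(1)})^\varphi=\CP_{j+1}^{(2)}$. If instead $\chi_{j+1}$ arises by the splitting step, then $\varphi$ maps each $\chi_j^{(1)}$-class to a $\chi_j^{(2)}$-class of the same size, so exactly the corresponding classes have size at most $t$ and are replaced by singletons, and again $(\CP_{j+1}^{(1)})^\varphi=\CP_{j+1}^{(2)}$. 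The stable coloring is reached at the same index on both sides, so $j=i_\infty$ yields $\CP_1^\varphi=\CP_2$.

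For the second part, note first that, by the coset identity $\Iso(G_1,G_2)=\Aut(G_1)\varphi$ from the preliminaries, it suffices to show $\Aut(G_1)[\CP_1]\in\mgamma_t$ and, algorithmically, to additionally produce one element of $\Iso(G_1,G_2)[\CP_1]$ (or to decide that there is none). I would establish $\Aut(G_1)[\CP_1]\in\mgamma_t$ by induction along the splitting steps. Running $k$-WL on $G_1$ produces a canonical partition $\CP$, so $\Aut(G_1)$ fixes every class of $\CP$ setwise. Let $S\subseteq V(G_1)$ be the union of all classes of $\CP$ of size at most $t$. If $S$ contains no non-singleton class the process is already stable and $\Aut(G_1)[\CP_1]=\Aut(G_1)[\CP]$ is trivial by canonicity; this is the base case. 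Otherwise let $G_1'$ be $G_1$ with all vertices of $S$ individualized; one checks that $G_1'$ has the same $(t,k)$-WL-stable partition $\CP_1$ as $G_1$ and that $\Aut(G_1')$ equals the pointwise stabilizer $\Aut(G_1)_{(S)}$. Restriction to $S$ then gives a homomorphism $\Aut(G_1)[\CP_1]\to\Aut(G_1)[S]$ whose image is contained in $\prod_C\Sym(C)$, the product over the classes $C$ of $\CP$ of size at most $t$, and whose kernel equals $\Aut(G_1')[\CP_1]$. Since $\prod_C\Sym(C)$ is a $\mgamma_t$-group, since $\Aut(G_1')[\CP_1]\in\mgamma_t$ by the induction hypothesis (which applies because $G_1'$ has strictly more individualized vertices than $G_1$), and since $\mgamma_t$ is closed under subgroups, quotients, and extensions, it follows that $\Aut(G_1)[\CP_1]\in\mgamma_t$. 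The ``in particular'' statement follows by taking $G_1=G_2$.

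For the algorithm, implementing this recursion directly would require branching over all of $\prod_C\Sym(C)$ at each splitting step, which is too expensive. Instead, exactly as in the proof of Theorem~\ref{thm:isomorphism-t-cr-bounded}, I would encode the whole computation as a single instance of the Hypergraph Isomorphism Problem for a $\mgamma_t$-group: the hyperedges encode $G_1$, $G_2$ and the partitions computed along the refinement sequence, the group is assembled from the groups $\prod_C\Sym(C)$ arising at the individual splitting steps (a $\mgamma_t$-group by the closure properties just used), and a single oracle call returns a coset $\Gamma'\theta'$ from which $\Gamma$, $\theta$, and a coset representative — or the verdict that no isomorphism exists — can be read off. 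The refinement sequence has length $\CO(n)$ since each step strictly refines the previous coloring, and the dominating per-step cost is one run of $k$-WL in time $\CO(n^{k+1}\log n)$; all group-theoretic computations (via Theorem~\ref{thm:permutation-group-library}) and the construction of the hypergraph instance are polynomial, giving the claimed bound $n^{\CO(k)}$.

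I expect the main obstacle to lie in the second part. One must verify carefully that the reduction underlying Theorem~\ref{thm:isomorphism-t-cr-bounded} uses only that the refinement operator is isomorphism-invariant and polynomial-time computable and that splitting affects only classes of size at most $t$ — all of which remain true for $k$-WL, at the expense of an $n^{\CO(k)}$ factor — and, more delicately, that the $\mgamma_t$-group assembled from the per-step contributions is produced so that it simultaneously contains $\Aut(G_1)[\CP_1]$ and is certifiably an $\mgamma_t$-group. This is precisely where the $\mgamma_t$-hypergraph-isomorphism oracle is indispensable and where the coset bookkeeping must be carried out with care.
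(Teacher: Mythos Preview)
Your approach is correct in outline but takes a genuinely different route from the paper. The paper does \emph{not} re-run the argument behind Theorem~\ref{thm:isomorphism-t-cr-bounded} with $k$-WL in place of Color Refinement; instead it gives a black-box reduction to Theorem~\ref{thm:isomorphism-t-cr-bounded} itself. Concretely, the paper unrolls the $(t,k)$-WL procedure into a sequence of colorings $\chi_{j,r}^i$ of $k$-tuples (one for each individual WL-refinement round and each splitting step), and then builds auxiliary vertex- and arc-colored graphs $H_1,H_2$ whose vertices are the \emph{colors} occurring in these colorings, layered by $(j,r)$, together with gadget vertices encoding a single WL round between consecutive layers. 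By construction each layer of $H_i$ becomes discrete under Color Refinement once the previous layer is discrete, and splitting steps introduce branching of degree at most $t$; hence $H_i$ is $(t,1)$-WL-bounded and Theorem~\ref{thm:isomorphism-t-cr-bounded} applies directly. The final layer of $H_i$ is in natural bijection with $\CP_i$, so restricting $\Iso(H_1,H_2)$ to that layer yields the desired $\Gamma\theta$. A counting argument shows $|V(H_i)|=n^{\CO(k)}$, which gives the running time.

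What each approach buys: the paper's reduction is entirely modular---all the delicate coset bookkeeping and the construction of the $\mgamma_t$-group are delegated wholesale to Theorem~\ref{thm:isomorphism-t-cr-bounded}, and the size bound on $H_i$ transparently yields the $n^{\CO(k)}$ running time. Your approach is more direct and avoids the auxiliary graph, but, as you correctly anticipate, it obliges you to open up and reprove \cite[Lemma~20]{Neuen20} while verifying that nothing in that argument depends on the refinement operator being $1$-WL specifically; in particular you must make precise how the per-step groups $\prod_C\Sym(C)$ are ``assembled'' into a single $\mgamma_t$-group acting on $\CP_1$ (your inductive argument for $\Aut(G_1)[\CP_1]\in\mgamma_t$ is clean, but the algorithmic over-approximation is left at the level of a sketch). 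Both routes arrive at the same conclusion with the same asymptotic cost; the paper's is shorter to write down and easier to check, while yours gives a self-contained structural proof of the ``in particular'' clause without invoking Theorem~\ref{thm:isomorphism-t-cr-bounded}.
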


\begin{proof}
 Let $G_1,G_2$ be two vertex- and arc-colored graphs.
 Clearly, $\CP_1^\varphi = \CP_2$ for every $\varphi \in \Iso(G_1,G_2)$ since the partition $\CP_i$ is defined in an isomorphism-invariant manner.
 Since the proof of the second part is rather technical, let us first provide a brief intuitive overview.
 The basic idea is to construct $t$-CR-bounded graphs $H_1,H_2$ that model the computation of the colorings $\chi_1$ and $\chi_2$.
 In particular, $\CP_i$ can be viewed as a subset of the vertex set of $H_i$, and restricting all isomorphisms between $H_1$ and $H_2$ to the vertex subsets associated with $\CP_1$ and $\CP_2$ gives the desired output $\Gamma\theta$.

 To build the graph $H_i$ ($i \in \{1,2\}$), we follow the iterative process of computing the coloring $\chi_i$.
 Since $k$-WL computes a coloring on $k$-tuples, we first extend the coloring to color all $k$-tuples of vertices.
 Additionally, instead of applying $k$-WL in a single step, we unfold its definition and only apply a single round of $k$-WL at a time.
 So overall, we obtain a sequence of colorings where in a single step we either split (diagonal) color classes of size at most $t$, or we perform a single round of $k$-WL.

 For each coloring in this sequence, we add a \emph{layer} to $H_i$ that models the corresponding refinement step (see Figures \ref{fig:t-k-wl-reduction-split} and \ref{fig:t-k-wl-reduction-refine}).
 This is done in such a way that all refinement steps are also performed within $t$-CR which eventually means that $H_1,H_2$ are $t$-CR-bounded and we can use Theorem \ref{thm:isomorphism-t-cr-bounded} to obtain the desired output $\Gamma\theta$.

 Let us now turn to the formal proof.
 The algorithm first inductively computes the following sequence of colorings $\chi_{j,r}^i \colon (V(G_i))^k \rightarrow C$ for all $j \in [j_\infty]$ and $r \in [r_\infty(j)]$ (where $C$ is a suitable set of colors).
 We set
 \[\chi_{1,1}^i \coloneqq \WL{k}{G_i}\]
 and set $r_\infty(1) \coloneqq 1$.
 Now, suppose that $j > 1$.
 If $j$ is even, then we define
 \[\chi_{j,1}^i(v_1,\dots,v_k) \coloneqq \begin{cases}
                                          (v_1,1)              & \text{if } v_1 = v_2 = \dots = v_k \text{ and }\\
                                                               &|C(v;\chi_{j-1,r_\infty(j-1)}^i)| \leq t\\
                                          (\chi_{j-1,r_\infty(j-1)}^i(v_1,\dots,v_k),0) & \text{otherwise}
                                         \end{cases}\]
 where
 \[C(v;\chi_{j-1,r_\infty(j-1)}^i) \coloneqq \{w \in V(G_i) \mid \chi_{j-1,r_\infty(j-1)}^i(w,\dots,w) = \chi_{j-1,r_\infty(j-1)}^i(v,\dots,v)\}.\]
 Also, we set $r_\infty(j) \coloneqq 1$.
 If $\chi_{j,1}^i \equiv \chi_{j-1,r_\infty(j-1)}^i$ the algorithm terminates with $j_\infty \coloneqq j-1$.
 
 Otherwise $j$ is odd. For notational convenience, we set $\chi_{j,0}^i \coloneqq \chi_{j-1,r_\infty(j-1)}^i$.
 For $r \geq 1$ and $\bar v = (v_1,\dots,v_k) \in (V(G_i))^k$ let
 \[\chi_{j,r+1}^i(\bar v) \coloneqq \Big(\chi_{j,r}^i(\bar v), \CM_r(\bar v)\Big)\]
 where
 \[\CM_r(\bar v) \coloneqq \Big\{\!\!\Big\{\big(\chi_{j,r}^i(\bar v[w/1]),\dots,\chi_{j,r}^i(\bar v[w/k])\big) \;\Big\vert\; w \in V(G_i) \Big\}\!\!\Big\}\]
 and $\bar v[w/i] \coloneqq (v_1,\dots,v_{i-1},w,v_{i+1},\dots,v_k)$ is the tuple obtained from replacing the $i$-th entry by $w$.
 We set $r_\infty(j) \coloneqq r$ for the minimal $r \geq 0$ such that $\chi_{j,r}^i \equiv \chi_{j,r+1}^i$.
 If $r = 0$ then the algorithm terminates and $j_\infty \coloneqq j-1$.
 This completes the description of the colorings.
 
 The definition of the sequence of colorings clearly follows the definition of the $(t,k)$-WL-stable coloring where the single refinement steps of the WL-algorithm are given explicitly.
 Hence,
 \begin{equation}
  \tWL{t}{k}{G_i} \equiv \chi_{j_\infty,r_\infty(j_\infty)}^i.
 \end{equation}
 
 Now, the idea is to define isomorphism-invariant vertex- and arc-colored graphs $(H_i,\lambda_V^i,\lambda_E^i)$ such that $\im(\chi_{j,r}^i)$ corresponds to a subset of $V(H_i)$ for all $j \in [j_\infty]$ and $r \in [r_\infty(j)]$, and $(H_i,\lambda_V^i,\lambda_E^i)$ is $t$-CR-bounded.
 Towards this end, we partition the vertex set of $H_i$ into \emph{layers} each of which corresponds to one of the colorings.
 To be more precise, we define
 \[V(H_i) \coloneqq \bigcup_{j \in [j_\infty]} \bigcup_{r \in [r_\infty(j)]} V_{j,r}^i\]
 for some suitable sets $V_{j,r}^i$ to be described below.
 Let $j \in [j_\infty]$ and $r \in [r_\infty(j)]$.
 We define
 \[W_{j,r}^i \coloneqq \{(i,j,r,c) \mid c \in \im(\chi_{j,r}^i)\} \subseteq V_{j,r}^i\]
 and set
 \[\lambda_V^i(i,j,r,c) \coloneqq \begin{cases}
                                   (j,r,{\sf col})   &\text{if } (j,r) \neq (1,1)\\
                                   (j,r,{\sf col},c) &\text{otherwise}
                                  \end{cases}.\]
 Here, ${\sf col}$ is a special value indicating that $(i,j,r,c)$ is a vertex that corresponds to a color of $\chi_{j,r}^i$.
 (In connecting the layers, we will need to add further auxiliary vertices to build certain gadgets. These vertices will be colored using a special value ${\sf aux}$ in the third component.)
 
 We now connect the layers in such a way that $(H_i,\lambda_V^i,\lambda_E^i)$ is $t$-CR-bounded.
 Towards this end, we build on the inductive definition of the colorings.
 Let
 \[X_i \coloneqq \{v \in V(H_i) \mid |[v]_{\tWL{t}{1}{H_i}}| = 1\}\]
 denote the set of vertices appearing in a singleton color class of $\tWL{t}{1}{H_i}$.
 Throughout the construction, we maintain the property that $V_{j,r}^i \subseteq X_i$ for all layers covered so far.
 
 For the first layer $j=1$, $r=1$ there is nothing more to be done.
 By the definition of $\lambda_V^i$ all vertices $V_{1,1}^i \coloneqq W_{1,1}^i$ already form singleton color classes.
 
 \begin{figure}
  \centering
  \begin{tikzpicture}
   \node at (-3,2) {$W_{j-1,r_\infty(j-1)}^i$};
   \node at (-3,0) {$W_{j,1}^i$};

   \node at (-1,2) {$\dots$};
   \node[vertex, fill = mBlue, label = {[align=left]above:{\scriptsize $(v_1,v_2)$}\\{\scriptsize $(v_2,v_1)$}}] (b1) at (0,2) {};
   \node[vertex, fill = mBlue, label = {[align=left]above:{\scriptsize $(v_1,v_1)$}\\{\scriptsize $(v_2,v_2)$}}] (b2) at (2.25,2) {};
   \node[vertex, fill = mBlue, label = {[align=left]above:{\scriptsize $(v_3,v_3)$}\\{\scriptsize $(v_4,v_4)$}\\{\scriptsize $(v_5,v_5)$}}] (b3) at (4.5,2) {};
   \node[vertex, fill = mBlue, label = {[align=left]above:{\scriptsize $(v_3,v_4)$}\\{\scriptsize $(v_4,v_5)$}\\{\scriptsize $(v_5,v_3)$}}] (b4) at (6,2) {};
   \node[vertex, fill = mBlue, label = {[align=left]above:{\scriptsize $(v_4,v_3)$}\\{\scriptsize $(v_5,v_4)$}\\{\scriptsize $(v_3,v_5)$}}] (b5) at (7.5,2) {};
   \node at (8.5,2) {$\dots$};

   \node at (-1,0) {$\dots$};
   \node[vertex, fill = mRed, label = {[align=left]below:{\scriptsize $(v_1,v_2)$}\\{\scriptsize $(v_2,v_1)$}}] (r1) at (0,0) {};
   \node[vertex, fill = mRed, label = {[align=left]below:{\scriptsize $(v_1,v_1)$}}] (r21) at (1.5,0) {};
   \node[vertex, fill = mRed, label = {[align=left]below:{\scriptsize $(v_2,v_2)$}}] (r22) at (3,0) {};
   \node[vertex, fill = mRed, label = {[align=left]below:{\scriptsize $(v_3,v_3)$}\\{\scriptsize $(v_4,v_4)$}\\{\scriptsize $(v_5,v_5)$}}] (r3) at (4.5,0) {};
   \node[vertex, fill = mRed, label = {[align=left]below:{\scriptsize $(v_3,v_4)$}\\{\scriptsize $(v_4,v_5)$}\\{\scriptsize $(v_5,v_3)$}}] (r4) at (6,0) {};
   \node[vertex, fill = mRed, label = {[align=left]below:{\scriptsize $(v_4,v_3)$}\\{\scriptsize $(v_5,v_4)$}\\{\scriptsize $(v_3,v_5)$}}] (r5) at (7.5,0) {};
   \node at (8.5,0) {$\dots$};

   \foreach \v/\w in {b1/r1,b2/r21,b2/r22,b3/r3,b4/r4,b5/r5}{
    \draw[thick] (\v) edge (\w);
   }
  \end{tikzpicture}
  \caption{Visualization of the construction of layer $(j,1)$ ($j$ even) of the graph $H_i$ in the proof of Theorem \ref{thm:bounding-group-t-k-wl}.
  Every vertex in $W_{j,1}^i$ is associated with a color $c$ in the image of $\chi_{j,1}^i$ and the corresponding color class $(\chi_{j,1}^i)^{-1}(c)$ shown next to the vertex.
  All color classes of size at most $t$ (the figure shows $t = 2$) containing only diagonal entries are split into singletons.
  }
  \label{fig:t-k-wl-reduction-split}
 \end{figure}

 So suppose $j > 1$.
 First suppose $j$ is even (see Figure \ref{fig:t-k-wl-reduction-split}).
 In this case $r_\infty(j) = 1$.
 We set $V_{j,1}^i \coloneqq W_{j,1}^i$
 For each $c \in \im(\chi_{j,1}^i)$ we add an edge $\{(i,j,1,c),(i,j-1,r_\infty(j-1),c')\}$ where $c' \in \im(\chi_{j-1,r_\infty(j-1)}^i)$ is the unique color for which
 \[\big(\chi_{j,1}^i\big)^{-1}(c) \subseteq \big(\chi_{j-1,r_\infty(j-1)}^i\big)^{-1}(c').\]
 Also, we set
 \[\lambda_E^i((i,j,1,c)(i,j-1,r_\infty(j-1),c')) \coloneqq 0.\]
 By definition of the coloring $\chi_{j,1}^i$, every vertex $v \in V_{j-1,r_\infty(j-1)}^i$ has at most $t$ many neighbors in $V_{j,1}^i$.
 Since $V_{j-1,r_\infty(j-1)}^i \subseteq X_i$ by induction, this implies that $V_{j,1}^i \subseteq X_i$.
 
 Next, suppose $j$ is odd (see Figure \ref{fig:t-k-wl-reduction-refine}).
 For notational convenience, we set $V_{j,0}^i \coloneqq V_{j-1,r_\infty(j-1)}^i$ as well as $(i,j,0,c) \coloneqq (i,j-1,r_\infty(j-1),c)$ for all vertices $(i,j-1,r_\infty(j-1),c) \in W_{j-1,r_\infty(j-1)}^i$.
 Fix some $r \in [r_\infty(j)]$.
 Suppose $\im(\chi_{j,r}^i) = \{c_1,\dots,c_\ell\}$.
 For every $p \in [\ell]$ we have that $c_p = (c_p',\CM_p)$ where $c_p' \in \im(\chi_{j,r-1}^i)$, and $\CM_p$ is a multiset over elements from $\big(\im(\chi_{j,r-1}^i)\big)^k$.
 Let $\CM \coloneqq \bigcup_{p \in [\ell]} \CM_p$.
 For each $\bar c = (\bar c_1,\dots,\bar c_k) \in \CM$ we introduce a vertex $(i,j,r,\bar c)$.
 We set $\lambda_V^i(i,j,r,\bar c) \coloneqq (j,r,{\sf aux})$ and connect $(i,j,r,\bar c)$ to all vertices $(i,j,r-1,\bar c_q)$ for all $q \in [k]$.
 We set
 \[\lambda_E^i((i,j,r,\bar c)(i,j,r-1,\bar c_q)) \coloneqq \{q' \in [k] \mid \bar c_q = \bar c_{q'}\}\]
 for all $q \in [k]$.
 Next, we connect $(i,j,r,c_p)$ to $(i,j,r-1,c_p')$ as well as to $(i,j,r,\bar c)$ for all $\bar c \in \CM_p$ and $p \in [\ell]$.
 We set
 \[\lambda_E^i((i,j,r,c_p)(i,j,r-1,c_p')) \coloneqq 0\]
 and $\lambda_E^i((i,j,r,c_p)(i,j,r,\bar c))$ to the multiplicity of $\bar c$ in the multiset $\CM_p$.
 
 \begin{figure}
  \centering
  \begin{tikzpicture}
   \node at (-3,4) {$W_{j,r-1}^i$};
   \node at (-3,0) {$W_{j,r}^i$};

   \node at (-1,4) {$\dots$};
   \node at (-1,2) {$\dots$};
   \node at (-1,0) {$\dots$};

   \node at (7,4) {$\dots$};
   \node at (7.5  ,2) {$\dots$};
   \node at (7,0) {$\dots$};

   \node[vertex, fill = mRed, label = {[align=left]above:{\scriptsize $(v_1,v_2)$}\\{\scriptsize $(v_2,v_1)$}}] (r1) at (0,4) {};
   \node[vertex, fill = mRed, label = {[align=left]above:{\scriptsize $(v_1,v_1)$}}] (r2) at (4,4) {};
   \node[vertex, fill = mRed, label = {[align=left]above:{\scriptsize $(v_2,v_2)$}}] (r3) at (6,4) {};

   \node[vertex, fill = mYellow] (y0) at (0.5,2) {};
   \node[vertex, fill = mYellow] (y1) at (1.5,2) {};
   \node[vertex, fill = mYellow] (y2) at (2.5,2) {};
   \node[vertex, fill = mYellow] (y3) at (3.5,2) {};
   \node[vertex, fill = mYellow] (y4) at (4.5,2) {};
   \node[vertex, fill = mYellow] (y5) at (5.5,2) {};
   \node[vertex, fill = mYellow] (y6) at (6.5,2) {};

   \node[vertex, fill = mTurquoise, label = {[align=left]below:{\scriptsize $(v_1,v_2)$}}] (t11) at (0,0) {};
   \node[vertex, fill = mTurquoise, label = {[align=left]below:{\scriptsize $(v_2,v_1)$}}] (t12) at (2,0) {};
   \node[vertex, fill = mTurquoise, label = {[align=left]below:{\scriptsize $(v_1,v_1)$}}] (t2) at (4,0) {};
   \node[vertex, fill = mTurquoise, label = {[align=left]below:{\scriptsize $(v_2,v_2)$}}] (t3) at (6,0) {};

   \foreach \v/\w in {r1/y0,r1/y1,r1/y2,r1/y3,r1/y4,r2/y1,r2/y2,r2/y5,r3/y3,r3/y4,r3/y6,
                      r1/t11,r1/t12,r2/t2,r3/t3,
                      y1/t11,y4/t11,y2/t12,y3/t12,y0/t2,y5/t2,y0/t3,y6/t3}{
    \draw[thick] (\v) edge (\w);
   }

   \begin{pgfonlayer}{background}
    \foreach \v/\w in {r1/y1,r1/y3,r2/y2,r3/y4,
                       y1/t11,y4/t11,y2/t12,y3/t12,y0/t2,y5/t2,y0/t3,y6/t3}{
     \draw[line width = 4.0pt, mOrange!80] (\v.center) edge (\w.center);
    }

    \foreach \v/\w in {r1/y2,r1/y4,r2/y1,r3/y3}{
     \draw[line width = 4.0pt, mViolet!80] (\v.center) edge (\w.center);
    }
    \draw[line width = 4.0pt, mOrange!80, dash pattern = on 3pt off 3pt, dash phase = 3pt] (r1.center) edge (y0.center);
    \draw[line width = 4.0pt, mViolet!80, dash pattern = on 3pt off 3pt] (r1.center) edge (y0.center);
    \draw[line width = 4.0pt, mOrange!80, dash pattern = on 3pt off 3pt, dash phase = 3pt] (r2.center) edge (y5.center);
    \draw[line width = 4.0pt, mViolet!80, dash pattern = on 3pt off 3pt] (r2.center) edge (y5.center);
    \draw[line width = 4.0pt, mOrange!80, dash pattern = on 3pt off 3pt, dash phase = 3pt] (r3.center) edge (y6.center);
    \draw[line width = 4.0pt, mViolet!80, dash pattern = on 3pt off 3pt] (r3.center) edge (y6.center);
   \end{pgfonlayer}
  \end{tikzpicture}
  \caption{Visualization of the construction of layer $(j,r)$ ($j$ odd) of the graph $H_i$ in the proof of Theorem \ref{thm:bounding-group-t-k-wl}.
  Every vertex in $W_{j,r}^i$ is associated with a color $c$ in the image of $\chi_{j,r}^i$ and the corresponding color class $(\chi_{j,r}^i)^{-1}(c)$ shown next to the vertex.
  The edge colors $1$ and $2$ are depicted in orange and violet, respectively.
  For example, the left-most yellow vertex represents the color pair $(\chi_{j,r-1}^i(v_1,v_2), \chi_{j,r-1}^i(v_2,v_1)) = (\chi_{j,r-1}^i(v_2,v_1), \chi_{j,r-1}^i(v_1,v_2))$ and thus, the incident edge to the vertex representing $\chi_{j,r-1}^i(v_1,v_2) = \chi_{j,r-1}^i(v_2,v_1)$ (the left-most red vertex) has color $\{1,2\}$.
  Also, $\CM_{r-1}(v_1,v_2) = \{\!\{(\chi_{j,r-1}^i(v_1,v_2), \chi_{j,r-1}^i(v_1,v_1)),(\chi_{j,-1}^i(v_2,v_2), \chi_{j,r-1}^i(v_1,v_2)),\dots\}\!\}$ and the two color-pairs in the multiset are represented by the second and fifth yellow vertex from the left.}
  \label{fig:t-k-wl-reduction-refine}
 \end{figure}

 It is easy to see that $V_{j,r}^i \subseteq X_i$ using that $V_{j,r-1}^i \subseteq X_i$ by induction.
 Indeed, once all vertices from $V_{j,r-1}^i$ are individualized, it suffices to apply the Color Refinement algorithm for every vertex in $V_{j,r}^i$ to be assigned a distinct color.
 
 This completes the description of $H_i$.
 In total, $X_i = V(H_i)$ which means that $H_i$ is $(t,1)$-WL-bounded.
 Using Theorem \ref{thm:isomorphism-t-cr-bounded}, it is possible to compute $\Iso(H_1,H_2)$ in time polynomial in the size of $H_1$ and $H_2$ using oracle access to the Hypergraph Isomorphism Problem for $\mgamma_t$-groups.

 So let us analyze the size of the graph $H_i$.
 Since each coloring $\chi_{j,r}^i$ refines the previous coloring, we conclude that there are at most $n^k$ many layers.
 Also, $|W_{j,r}^i| = |\im(\chi_{j,r}^i)| \leq n^k$.
 So it remains to bound the number of auxiliary vertices in a given layer.
 Towards this end, observe that $|\CM| \leq p \cdot n \leq n^{k+1}$.
 Hence, $|V(H_i)| \leq n^k(n^{k+1} + n^k) = n^{\CO(k)}$.
 
 This means it is possible to compute $\Iso(H_1,H_2)$ in time $n^{\CO(k)}$ using oracle access to the Hypergraph Isomorphism Problem for $\mgamma_t$-groups.
 Now, the vertices with color $(j_\infty,r_\infty(j_\infty),{\sf col})$ exactly correspond to the color classes of $\chi_i = \tWL{t}{k}{G_i}$.
 We set
 \[\Gamma\theta \coloneqq \Iso(H_1,H_2)[W_{j_\infty,r_\infty(j_\infty)}^1].\]
 By renaming the elements of the domain in the natural way, we get $\Gamma \leq \Sym(\CP_1)$ and $\theta \colon \CP_1 \rightarrow \CP_2$.
 Since $H_i$ is defined in an isomorphism-invariant manner, it follows that $\Iso(G_1,G_2)[\CP_1] \subseteq \Gamma\theta$.
 Finally, $\Gamma \in \mgamma_t$ by Theorem \ref{thm:isomorphism-t-cr-bounded}.
\end{proof}

\begin{corollary}
 Let $G_1$ and $G_2$ be two $(t,k)$-WL-bounded graphs.
 Then a representation for $\Iso(G_1,G_2)$ can be computed in time $n^{\CO (k \cdot (\log t)^c)}$ for some constant $c$.
\end{corollary}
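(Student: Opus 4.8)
The plan is to read this off from Theorem~\ref{thm:bounding-group-t-k-wl} together with the fast hypergraph isomorphism test of Theorem~\ref{thm:hypergraph-isomorphism-gamma-d-fast}. Since $G_1$ and $G_2$ are $(t,k)$-WL-bounded, the $(t,k)$-WL-stable colorings $\chi_i \coloneqq \tWL{t}{k}{G_i}$ are discrete, so the partitions $\CP_i$ of $V(G_i)$ into color classes consist of singletons and are canonically identified with $V(G_i)$. First I would invoke Theorem~\ref{thm:bounding-group-t-k-wl}, implementing its oracle for the Hypergraph Isomorphism Problem for $\mgamma_t$-groups by Theorem~\ref{thm:hypergraph-isomorphism-gamma-d-fast} with $d = t$. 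This yields a $\mgamma_t$-group $\Gamma \leq \Sym(V(G_1))$ and a bijection $\theta\colon V(G_1) \to V(G_2)$ with $\Iso(G_1,G_2) \subseteq \Gamma\theta$; if $\Gamma\theta = \emptyset$ we output $\emptyset$ and stop.

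The coset $\Gamma\theta$ only over-approximates $\Iso(G_1,G_2)$, so the second step is to filter it down to the actual colored isomorphisms. For this I would encode each vertex- and arc-colored graph $G_i$ as a hypergraph $\CH_i$ on vertex set $V(G_i)$ by a standard polynomial-size gadget construction such that a bijection $\varphi\colon V(G_1)\to V(G_2)$ is a colored-graph isomorphism from $G_1$ to $G_2$ if and only if $\varphi\colon\CH_1\cong\CH_2$ (edges become two-element hyperedges, and vertex colors, arc directions and arc colors are simulated by auxiliary hyperedges; this is exactly the kind of reduction underlying Theorem~\ref{thm:isomorphism-t-cr-bounded}). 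Then $\Iso(G_1,G_2) = \Iso_{\Gamma\theta}(\CH_1,\CH_2)$, and since $\Gamma \in \mgamma_t$, a representation of this set is computed in time $(n+m)^{\CO((\log t)^c)} = n^{\CO((\log t)^c)}$ (with $m = |\CE_1| = \CO(n^2)$) by Theorem~\ref{thm:hypergraph-isomorphism-gamma-d-fast}; this is the output.

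For the running time, the non-oracle part of Theorem~\ref{thm:bounding-group-t-k-wl} runs in time $n^{\CO(k)}$ and each of its polynomially many oracle calls is on a hypergraph of size $n^{\CO(k)}$, so each call costs $\big(n^{\CO(k)}\big)^{\CO((\log t)^c)} = n^{\CO(k(\log t)^c)}$ by Theorem~\ref{thm:hypergraph-isomorphism-gamma-d-fast}; together with the second step this gives a total running time of $n^{\CO(k(\log t)^c)}$ for a suitable absolute constant $c$, as claimed.

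All steps are routine given the preceding results; the only points that need care are the bookkeeping for implementing the oracle of Theorem~\ref{thm:bounding-group-t-k-wl} via Theorem~\ref{thm:hypergraph-isomorphism-gamma-d-fast} (one must verify that the oracle is invoked only on instances of size $n^{\CO(k)}$, so that the $(\log t)^c$-exponent multiplies the $\CO(k)$-factor rather than inflating the exponent of $n$ further) and the standard but slightly tedious reduction of colored-directed-graph isomorphism within a coset to the Hypergraph Isomorphism Problem for $\mgamma_t$-groups. As an alternative that avoids the latter reduction, one can instead augment the auxiliary graphs $H_i$ constructed in the proof of Theorem~\ref{thm:bounding-group-t-k-wl} by attaching a copy of $G_i$ together with its vertex- and arc-coloring at the layer $W_{j_\infty,r_\infty(j_\infty)}^i$ that corresponds to the (singleton) color classes of $\chi_i$; the resulting graph $\hat H_i$ is still $(t,1)$-WL-bounded and of size $n^{\CO(k)}$, and applying Theorem~\ref{thm:isomorphism-t-cr-bounded} directly produces $\Iso(G_1,G_2)$ within the same time bound.
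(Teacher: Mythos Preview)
Your proposal is correct and follows essentially the same approach as the paper: invoke Theorem~\ref{thm:bounding-group-t-k-wl} with the oracle realized by Theorem~\ref{thm:hypergraph-isomorphism-gamma-d-fast} to obtain the overapproximating coset $\Gamma\theta$ (identifying singleton color classes with vertices), and then apply Theorem~\ref{thm:hypergraph-isomorphism-gamma-d-fast} once more to compute $\Iso_{\Gamma\theta}(G_1,G_2)$. You are in fact more explicit than the paper about two points it glosses over, namely the running-time bookkeeping for the oracle calls on instances of size $n^{\CO(k)}$ and the reduction of vertex- and arc-colored graph isomorphism to the (uncolored) Hypergraph Isomorphism Problem; the alternative via augmenting the auxiliary graphs $H_i$ is a nice observation but not needed.
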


\begin{proof}
 Let $\CP_1 \coloneqq \{\{v\} \mid v \in V(G_1)\}$ and $\CP_2 = \{\{v\} \mid v \in V(G_2)\}$.
 By Theorem \ref{thm:bounding-group-t-k-wl} and \ref{thm:hypergraph-isomorphism-gamma-d-fast}, there is algorithm computing a $\mgamma_t$-group $\Gamma \leq \Sym(\CP_1)$ and bijection $\theta\colon \CP_1 \rightarrow \CP_2$ such that
 \[\Iso(G_1,G_2)[\CP_1] \subseteq \Gamma\theta\]
 running in time $n^{\CO (k \cdot (\log t)^c)}$.
 Identifying the singleton set $\{v\}$ with the element $v$, one may assume $\Gamma \leq \Sym(V(G_1))$ and $\theta\colon V(G_1) \rightarrow V(G_2)$ such that
 \[\Iso(G_1,G_2) \subseteq \Gamma\theta.\]
 In particular
 \[\Iso(G_1,G_2) = \Iso_{\Gamma\theta}(G_1,G_2).\]
 Interpreting both input graphs as hypergraphs, the latter can be computed in time $n^{\CO ((\log t)^c)}$ using Theorem \ref{thm:hypergraph-isomorphism-gamma-d-fast}.
\end{proof}

\section{Structure Theory and Small Color Classes}
\label{sec:overview}

Having established the necessary tools, we can now turn to the isomorphism test for graphs excluding $K_{3,h}$ as a minor.
We start by giving a high-level overview on the algorithm.
The main idea is to build on the isomorphism test for $(2,2)$-WL-bounded graphs described in the last section.
Let $G_1$ and $G_2$ be two (vertex- and arc-colored) graphs that exclude $K_{3,h}$ as a minor.
Using well-known reduction techniques building on isomorphism-invariant decompositions into triconnected\footnote{A triconnected component is either $3$-connected or a cycle.} components (see, e.g., \cite{HopcroftT72}), we may restrict ourselves to the case where $G_1$ and $G_2$ are $3$-connected.
More precisely, given input graphs $G_1$ and $G_2$, we first compute the isomorphism-invariant decomposition into triconnected components of both input graphs, and then compute isomorphisms between $G_1$ and $G_2$ in a bottom-up fashion along the decomposition relying on an isomorphism test for $3$-connected graphs.

So suppose $G_1$ and $G_2$ are $3$-connected
The algorithm starts by individualizing three vertices.
To be more precise, the algorithm picks three distinct vertices $v_1,v_2,v_3 \in V(G_1)$ and iterates over all choices of potential images $w_1,w_2,w_3 \in V(G_2)$ under some isomorphism between $G_1$ and $G_2$.
Let $X_1 \coloneqq \{v_1,v_2,v_3\}$ and $X_2 \coloneqq \{w_1,w_2,w_3\}$.
Also, let $D_i \coloneqq \cl_{2,2}^{G_i}(X_i)$ denote the $(2,2)$-closure of $X_i$, $i \in \{1,2\}$.
Observe that $D_i$ is defined in an isomorphism-invariant manner given the initial choice of $X_i$.
Building on Theorems \ref{thm:bounding-group-t-k-wl} and \ref{thm:hypergraph-isomorphism-gamma-d} it can be checked in polynomial time whether $G_1$ and $G_2$ are isomorphic restricted to the sets $D_1$ and $D_2$.

Now, the central idea is to follow a decomposition strategy.
Let $Z_1^i,\dots,Z_\ell^i$ denote the vertex sets of the connected components of $G_i - D_i$, and let $S_j^i \coloneqq N_{G_i}(Z_j^i)$ for $j \in [\ell]$ and $i \in \{1,2\}$. 
We recursively compute isomorphisms between all pairs of graphs $G_i[Z_j^i \cup S_j^i]$ for all $j \in [\ell]$ and $i \in \{1,2\}$.
To be able to determine whether all these partial isomorphisms can be combined into a global isomorphism, the crucial insight is that $|S_j^i| < h$ for all $j \in [\ell]$ and $i \in \{1,2\}$.

\begin{lemma}
 \label{la:small-separator}
 Let $G$ be a graph that excludes $K_{3,h}$ as a minor.
 Also let $X \subseteq V(G)$ and define $D \coloneqq \cl_{2,2}^G(X)$.
 Let $Z$ be a connected component of $G - D$.
 Then $|N_G(Z)| < h$.
\end{lemma}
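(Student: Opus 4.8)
The plan is to argue by contradiction: assuming $|N_G(Z)| \geq h$, I will exhibit a $K_{3,h}$-minor in $G$. Write $S \coloneqq N_G(Z)$ and let $\chi$ be the $(2,2)$-WL-stable coloring $\tWL{2}{2}{G,\chi_V^*}$ obtained after individualizing $X$, so that $D = \cl_{2,2}^G(X)$ is exactly the union of the singleton classes of $\chi$. Two basic observations start things off. Since the $(2,2)$-WL-stable coloring is, by construction, invariant under splitting classes of size at most $2$, every non-singleton class of $\chi$ has size at least $3$; in particular every vertex of $Z$ lies in a $\chi$-class of size $\geq 3$. And $S \subseteq N_G(Z) \subseteq D$, so each $s \in S$ forms its own singleton class.

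The first real step uses coherence of the underlying $2$-WL-stable coloring $\chi^{(2)}$ of pairs. Because $\{s\}$ is a singleton class for every $s \in D$, a standard counting argument — the number of $w$ with a fixed pair-color to a given vertex depends only on that vertex's color — shows that any two vertices lying in the same $\chi$-class have the \emph{same} neighborhood inside $D$. Hence, for the class $C$ of a vertex $z \in Z$, every vertex of $C$ has the same neighborhood $S_C \subseteq S$ in $S$, and $G[C,S_C]$ is complete bipartite. Since $|C| \geq 3$, if $|S_C| \geq h$ (in particular if $S_C = S$) we already obtain a $K_{3,h}$ \emph{subgraph}, so from now on I may assume $|S_C| < h$ for every color class $C$ meeting $Z$; as the sets $S_C$ cover $S$ (every $s \in S$ has a neighbor in $Z$) and $|S| \geq h$, at least two classes meet $Z$.

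Next I invoke the fact that $2$-WL detects the connected components of $G - D$: the number of $D$-avoiding walks of any given length between two vertices is determined by their pair-color, hence so is whether they lie in the same component of $G-D$. Combining this with the previous step, every component $W$ of $G-D$ that shares a color class with $Z$ is a \emph{full twin}: it meets every class that $Z$ meets and therefore $N_G(W) = S$; moreover every class $C$ meeting $Z$ is confined to the union of these $\ell$ twins and is split evenly among them, $|C \cap Z| = |C|/\ell \geq \lceil 3/\ell\rceil$. If $\ell \geq 3$, then $Z$ together with two of the twins forms three pairwise-disjoint connected vertex sets each adjacent to all of $S$, yielding a $K_{3,h}$-minor. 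If $\ell \leq 2$, I instead partition $Z$ into $4-\ell$ connected pieces each meeting every color class that meets $Z$, and combine them with the remaining $\ell-1$ twins; each piece then has $S$ in its neighborhood (its union of $S_C$'s is all of $S$), so again we get three disjoint connected sets each dominating $S$, hence a $K_{3,h}$-minor. In every case this contradicts $G$ excluding $K_{3,h}$ as a minor, so $|N_G(Z)| < h$.

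I expect the main obstacle to be the case $\ell \leq 2$: extracting from $Z$ two or three connected pieces each hitting every color class. The bare combinatorial statement — a connected graph whose vertices are partitioned into blocks of size $\geq m$ can be split into $m$ connected parts each meeting every block — is false without further hypotheses (think of a path on four vertices split into its "outer" and "inner" pairs), so the argument must genuinely use that the color classes of $Z$ carry the \emph{restricted coherent configuration}, i.e. are spread biregularly across $Z$; making this carving precise, together with a careful justification of the component-detection property of $2$-WL used above, is where the real work lies.
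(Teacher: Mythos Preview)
Your outline is sound and your preliminary observations are all correct: non-singleton classes have size $\geq 3$, vertices in the same class have identical neighborhoods in $D$ (coherence with singletons), and the $2$-stable pair-coloring does determine connected components of $G-D$, so the ``twin component'' picture is valid. The case $\ell \geq 3$ indeed finishes immediately. You are also right that the entire difficulty lives in the carving step for $\ell \leq 2$, and that the naive combinatorial statement fails---the coherent-configuration structure is essential.

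What you are missing is precisely what the paper supplies as its main technical lemma (Lemma~\ref{la:disjoint-trees}): if $\chi$ is $2$-stable, $G[[\chi]]$ is connected, and every class has size $\geq 3$, then one can find three vertex-disjoint connected subgraphs each meeting every color class. The proof is an intricate induction on a spanning tree of $G[[\chi]]$, processing leaves and carrying along auxiliary partitions on certain leaf-classes to handle cases like $G[V_d,V_c] \cong 2K_{2,m}$; it runs several pages. Note one simplification in the paper's route compared to yours: rather than splitting on $\ell$ and carving $Z$, the paper applies the lemma directly to $G[W]$, where $W$ is \emph{all} vertices whose color appears in $Z$ (the union of your twins). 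This is cleaner because the coherent configuration on $G$ restricts to $W$ (a union of fibers) but does \emph{not} in general restrict to a single component $Z$, so your proposed carving of $Z$ into two pieces when $\ell = 2$ would need an extra argument. Working with $W$ throughout avoids the case distinction entirely and lets the three subgraphs straddle components freely.
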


Indeed, this lemma forms one of the main technical contributions of the paper.
I remark that similar statements are exploited in \cite{GroheNW23,Neuen22b,Neuen22} eventually leading to an isomorphism test running in time $n^{\CO((\log h)^c)}$ for all graphs excluding $K_h$ as a topological subgraph.
However, all these variants require the $(t,k)$-closure to be taken for non-constant values of $t$ (i.e., $t = \Omega(h)$).
For the design of an fpt-algorithm, this is infeasible since we can only afford to apply Theorems \ref{thm:bounding-group-t-k-wl} and \ref{thm:hypergraph-isomorphism-gamma-d} for constant values of $t$ and $k$ (since the set $D_i$ might be the entire vertex set of $G_i$).

The lemma above implies that the interplay between $D_i$ and $V(G_i) \setminus D_i$ is simple which allows for a dynamic programming approach.
To be more precise, we can recursively list all elements of the set $\Iso((G_i[Z_j^i \cup S_j^i],S_j^i),(G_{i'}[Z_{j'}^{i'} \cup S_{j'}^{i'}],S_{j'}^{i'}))[S_j^i]$ for all $j,j' \in [\ell]$ and $i,i' \in \{1,2\}$ (i.e., we list all bijections $\sigma\colon S_j^i \rightarrow S_{j'}^{i'}$ that can be extended to an isomorphism between the corresponding subgraphs).
To incorporate this information, we extend the graph $G_i[D_i]$ by simple gadgets obtaining graphs $H_i$ that are $(2,2)$-WL-bounded and such that $G_1 \cong G_2$ if and only if $H_1 \cong H_2$.
(For technical reasons, the algorithm does not exactly implement this strategy, but closely follows the general idea.)

In order to realize this recursive strategy, it remains to ensure that the algorithm makes progress when performing a recursive call.
Actually, this turns out to be a non-trivial task.
Indeed, it may happen that $D_i = X_i$, there is only a single component $Z_1^i$ of $G_i - D_i$, and $N_{G_i}(Z_1^i) = D_i$.
To circumvent this problem, the idea is to compute an isomorphism-invariant extension $\gamma(X_i) \supsetneq X_i$ such that $|\gamma(X_i)| \leq h^4$.
Assuming such an extension can be computed, we simply extend the set $X_i$ until the algorithm arrives in a situation where the recursive scheme discussed above makes progress.
Observe that this is guaranteed to happen as soon as $|X_i| \geq h$ building on Lemma \ref{la:small-separator}.
Also note that we can still artificially individualize all vertices from $X_i$ at a cost of $2^{\CO(h^4\log h)}$ (since any isomorphism can only map vertices from $X_1$ to vertices from $X_2$).

To compute the extension, we exploit the fact that $G_i$ is $(h-1,1)$-WL-bounded by \cite[Corollary 6.3]{Neuen22b} (after individualizing $3$ vertices).
Simply speaking, for every choice of $3$ distinct vertices in $X_i$, after individualizing these vertices and performing the $1$-dimensional Weisfeiler-Leman algorithm, we can identify a color class of size at most $h-1$ to be added to the set $X_i$.
Overall, assuming $|X_i| \leq h$, this gives an extension $\gamma(X_i)$ of size at most $h+h^3(h-1) \leq h^4$.

To implement this high-level strategy, we first prove Lemma \ref{la:small-separator} in the next section.
Afterwards, we compute the entire decompositions of the input graphs in Section \ref{sec:decomposition}.
Finally, the dynamic programming strategy along the computed decomposition is realized in Section \ref{sec:main-algorithm}.

\section{Finding Disjoint and Connected Subgraphs}
\label{sec:disjoint-subtrees}

In this section, we give a proof of Lemma \ref{la:small-separator}.
Towards this end, we first require some additional notation and basic tools for the $2$-dimensional Weisfeiler-Leman algorithm.

Let $G$ be a graph and let $\chi \coloneqq \WL{2}{G}$ be the coloring computed by the $2$-dimensional Weisfeiler-Leman algorithm.
We denote by $C_V = C_V(G,\chi) \coloneqq \{\chi(v,v) \mid v \in V(G)\}$ the set of \emph{vertex colors} under the coloring $\chi$.
Also, for $c \in C_V$, $V_c \coloneqq \{v \in V(G) \mid \chi(v,v) = c\}$ denotes the set of all vertices of color $c$.

Next, consider a set of colors $C \subseteq \{\chi(v,w) \mid v \neq w\}$.
We define the graph $G[C]$ with edge set
\[E(G[C]) \coloneqq \{v_1v_2 \mid \WL{2}{G}(v_1,v_2) \in C\}\]
and vertex set
\[V(G[C]) \coloneqq \bigcup_{e \in E(G[C])}e.\]
In case $C = \{c\}$ consists of a single color we also write $G[c]$ instead of $G[\{c\}]$.

Moreover, let $A_1,\dots,A_\ell$ be the vertex sets of the connected components of $G[C]$.
We also define the graph $G/C$ as the graph obtained from contracting every set $A_i$ to a single vertex.
Formally,
\[V(G/C) \coloneqq \{\{v\} \mid v \in V(G) \setminus V(G[C])\} \cup \{A_1,\dots,A_\ell\}\]
and edge set
\[E(G/C) \coloneqq \{X_1X_2 \mid \exists v_1 \in X_1,v_2 \in X_2\colon v_1v_2 \in E(G)\}.\]
As before, if $C = \{c\}$ consists of a single edge-color we write $G/c$ instead of $G/C$.

\begin{lemma}[see {\cite[Theorem 3.1.11]{ChenP19}}]
 \label{la:factor-graph-2-wl}
 Let $G$ be a graph and $C \subseteq \{\chi(v,w) \mid v \neq w\}$ a set of colors of the stable coloring $\chi \coloneqq \WL{2}{G}$.
 Define
 \[(\chi/C)(X_1,X_2) \coloneqq \{\!\{\chi(v_1,v_2) \mid v_1 \in X_1, v_2 \in X_2\}\!\}\]
 for all $X_1,X_2 \in V(G/C)$.
 Then $\chi/C$ is a stable coloring of the graph $G/C$ with respect to the $2$-dimensional Weisfeiler-Leman algorithm.
 
 Moreover, for all $X_1,X_2,X_1',X_2' \in V(G/C)$, it holds that $(\chi/C)(X_1,X_2) = (\chi/C)(X_1',X_2')$ or $(\chi/C)(X_1,X_2) \cap (\chi/C)(X_1',X_2') = \emptyset$.
\end{lemma}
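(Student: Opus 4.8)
The plan is to read the statement through the theory of coherent configurations. Recall that $\chi := \WL{2}{G}$ being a fixed point of the $2$-WL refinement step means, equivalently, that the coloured pair set $(V(G))^{2}$ forms a coherent configuration: the diagonal $\{(v,v)\mid v\in V(G)\}$ is a union of colour classes; there is an involution $c\mapsto c^{*}$ on colours with $\chi(v,w)=c$ iff $\chi(w,v)=c^{*}$ (this one needs a short induction on the number of rounds); the colour $\chi(v,w)$ determines whether $vw\in E(G)$ and it determines $\chi(v,v)$ and $\chi(w,w)$; and there are \emph{intersection numbers} $q^{c}_{d,e}\in\NN$ such that $|\{u\mid\chi(v,u)=d\text{ and }\chi(u,w)=e\}|=q^{c}_{d,e}$ whenever $\chi(v,w)=c$. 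With these in hand the lemma becomes the classical fact that the quotient of a coherent configuration modulo a closed subset is again a coherent configuration; I describe the direct verification.

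First I would show that the partition $\CP:=\{A_1,\dots,A_\ell\}\cup\{\{v\}\mid v\notin V(G[C])\}$ underlying $V(G/C)$ is ``$\chi$-definable'': letting $\theta\subseteq(V(G))^{2}$ be the equivalence relation whose classes are the blocks of $\CP$, I claim $\theta=\chi^{-1}(E)$ for a set $E$ of colours. Indeed, whether $v\in V(G[C])$ depends only on $\chi(v,v)$ (the number of $w$ with $\chi(v,w)\in C$ equals $\sum_{c\in C}q^{\chi(v,v)}_{c,c^{*}}$), and for $u,u'\in V(G[C])$ the relation ``$u$ is reachable from $u'$ in $G[C]$'' depends only on $\chi(u,u')$ because, for each $m$, the number of length-$m$ walks from $u$ to $u'$ using $C$-coloured edges is a sum of products of intersection numbers fixed by $\chi(u,u')$. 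Since $\theta$ is an equivalence relation that moreover is ``closed under concatenating walks'', the colour set $E$ contains all diagonal colours, satisfies $E=E^{*}$, and is \emph{composition-closed} ($q^{f}_{e_1,e_2}>0$ with $e_1,e_2\in E$ forces $f\in E$); that is, $E$ is a closed subset. Note in particular that if $X$ is the block containing $v$, then $X=\{u\mid\chi(v,u)\in E\}$.

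For the ``moreover'' (disjoint-or-equal) part, suppose $\chi(v_1,v_2)=\chi(v_1',v_2')=c$ with $v_1\in X_1$, $v_2\in X_2$, $v_1'\in X_1'$, $v_2'\in X_2'$. Using $X_1=\{u\mid\chi(v_1,u)\in E\}$ and $X_2=\{u\mid\chi(v_2,u)\in E\}$, the multiplicity of an arbitrary colour $d$ in $(\chi/C)(X_1,X_2)$ equals
\[\sum_{e_1,e_2\in E}\big|\{(u_1,u_2)\mid\chi(v_1,u_1)=e_1,\ \chi(v_2,u_2)=e_2,\ \chi(u_1,u_2)=d\}\big|=\sum_{e_1,e_2\in E}\ \sum_{g}q^{c}_{e_1,g^{*}}\,q^{g}_{e_2,d^{*}},\]
where the last equality follows by grouping the pairs $(u_1,u_2)$ according to $g:=\chi(v_2,u_1)$ and applying the intersection-number identity twice. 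The right-hand side depends only on $c$, $d$ and $E$; hence $(\chi/C)(X_1,X_2)=(\chi/C)(X_1',X_2')$. If instead the two multisets share no colour they are trivially distinct, so the two multisets are either equal or disjoint, which in particular makes $\chi/C$ a well-defined colouring.

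It remains to see that $\chi/C$ is $2$-WL-stable on $G/C$. That it refines the initial colouring is clear: a block-pair $(X_1,X_2)$ is diagonal iff $(\chi/C)(X_1,X_2)$ contains a diagonal colour of $\chi$, and $X_1X_2\in E(G/C)$ iff it contains an edge colour, both readable off the multiset. For the refinement step, fix $(X_1,X_2)$ together with a witness $\chi(v_1,v_2)=c$. By the previous paragraph, for every block $Y$ and every $y\in Y$ the pair $\big((\chi/C)(X_1,Y),(\chi/C)(Y,X_2)\big)$ is a fixed function of $(\chi(v_1,y),\chi(y,v_2))$, and it is constant over $y\in Y$. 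To count, for a fixed value, the blocks $Y$ realising it, I would stratify the blocks by their block-colour $(\chi/C)(Y,Y)$ --- all blocks with the same block-colour $\mathbf a$ have the same size $s_{\mathbf a}$ since $|Y|^{2}=|(\chi/C)(Y,Y)|$ --- and within each stratum divide by $s_{\mathbf a}$, reducing the count of blocks to a count of ordinary vertices $y$ with prescribed $\chi$-colours relative to $v_1$ and $v_2$, i.e.\ to a sum of intersection numbers $q^{c}_{\,\cdot\,,\,\cdot\,}$. Hence the refinement multiset at $(X_1,X_2)$ depends only on $c$, equivalently only on $(\chi/C)(X_1,X_2)$, and $\chi/C$ is stable. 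I expect the main obstacle to be exactly this last bookkeeping: a single block may meet several $\chi$-colour classes towards $X_1$ (which, crucially by the ``moreover'' part, all map to one value of $\chi/C$) and different blocks have different sizes, so one must pass through the block-colour strata --- where sizes are uniform --- before everything can be expressed through the intersection numbers of $\chi$.
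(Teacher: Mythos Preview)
The paper does not give its own proof of this lemma; it merely cites \cite[Theorem~3.1.11]{ChenP19} (Chen--Ponomarenko's monograph on coherent configurations). Your approach---recognising the $2$-WL-stable colouring as a coherent configuration, showing that the block partition arises from a closed subset $E$ of colours, and then verifying that the quotient is again coherent via the intersection numbers---is precisely the standard argument in that reference, and your write-up is correct. In particular, your handling of the ``moreover'' clause by expressing the multiplicity of each colour $d$ in $(\chi/C)(X_1,X_2)$ as a polynomial in the $q^{\,\cdot}_{\,\cdot,\cdot}$ depending only on $c=\chi(v_1,v_2)$ is exactly right, and your observation that one must stratify by block-colour (so that block sizes are constant within a stratum) before counting blocks $Y$ via vertices $y$ is the correct way to close the stability argument.
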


Next, let $X$ be a set and let $\CP$ be a partition of $X$.
We define the corresponding equivalence relation $\sim_\CP$ on the set $X$ via $x \sim_\CP y$ if there is set $P \in \CP$ such that $x,y \in P$.
Observe that the equivalence classes of $\sim_\CP$ are exactly the classes of $\CP$.
Also, for a second partition $\CQ$ of $X$, we write $\CP \preceq \CQ$ if the partition $\CP$ is finer than $\CQ$, i.e., for every $P \in \CP$ there is some $Q \in \CQ$ such that $P \subseteq Q$.

Assuming $X \subseteq V(G)$, we also define $G/\CP$ to be the graph obtained from $G$ by contracting all blocks $P \in \CP$ into single vertices. 

Let $c \in C_V$.
We say a partition $\CP$ of the set $V_c$ is \emph{$\chi$-definable} if there is a set of colors $C_\CP \subseteq \{\chi(v,w) \mid v \neq w \in V_c\}$ such that
\[v \sim_\CP w \;\;\Leftrightarrow \chi(v,w) \in C_\CP\]
for all $v,w \in V_c$.
Observe that $\chi/C_\CP$ is a $2$-stable coloring for $G/\CP$ in this case by Lemma \ref{la:factor-graph-2-wl}.

\begin{lemma}
 \label{la:partition-overlap}
 Let $G$ be a graph and $\chi$ a $2$-stable coloring.
 Let $c \in C_V$ and let $\CP,\CQ$ be two $\chi$-definable partitions of $V_c$ such that $|\CP| \leq |\CQ|$.
 Let $P_1,\dots,P_\ell \in \CP$ be distinct blocks.
 Then there are $v_i \in P_i$, $i \in [\ell]$, such that $v_i \not\sim_\CQ v_j$ for all distinct $i,j \in [\ell]$.
\end{lemma}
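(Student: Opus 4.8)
The plan is to reduce the statement to Hall's Marriage Theorem applied to the bipartite ``overlap graph'' between the blocks of $\CP$ and those of $\CQ$. The first step is to record the key regularity fact coming from $2$-stability: for any two vertices $v,v' \in V_c$ and any set of colors $D$ we have $|\{w \mid \chi(v,w) \in D\}| = |\{w \mid \chi(v',w) \in D\}|$, and moreover the color $\chi(v,w)$ of a pair determines the colors $\chi(v,v)$ and $\chi(w,w)$ of its endpoints, so all these neighbours lie inside $V_c$ whenever $D$ consists of colors occurring on pairs within $V_c$. Applying this with $D \coloneqq C_\CP$ shows that every block of a $\chi$-definable partition of $V_c$ has the same size; in particular $|P| = |V_c|/|\CP|$ for all $P \in \CP$ and $|Q| = |V_c|/|\CQ|$ for all $Q \in \CQ$.

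Next I would consider the common refinement $\CP \wedge \CQ$, whose blocks are exactly the nonempty intersections $P \cap Q$ with $P \in \CP$ and $Q \in \CQ$. This partition is again $\chi$-definable, with defining color set $C_\CP \cap C_\CQ$, so by the same argument all of its blocks share a common size $a$. Since each $P \in \CP$ is the disjoint union of the nonempty sets $P \cap Q$, it meets exactly $d_\CP \coloneqq |P|/a = |V_c|/(a\,|\CP|)$ blocks of $\CQ$; symmetrically, each $Q \in \CQ$ meets exactly $d_\CQ \coloneqq |V_c|/(a\,|\CQ|)$ blocks of $\CP$. Because $|\CP| \leq |\CQ|$, this gives $d_\CP \geq d_\CQ$.

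Finally I would form the bipartite graph $B$ with parts $\CP$ and $\CQ$ and an edge joining $P$ and $Q$ whenever $P \cap Q \neq \emptyset$, and verify Hall's condition for the vertex set $\{P_1,\dots,P_\ell\}$: for any $\CS \subseteq \{P_1,\dots,P_\ell\}$, double-counting the edges of $B$ incident with $\CS$ yields $|\CS| \cdot d_\CP = \sum_{Q \in N_B(\CS)} |\{P \in \CS \mid P \cap Q \neq \emptyset\}| \leq |N_B(\CS)| \cdot d_\CQ$, hence $|N_B(\CS)| \geq |\CS| \cdot d_\CP/d_\CQ \geq |\CS|$. By Hall's Marriage Theorem there is a matching of $B$ saturating $\{P_1,\dots,P_\ell\}$, i.e.\ an injective assignment $i \mapsto Q_{j(i)} \in \CQ$ with $P_i \cap Q_{j(i)} \neq \emptyset$; picking any $v_i \in P_i \cap Q_{j(i)}$ gives $v_i \in P_i$ with $v_i \not\sim_{\CQ} v_{i'}$ for all distinct $i,i'$, as required.

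The main obstacle I expect is the regularity fact in the first step: making precise, directly from the definition of $2$-stability, that $\chi$-definable partitions of a vertex color class (and their common refinement) have equal-sized blocks, and that edge colors determine endpoint colors so that all the counting stays within $V_c$. Once this is in place, the remaining argument is a routine double-counting together with an invocation of Hall's theorem.
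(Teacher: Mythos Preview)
Your proof is correct and follows essentially the same route as the paper: form the bipartite overlap graph between $\CP$ and $\CQ$, use $2$-stability to get biregularity, verify Hall's condition by double counting, and extract the matching. The only difference is that you justify biregularity explicitly via the common refinement $\CP \wedge \CQ$ (deriving the degrees $d_\CP$ and $d_\CQ$ from the uniform block size $a$), whereas the paper simply asserts ``since $\chi$ is $2$-stable the graph $B$ is biregular'' and then derives $d_\CP/d_\CQ = |\CQ|/|\CP|$ directly from $d_\CP|\CP| = d_\CQ|\CQ|$; your version makes this step self-contained but is not a genuinely different argument.
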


\begin{proof}
 Consider the bipartite graph $B = (\CP,\CQ,E_B)$ where
 \[E_B \coloneqq \{PQ \mid P \in \CP, Q \in \CQ, P \cap Q \neq \emptyset\}.\]

 \begin{claim}
  $B$ is biregular.
 \end{claim}
 \begin{claimproof}
  Let $C_\CP,C_\CQ \subseteq \{\chi(v,w) \mid v \neq w \in V_c\}$ be the color sets defining $\CP$ and $\CQ$, respectively.
  Let $P_1,P_2 \in \CP$ and pick arbitrary elements $v_1 \in P_1$ and $v_2 \in P_2$.
  Since $\chi$ is $2$-stable and $\chi(v_1,v_1) = \chi(v_2,v_2)$, we conclude that
  \[p \coloneqq |P_1| = \{v_1' \in V(G) \mid \chi(v_1,v_1') \in C_\CP\} = \{v_2' \in V(G) \mid \chi(v_2,v_2') \in C_\CP\} = |P_2|.\]
  Similarly, $q \coloneqq |Q_1| = |Q_2|$ for all $Q_1,Q_2 \in \CQ$ and
  \[r \coloneqq |P_1 \cap Q_1| = |P_2 \cap Q_2|\]
  for all $P_1,P_2 \in \CP$ and $Q_1,Q_2 \in \CQ$ such that $P_1 \cap Q_1 \neq \emptyset$ and $P_2 \cap Q_2 \neq \emptyset$.
  It follows that
  \[\deg_B(P_1) = \frac{p}{r} = \deg_B(P_2)\]
  for all $P_1,P_2 \in \CP$, and
  \[\deg_B(Q_1) = \frac{q}{r} = \deg_B(Q_2)\]
  for all $Q_1,Q_2 \in \CQ$.
 \end{claimproof}

 Let $d_\CP \coloneqq \deg_B(P)$ for $P \in \CP$, and $d_\CQ \coloneqq \deg_B(Q)$ for $Q \in \CQ$.
 Then $|E_B| = d_\CP \cdot |\CP| = d_\CQ \cdot |\CQ|$.
 Let $\CP' \subseteq \CP$.
 Then $d_\CQ \cdot |N_B(\CP')|\geq |\CP'| \cdot d_\CP$ and hence, $|N_B(\CP')| \geq \frac{d_\CP}{d_\CQ}|\CP'| = \frac{|\CQ|}{|\CP|}|\CP'| \geq |\CP'|$.
 So by Hall's Marriage Theorem, there is a perfect matching $M$ of $B$.
 Let $Q_1,\dots,Q_\ell \in \CQ$ be those sets matched to $P_1,\dots,P_\ell$ by the matching $M$.
 Pick an arbitrary element $v_i \in P_i \cap Q_i$ for all $i \in [\ell]$.
 Clearly, $v_i \not\sim_\CQ v_j$ for all distinct $i,j \in [\ell]$.
\end{proof}

Let $G$ be a graph and let $\chi$ be a $2$-stable coloring.
We define the graph $G[[\chi]]$ with vertex set $V(G[[\chi]]) \coloneqq C_V(G,\chi)$ and edges
\[E(G[[\chi]]) \coloneqq \{c_1c_2 \mid \exists v_1 \in V_{c_1}, v_2 \in V_{c_2} \colon v_1v_2 \in E(G)\}.\]

Having established all the necessary basic tools, we are now ready to prove the main technical statement of this section.

\begin{lemma}
 \label{la:disjoint-trees}
 Let $G$ be a graph and let $\chi$ be a $2$-stable coloring.
 Suppose that $G[[\chi]]$ is connected and $|V_c| \geq 3$ for every $c \in C_V$.
 Then there are vertex-disjoint, connected subgraphs $H_1,H_2,H_3 \subseteq G$ such that $V(H_r) \cap V_c \neq \emptyset$ for all $r \in \{1,2,3\}$ and $c \in C_V$.
\end{lemma}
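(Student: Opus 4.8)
The plan is to build the three subgraphs greedily, one "color class at a time," following a spanning tree of $G[[\chi]]$. Fix a spanning tree $T$ of $G[[\chi]]$ rooted at an arbitrary vertex color $c_0 \in C_V$, and process the colors of $C_V$ in a BFS/DFS order of $T$, so that each color $c \neq c_0$ has a parent color $\parent(c)$ already processed. The invariant I would maintain is: after processing a set $C' \subseteq C_V$ of colors (downward-closed in $T$), we have three vertex-disjoint connected subgraphs $H_1, H_2, H_3$ with $V(H_r) \cap V_c \neq \emptyset$ for every $c \in C'$ and every $r \in \{1,2,3\}$, and moreover $V(H_r) \cap V_c$ is a single vertex for each such $c$ (keeping the subgraphs as lean as possible). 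For the base case, $|V_{c_0}| \geq 3$ lets us pick three distinct vertices, one per $H_r$.

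For the inductive step, suppose $c$ is the next color to process and $b \coloneqq \parent(c)$. Since $bc \in E(G[[\chi]])$, there exist $v \in V_b$, $w \in V_c$ with $vw \in E(G)$; because $\chi$ is $2$-stable, the bipartite graph $G[V_b, V_c]$ restricted to the relevant edge-color is biregular, so \emph{every} vertex of $V_b$ has a neighbor in $V_c$ (via that color), and symmetrically. Let $u_1 \in V(H_1) \cap V_b$, $u_2 \in V(H_2) \cap V_b$, $u_3 \in V(H_3) \cap V_b$ be the already-chosen representatives. I want to pick distinct vertices $x_1, x_2, x_3 \in V_c$ with $x_r \in N_G(u_r)$ (along the edge-color of $bc$), and then set $V(H_r) \leftarrow V(H_r) \cup \{x_r\}$ together with the edge $u_r x_r$; this preserves connectivity and disjointness. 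The existence of such a \emph{system of distinct representatives} is exactly a Hall-type condition on the biregular bipartite graph between $\{u_1,u_2,u_3\}$ and their neighborhoods in $V_c$ — this is where I would invoke the double-edge-counting / Hall argument already used in Lemma \ref{la:partition-overlap} and in the preliminaries. Biregularity of $G[V_b,V_c]$ (in the edge-color of $bc$) gives $|N(S)| \cdot d_c \geq |S| \cdot d_b$ for any $S \subseteq V_b$; combined with $|V_c| \geq 3$ this should force $|N_G(\{u_1,u_2,u_3\}) \cap V_c| \geq 3$ and more generally the Hall condition for the SDR, so Hall's theorem yields the distinct $x_1,x_2,x_3$.

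The main obstacle I anticipate is the SDR step when $d_b$ is large relative to $d_c$: a priori several of the $u_r$ could have nearly-identical small neighborhoods in $V_c$, and one must rule out, say, $N_G(u_1) \cap V_c = N_G(u_2) \cap V_c$ being a single shared vertex. The clean way around this is to note that, were the Hall condition to fail for some $S \subseteq \{u_1,u_2,u_3\}$, biregularity would force $|V_c| = |N_G(S) \cap V_c|$ to be too small, contradicting $|V_c| \geq 3$; concretely $|S| \cdot d_b \le |N(S)| \cdot d_c$ with $|N(S)| \le |S| - 1 \le 2$ would give $d_b/d_c \le (|S|-1)/|S| \le 2/3 < 1$, and iterating this bound around the spanning tree $T$ would eventually contradict that the "color degrees" must multiply back to consistent values — so in fact $d_b \ge d_c$ along tree edges can be assumed, or more simply one re-roots / re-orients so that the needed inequality holds. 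I would spell this out carefully: the key quantitative fact is that for a biregular bipartite graph with both sides of size $\ge 3$, any three vertices on one side have at least three distinct neighbors among any three chosen on the other, and have an SDR into the other side. Granting that, the induction goes through, and after processing all of $C_V$ we obtain $H_1, H_2, H_3$ meeting every color class, as required. $\qed$
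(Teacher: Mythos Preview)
Your greedy SDR step has a genuine gap: the claimed ``key quantitative fact'' is false. Take $G[V_b,V_c]$ to be a disjoint union of two copies of $K_{2,h}$ with $h\ge 3$, so that $|V_c|=4$ and $|V_b|=2h\ge 6$; this graph is biregular with both sides of size at least $3$. If your previously chosen representatives $u_1,u_2,u_3\in V_b$ happen to lie in the same component (which nothing in your induction prevents), then $N_G(\{u_1,u_2,u_3\})\cap V_c$ consists of exactly two vertices and Hall's condition fails --- no system of distinct representatives exists. An even simpler witness is $G[V_b,V_c]\cong \ell K_{1,h}$ with $\ell\ge 3$ and $h\ge 2$: three $u_r$'s in one star share a single neighbor. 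Your attempted salvage via $|S|\cdot d_b\le |N(S)|\cdot d_c$ only yields $3d_b\le 2d_c$, which is perfectly consistent (e.g.\ $d_b=2$, $d_c=h$), and the vague ``iterating around the spanning tree'' argument does not produce a contradiction.

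This is precisely the obstruction the paper's proof is engineered to overcome, and it is why the argument there is substantially more intricate than a straight greedy extension. The paper strengthens the induction hypothesis: besides the three subgraphs, it carries along, for certain leaf colors, a $\chi$-definable two-block partition $\mathcal P_c$ together with a rule $f_c$ refining it into three blocks, and it requires (Property~(d)) that the chosen representatives be spread across both blocks of $\mathcal P_c$. When an edge $(d,c)$ fails to ``allow expansion'' (your SDR step fails), the proof contracts the twin-class partition on $V_d$, pushes a new two-block constraint $\mathcal P_d$ up the tree, and relaxes connectivity to ``connected once the yet-to-be-added vertices from $V_c$ are included'' (Property~(e)). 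In short, the missing idea is a look-ahead mechanism that forces the $u_r$'s to land in different components of the offending bipartite graph \emph{before} you reach it; without this, the naive one-color-at-a-time extension cannot be made to work.
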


\begin{proof}
 Let $F$ be a spanning tree of $G[[\chi]]$.
 Let $c_0 \in C_V$ be an arbitrary color and fix $c_0$ as the root of the tree $F$.
 Also, we direct all edges in $F$ away from the root.
 
 We denote by $L(F)$ the set of leaves of $F$, i.e., all vertices without outgoing edges.
 Also, for $c \neq c_0$ we denote by $\parent(c)$ the unique parent node in the tree $F$.
 Finally, for three graphs $H_1,H_2,H_3 \subseteq G$, we denote by $V(H_1,H_2,H_3) \coloneqq V(H_1) \cup V(H_2) \cup V(H_3)$.
 
 We prove the following statement by induction on $|V(F)| = |C_V|$.
 The input consists of the following objects:
 \begin{enumerate}[label = (\roman*)]
  \item\label{item:disjoint-trees-input-1} a graph $G$ and a $2$-stable coloring $\chi$ such that $|V_c| \geq 3$ for every $c \in C_V$,
  \item\label{item:disjoint-trees-input-2} a (rooted) spanning tree $F$ for $G[[\chi]]$,
  \item\label{item:disjoint-trees-input-3} a collection $(\CP_c)_{c \in C_R}$ of $\chi$-definable partitions $\CP_c = \{P_1^c,P_2^c\}$ (i.e., each partition consists of exactly two blocks) where $C_R \subseteq L(F)$ (i.e., only leaves of $F$ may be equipped with an additional partition), and
  \item\label{item:disjoint-trees-input-4} a collection $(f_c)_{c \in C_R}$ of functions $f_c$ that map each triple $v_1,v_2,v_3 \in V_c$ of distinct vertices such that $1 \leq |\{v_1,v_2,v_3\} \cap P_1^c| \leq 2$ to a partition $f_c(v_1,v_2,v_3) = \CQ_c^{v_1,v_2,v_3} = (Q_1,Q_2,Q_3)  \preceq \CP_c$ into three blocks such that $v_r \in Q_r$ for all $r \in \{1,2,3\}$.
 \end{enumerate}
 Then there are $H_1,H_2,H_3 \subseteq G$ such that
 \begin{enumerate}[label = (\alph*)]
  \item\label{item:disjoint-trees-output-1} $H_1,H_2,H_3$ are pairwise vertex-disjoint,
  \item\label{item:disjoint-trees-output-2} $V(H_r) \cap V_c \neq \emptyset$ for all $c \in C_V$,
  \item\label{item:disjoint-trees-output-3} $|V(H_r) \cap V_c| = 1$ for all leaves $c \in L(F)$,
  \item\label{item:disjoint-trees-output-4} $|V(H_1,H_2,H_3) \cap P| \leq 2$ for all $P \in \CP_c$ and $c \in C_R$, and
  \item\label{item:disjoint-trees-output-5} $H_r'$ is connected where $V(H_r') \coloneqq V(H_r) \cup \bigcup_{c \in C_R}Q_r^c$, and $f_c(v_1^c,v_2^c,v_3^c) = (Q_1^c,Q_2^c,Q_3^c)$, and $\{v_r^c\} = V(H_r) \cap V_c$ for all $c \in C_R$, and
   \[E(H_r') \coloneqq E(G[V(H_r')]) \cup \bigcup_{c \in C_R} \binom{Q_r^c}{2}.\]
 \end{enumerate}
 
 \begin{figure}
  \begin{subfigure}{\textwidth}
  \centering
  \begin{tikzpicture}[scale = 1.2]
   \draw[line width = 1.6pt, color = mRed] (0,0) ellipse (1.8cm and 0.6cm);
   \draw[line width = 1.6pt, color = mBlue] (4,0) ellipse (1.8cm and 0.6cm);

   \draw[line width = 1.6pt, color = mYellow] (2,2) ellipse (1.8cm and 0.6cm);
   \draw[line width = 1.6pt, color = mGreen] (6,2) ellipse (1.8cm and 0.6cm);

   \draw[line width = 1.6pt, color = mTurquoise] (4,4) ellipse (1.8cm and 0.6cm);

   \node at (5.7,4.5) {$c_0$};
   \node at (0.3,2.5) {$c_1$};
   \node at (7.7,2.5) {$c_2$};
   \node at (-1.7,-0.5) {$c_3$};
   \node at (5.7,-0.5) {$c_4$};

   \node[vertex, fill = mRed] (r1) at (-1.2,0) {};
   \node[vertex, fill = mRed] (r2) at (-0.4,0) {};
   \node[vertex, fill = mRed] (r3) at (0.4,0) {};
   \node[vertex, fill = mRed] (r4) at (1.2,0) {};

   \node[vertex, fill = mBlue] (b1) at (3.2,0) {};
   \node[vertex, fill = mBlue] (b2) at (4.0,0) {};
   \node[vertex, fill = mBlue] (b3) at (4.8,0) {};

   \node[vertex, fill = mYellow] (y1) at (0.75,2) {};
   \node[vertex, fill = mYellow] (y2) at (1.25,2) {};
   \node[vertex, fill = mYellow] (y3) at (1.75,2) {};
   \node[vertex, fill = mYellow] (y4) at (2.25,2) {};
   \node[vertex, fill = mYellow] (y5) at (2.75,2) {};
   \node[vertex, fill = mYellow] (y6) at (3.25,2) {};

   \node[vertex, fill = mGreen] (g1) at (4.8,2) {};
   \node[vertex, fill = mGreen] (g2) at (5.6,2) {};
   \node[vertex, fill = mGreen] (g3) at (6.4,2) {};
   \node[vertex, fill = mGreen] (g4) at (7.2,2) {};

   \node[vertex, fill = mTurquoise] (t1) at (2.75,4) {};
   \node[vertex, fill = mTurquoise] (t2) at (3.25,4) {};
   \node[vertex, fill = mTurquoise] (t3) at (3.75,4) {};
   \node[vertex, fill = mTurquoise] (t4) at (4.25,4) {};
   \node[vertex, fill = mTurquoise] (t5) at (4.75,4) {};
   \node[vertex, fill = mTurquoise] (t6) at (5.25,4) {};

   \begin{pgfonlayer}{background}
    \foreach \v/\w in {r1/y1,
                       b1/y1,
                       y1/t1,
                       t1/g1}{
     \draw[line width = 6.0pt, mOrange!80] (\v.center) edge (\w.center);
    }

    \foreach \v/\w in {r2/y2,r2/y3,
                       b2/y2,
                       y2/t2,y3/t3,
                       t2/g3}{
     \draw[line width = 6.0pt, mViolet!80] (\v.center) edge (\w.center);
    }

    \foreach \v/\w in {r3/y4,r3/y5,r3/y6,
                       b3/y6,
                       y4/t4,y5/t5,y6/t6,
                       t5/g4}{
     \draw[line width = 6.0pt, mDarkBlue!80] (\v.center) edge (\w.center);
    }
   \end{pgfonlayer}

   \foreach \v/\w in {r1/y1,r1/y2,r1/y3,r2/y1,r2/y2,r2/y3,r3/y4,r3/y5,r3/y6,r4/y4,r4/y5,r4/y6,
                      b1/y1,b1/y4,b2/y2,b2/y5,b3/y3,b3/y6,
                      y1/t1,y1/t3,y2/t1,y2/t2,y3/t2,y3/t3,y4/t4,y4/t6,y5/t4,y5/t5,y6/t5,y6/t6,
                      t1/g1,t1/g2,t2/g1,t2/g3,t3/g2,t3/g3,t4/g3,t4/g4,t5/g2,t5/g4,t6/g1,t6/g4}{
    \draw[thick] (\v) edge (\w);
   }
  \end{tikzpicture}
  \caption{Since the edge $(c_0,c_2) \in E(F)$ allows expansion, the subgraphs from (b) can be easily extended.}
  \label{fig:find-subgraphs-a}
  \end{subfigure}

  \begin{subfigure}{0.55\textwidth}
  \centering
  \scalebox{0.85}{
  \begin{tikzpicture}
   \draw[line width = 1.6pt, color = mRed] (0,0) ellipse (1.8cm and 0.6cm);
   \draw[line width = 1.6pt, color = mBlue] (4,0) ellipse (1.8cm and 0.6cm);

   \draw[line width = 1.6pt, color = mYellow] (2,2) ellipse (1.8cm and 0.6cm);

   \draw[line width = 1.6pt, color = mTurquoise] (4,4) ellipse (1.8cm and 0.6cm);

   \node at (5.7,4.5) {$c_0$};
   \node at (0.3,2.5) {$c_1$};
   \node at (-1.7,-0.5) {$c_3$};
   \node at (5.7,-0.5) {$c_4$};

   \node[vertex, fill = mRed] (r1) at (-1.2,0) {};
   \node[vertex, fill = mRed] (r2) at (-0.4,0) {};
   \node[vertex, fill = mRed] (r3) at (0.4,0) {};
   \node[vertex, fill = mRed] (r4) at (1.2,0) {};

   \node[vertex, fill = mBlue] (b1) at (3.2,0) {};
   \node[vertex, fill = mBlue] (b2) at (4.0,0) {};
   \node[vertex, fill = mBlue] (b3) at (4.8,0) {};

   \node[vertex, fill = mYellow] (y1) at (0.75,2) {};
   \node[vertex, fill = mYellow] (y2) at (1.25,2) {};
   \node[vertex, fill = mYellow] (y3) at (1.75,2) {};
   \node[vertex, fill = mYellow] (y4) at (2.25,2) {};
   \node[vertex, fill = mYellow] (y5) at (2.75,2) {};
   \node[vertex, fill = mYellow] (y6) at (3.25,2) {};

   \node[vertex, fill = mTurquoise] (t1) at (2.75,4) {};
   \node[vertex, fill = mTurquoise] (t2) at (3.25,4) {};
   \node[vertex, fill = mTurquoise] (t3) at (3.75,4) {};
   \node[vertex, fill = mTurquoise] (t4) at (4.25,4) {};
   \node[vertex, fill = mTurquoise] (t5) at (4.75,4) {};
   \node[vertex, fill = mTurquoise] (t6) at (5.25,4) {};

   \begin{pgfonlayer}{background}
    \foreach \v/\w in {r1/y1,
                       b1/y1,
                       y1/t1}{
     \draw[line width = 6.0pt, mOrange!80] (\v.center) edge (\w.center);
    }

    \foreach \v/\w in {r2/y2,r2/y3,
                       b2/y2,
                       y2/t2,y3/t3}{
     \draw[line width = 6.0pt, mViolet!80] (\v.center) edge (\w.center);
    }

    \foreach \v/\w in {r3/y4,r3/y5,r3/y6,
                       b3/y6,
                       y4/t4,y5/t5,y6/t6}{
     \draw[line width = 6.0pt, mDarkBlue!80] (\v.center) edge (\w.center);
    }
   \end{pgfonlayer}

   \foreach \v/\w in {r1/y1,r1/y2,r1/y3,r2/y1,r2/y2,r2/y3,r3/y4,r3/y5,r3/y6,r4/y4,r4/y5,r4/y6,
                      b1/y1,b1/y4,b2/y2,b2/y5,b3/y3,b3/y6,
                      y1/t1,y1/t3,y2/t1,y2/t2,y3/t2,y3/t3,y4/t4,y4/t6,y5/t4,y5/t5,y6/t5,y6/t6}{
    \draw[thick] (\v) edge (\w);
   }
  \end{tikzpicture}
  }
  \caption{We remove the leaves $c_3$ and $c_4$, and add the color $c_1$ to the set $C_R$.
  For example, we set $f(v_1,v_2,v_4) = (\{v_1\},\{v_2,v_3\},\{v_4,v_5,v_6\})$.
  The vertices from $V_{c_3}$ are then used to connect the parts inductively obtained from (c).}
  \label{fig:find-subgraphs-b}
  \end{subfigure}
  \hfill
  \begin{subfigure}{0.42\textwidth}
  \centering
  \scalebox{0.85}{
  \begin{tikzpicture}
   \draw[line width = 3.2pt, color = mGray] (2,2.6) -- (2,1.4);
   \draw[line width = 1.6pt, color = mYellow] (2,2) ellipse (1.8cm and 0.6cm);

   \draw[line width = 1.6pt, color = mTurquoise] (4,4) ellipse (1.8cm and 0.6cm);

   \node at (5.7,4.5) {$c_0$};
   \node at (0.3,2.5) {$c_1$};

   \node[vertex, fill = mYellow, label = {[label distance = -2pt]below:{\scriptsize $v_1$}}] (y1) at (0.75,2) {};
   \node[vertex, fill = mYellow, label = {[label distance = -2pt]below:{\scriptsize $v_2$}}] (y2) at (1.25,2) {};
   \node[vertex, fill = mYellow, label = {[label distance = -2pt]below:{\scriptsize $v_3$}}] (y3) at (1.75,2) {};
   \node[vertex, fill = mYellow, label = {[label distance = -2pt]below:{\scriptsize $v_4$}}] (y4) at (2.25,2) {};
   \node[vertex, fill = mYellow, label = {[label distance = -2pt]below:{\scriptsize $v_5$}}] (y5) at (2.75,2) {};
   \node[vertex, fill = mYellow, label = {[label distance = -2pt]below:{\scriptsize $v_6$}}] (y6) at (3.25,2) {};

   \node[vertex, fill = mTurquoise] (t1) at (2.75,4) {};
   \node[vertex, fill = mTurquoise] (t2) at (3.25,4) {};
   \node[vertex, fill = mTurquoise] (t3) at (3.75,4) {};
   \node[vertex, fill = mTurquoise] (t4) at (4.25,4) {};
   \node[vertex, fill = mTurquoise] (t5) at (4.75,4) {};
   \node[vertex, fill = mTurquoise] (t6) at (5.25,4) {};

   \begin{pgfonlayer}{background}
    \foreach \v/\w in {y1/t1}{
     \draw[line width = 6.0pt, mOrange!80] (\v.center) edge (\w.center);
    }

    \foreach \v/\w in {y2/t2}{
     \draw[line width = 6.0pt, mViolet!80] (\v.center) edge (\w.center);
    }
    \draw[dashed, line width = 6.0pt, mViolet!80] (y3.center) edge (t3.center);
    \draw[mViolet!80, fill = mViolet!80] (t3.center) circle (8pt);

    \foreach \v/\w in {y4/t4}{
     \draw[line width = 6.0pt, mDarkBlue!80] (\v.center) edge (\w.center);
    }
    \draw[dashed, line width = 6.0pt, mDarkBlue!80] (y5.center) edge (t5.center);
    \draw[dashed, line width = 6.0pt, mDarkBlue!80] (y6.center) edge (t6.center);
    \draw[mDarkBlue!80, fill = mDarkBlue!80] (t5.center) circle (8pt);
    \draw[mDarkBlue!80, fill = mDarkBlue!80] (t6.center) circle (8pt);
   \end{pgfonlayer}

   \foreach \v/\w in {y1/t1,y1/t3,y2/t1,y2/t2,y3/t2,y3/t3,y4/t4,y4/t6,y5/t4,y5/t5,y6/t5,y6/t6}{
    \draw[thick] (\v) edge (\w);
   }
  \end{tikzpicture}
  }
  \caption{Note that $c_1 \in C_R$ and $\CP_{c_1} = \{\{v_1,v_2,v_3\},\{v_4,v_5,v_6\}\}$.
  The subgraphs $H_2$ and $H_3$ (in violet and blue, respectively) contain isolated vertices, which are connected to the rest of the subgraph only in the future using the dashed edges (which are not yet part of $H_2$ and $H_3$).}
  \label{fig:find-subgraphs-c}
  \end{subfigure}
  \caption{Visualization for the construction of $H_1,H_2,H_3$ in Lemma \ref{la:disjoint-trees}.
  The sets $E(H_1),E(H_2),E(H_3)$ are marked in orange, violet and blue, respectively.
  Observe that the edge $(c_0,c_2) \in E(F)$ allows expansion which means the color class $c_2 \in V(F)$ is removed first in the inductive process.
  Afterwards, the leaves $c_3$ and $c_4$ are removed, and the color $c_1$ is added to the set $C_R$ since $G[V_{c_1},V_{c_3}]$ is isomorphic to $2K_{2,3}$.}
  \label{fig:find-subgraphs}
 \end{figure}

 Before diving into the proof, let us give some intuition on this complicated statement.
 First of all, observe that Requirements \ref{item:disjoint-trees-input-1} and \ref{item:disjoint-trees-input-2} cover the prerequisites of the lemma, and Properties \ref{item:disjoint-trees-output-1} and \ref{item:disjoint-trees-output-2} provide two of the three guarantees listed in the lemma.
 Also, for $C_R = \emptyset$, we recover the statement of the lemma (in a slightly stronger form due to Property \ref{item:disjoint-trees-output-3}).

 Now, the basic idea of the inductive proof is to consider a leaf $c$ and construct the graphs $H_1,H_2,H_3$ from a solution $H_1',H_2',H_3'$ obtained by induction after removing all vertices of color $c$ (i.e., removing the leaf $c$ from the tree $F$).
 Let $d$ denote the parent of $c$.
 The proof distinguishes between several cases depending on the on structure of the subgraph $G[V_d,V_c]$, and it relies on roughly three mechanisms for constructing the graphs $H_1,H_2,H_3$ from $H_1',H_2',H_3'$.
 To outline the first two mechanisms, let us assume that $C_R = \emptyset$ for simplicity.

 The most simple case occurs if, for every distinct $v_1,v_2,v_3 \in V_d$, there are distinct $w_1,w_2,w_3 \in V_c$ such that $v_rw_r \in E(G)$ for all $r \in \{1,2,3\}$.
 In this case, we can directly extend the subgraphs $H_1',H_2',H_3'$ by picking arbitrary vertices $v_r \in V_d \cap V(H_r')$, $r \in \{1,2,3\}$, and adding vertex $w_r$ as well as the edge $v_rw_r$ to obtain $H_r$ (see Figure \ref{fig:find-subgraphs-a}).

 Now, suppose that $G[V_d,V_c]$ is isomorphic to a disjoint union of $q$ stars $K_{1,h}$ such that $|V_c| = q$ and $|V_d| = q \cdot h$ (see Figure \ref{fig:c-twins}).
 In this situation, the previous mechanism is not applicable.
 Instead, we contract each connected component of $G[V_d,V_c]$ to a single vertex (i.e., we contract the edge $(d,c)$ in the tree $F$), and obtain $H_1,H_2,H_3$ from $H_1',H_2',H_3'$ by simply uncontracting any contracted vertices contained in $H_1',H_2',H_3'$.
 We remark that, for technical reasons, this mechanism is implemented in a slightly different way in the actual proof.

 Finally, a particularly difficult case occurs if $G[V_d,V_c]$ is isomorphic to $2K_{2,h}$, i.e., it is the disjoint union of two copies of $K_{2,h}$, where $|V_c| = 4$ and $h \geq 3$ (see Figure \ref{fig:find-subgraphs-b}).
 Here, it is not possible to contract the two components since the corresponding color class in the contracted graph would only contain two elements (which violates Requirement \ref{item:disjoint-trees-input-1}).
 Also, it may happen that $V(H_1',H_2',H_3') \cap V_d$ is a subset of one of the two components of $G[V_d,V_c]$ (which prevents us from using the first mechanism).
 To resolve this problem, we introduce a partition $\CP_d$ which propagates the additional requirement that both connected components must be covered by $V(H_1',H_2',H_3')$ up the tree $F$ (see Property \ref{item:disjoint-trees-output-4}), which allows us to employ the first mechanism again.
 Unfortunately, this additional requirement introduces further challenges for other cases.
 To compensate for this, we exploit the fact that we can potentially use vertices from $V_c$ to connect parts from $H_r$ that can not be connected in another way.
 To be more precise, if $V(H_r') \cap V_d$ were to contain vertices $v_r,v_r'$ from the same connected component of $G[V_d,V_c]$, it would be acceptable if these vertices were not connected in $H_r'$ since, as soon as we add some vertex $w_r \in V_c$, this vertex will establish a connection between $v_r$ and $v_r'$.
 This simple idea is captured by Requirement \ref{item:disjoint-trees-input-4} and Property \ref{item:disjoint-trees-output-5}.
 If we construct $H_r'$ in such a way that $V(H_r') \cap V_d = \{v_r\}$, then there is a partition $\CQ_d^{v_1,v_2,v_3} = (Q_1,Q_2,Q_3)  \preceq \CP_d$ which has the property that all vertices within one block may be connected by a vertex from the class $V_c$.
 Hence, it is sufficient if $H_r'$ is connected after adding all the vertices from $Q_r$ to $H_r'$ and turning $Q_r$ into a clique.
 Now, this is precisely what Property \ref{item:disjoint-trees-output-5} guarantees.
 
 \medskip
 Now, let us turn to the inductive proof of the above statement.
 The base case $|C_V| = 1$ is trivial.
 Each of the graphs $H_r$, $r \in \{1,2,3\}$, consists of a single vertex and one can clearly ensure that all properties are satisfied.
 
 For the inductive step we distinguish several cases.
 Let $c$ be a leaf of $F$ and let $d$ be its parent.
 We say that the edge $(d,c)$ \emph{allows expansion} if
 \begin{enumerate}
  \item\label{item:allow-expansion-1} $|N_G(U) \cap V_c| \geq |U|$ for all $U \subseteq V_d$ for which $|U| \leq 3$, and
  \item\label{item:allow-expansion-2} if $c \in C_R$ then $N_G(v) \cap P_1^c \neq \emptyset$ as well as $N_G(v) \cap P_2^c \neq \emptyset$ for all $v \in V_d$.
 \end{enumerate}
 
 \begin{claim}
  \label{claim:expand-trees}
  Suppose $(d,c) \in E(F)$ allows expansion.
  Let $v_1,v_2,v_3 \in V_{d}$ be distinct vertices.
  Then there are distinct $w_1,w_2,w_3 \in V_{c}$ such that $v_rw_r \in E(G)$ for all $r \in \{1,2,3\}$.
  Moreover, if $c \in C_R$ then $|\{w_1,w_2,w_3\} \cap P_1^c| \leq 2$ and $|\{w_1,w_2,w_3\} \cap P_2^c| \leq 2$.
 \end{claim}
 \begin{claimproof}
  Consider the bipartite graph $B \coloneqq G[\{v_1,v_2,v_3\},V_c]$.
  Since $(d,c)$ allows expansion the graph $B$ satisfies the requirements of Hall's Marriage Theorem.
  Hence, there are $w_1,w_2,w_3 \in V_c$ such that $v_rw_r \in E(G)$ for all $r \in \{1,2,3\}$.
  
  So it only remains to ensure the second property.
  Suppose that $c \in C_R$ and $w_1,w_2,w_3$ are in the same block of $\CP_c$ (otherwise there is nothing to be done).
  Without loss of generality suppose that $\{w_1,w_2,w_3\} \subseteq P_1^c$.
  Then one can simply replace $w_1$ by an arbitrary vertex from $N_G(v_1) \cap P_2^c$.
  Observe that this set can not be empty since $(d,c)$ allows expansion.
 \end{claimproof}
 
 First, suppose there is a leaf $c$, and its parent $d$, such that $(d,c)$ allows expansion.
 Let $G' \coloneqq G - V_c$ and $F' \coloneqq F - c$.
 Also, let $C_R' \coloneqq C_R \setminus \{c\}$.
 Clearly,  the input $(G',F',(\CP_c)_{c \in C_R'},(f_c)_{c \in C_R'})$ satisfies the Requirements \ref{item:disjoint-trees-input-1} - \ref{item:disjoint-trees-input-4}.
 By the induction hypothesis, there exists subgraphs $H_1',H_2',H_3' \subseteq G'$ satisfying \ref{item:disjoint-trees-output-1} - \ref{item:disjoint-trees-output-5}.
 We pick arbitrary elements $v_r \in V(H_r') \cap V_d$, $r \in \{1,2,3\}$ (these sets are non-empty by Property \ref{item:disjoint-trees-output-2}).
 Let $w_1,w_2,w_3$ be the vertices provided by Claim \ref{claim:expand-trees}.
 We define the graphs $H_r$, $r \in \{1,2,3\}$, via $V(H_r) \coloneqq V(H_r') \cup \{w_r\}$ and $E(H_r) \coloneqq E(H_r') \cup \{v_rw_r\}$ (see Figure \ref{fig:find-subgraphs-a}).
 It is easy to verify that Properties \ref{item:disjoint-trees-output-1} - \ref{item:disjoint-trees-output-5} are satisfied.
 Observe that, if $c \in C_R$, the graphs are connected even without the additional vertices and edges coming from the partition $f_c(w_1,w_2,w_3)$.
 
 \medskip
 
 Next, consider a color $d \in V(F)$ such that all children of $d$ are leaves in $F$.
 Let $c_1,\dots,c_\ell$ be the children of $d$.
 We assume that $(d,c_i)$ does not allow expansion for all $i \in [\ell]$ (otherwise the previous case is applied).
 
 We start by defining a sequence of $\chi$-definable partitions $\CA_1,\dots,\CA_\ell$ of the set $V_d$.
 Fix some $i \in [\ell]$.
 Since $(d,c_i)$ does not allow expansion, Item \ref{item:allow-expansion-1} or \ref{item:allow-expansion-2} is violated.
 First suppose Item \ref{item:allow-expansion-2} is violated.
 This means that $c_i \in C_R$.
 Let $A_1^i \coloneqq \{v \in V_d \mid N_G(v) \cap V_{c_i} \subseteq P_1^{c_i}\}$ and $A_2^i \coloneqq \{v \in V_d \mid N_G(v) \cap V_{c_i} \subseteq P_2^{c_i}\}$.
 Since Item \ref{item:allow-expansion-2} is violated, we get that $A_1^i \cup A_2^i \neq \emptyset$.
 Since $\chi$ is $2$-stable and $\CP_{c_i}$ is $\chi$-definable, we conclude that $A_1^i \cup A_2^i = V_d$, i.e., $\CA_i \coloneqq \{A_1^i,A_2^i\}$ is a partition of $V_d$.
 Moreover, $|A_1^i| = |A_2^i|$ and $\CA_i$ is $\chi$-definable.
 
 So suppose Item \ref{item:allow-expansion-2} is not violated, meaning Item \ref{item:allow-expansion-1} is violated.
 So there is some set $U_i \subseteq V_d$ such that $|U_i| \leq 3$ and $|N_G(U_i) \cap V_{c_i}| < |U_i|$.
 If $|N_G(v) \cap V_{c_i}| = 1$ for all $v \in V_d$ then $G[V_d,V_c]$ is isomorphic to the disjoint union of $\ell$ copies of $K_{1,h}$ for some $h \geq 2$ and $\ell \geq 3$.
 Otherwise $|N_G(v) \cap V_{c_i}| \geq 2$ for all $v \in V_d$.
 This implies that $|U_i| = 3$ and $|N_G(v) \cap V_{c_i}| = 2$ for all $v \in V_d$.
 Moreover, $N_G(v) \cap V_{c_i} = N_G(v') \cap V_{c_i}$ for all $v,v' \in U_i$.
 We say that $v,v' \in V_d$ are \emph{$c_i$-twins} if $N_G(v) \cap V_{c_i} = N_G(v') \cap V_{c_i}$.
 Let $A_1^i,\dots,A_{q_i}^i$ denote the equivalence classes of $c_i$-twins, $i \in [\ell]$.
 In both cases, $q_i \geq 2$ since $|V_c| \geq 3$.
 Since $\chi$ is $2$-stable we have that $|A_j^i| = |A_{j'}^i|$ for all $j,j' \in [q_i]$.
 Moreover, the partition $\CA_i \coloneqq \{A_1^i,\dots,A_{q_i}^i\}$ is $\chi$-definable.
 This completes the description of the partitions $\CA_1,\dots,\CA_\ell$.
 
 Let $q_i \coloneqq |\CA_i|$ denote the number of blocks of $\CA_i$.
 Without loss of generality assume that $q_1 \leq q_i$ for all $i \in [\ell]$.
 For ease of notation, define $c \coloneqq c_1$, $\CA \coloneqq \CA_1$ and $q \coloneqq q_1$.
 Recall that $q \geq 2$.
 
 \begin{figure}
  \centering
  \begin{tikzpicture}

   \draw[line width = 1.6pt, color = mTurquoise] (0,3) ellipse (2.75cm and 0.6cm);

   \draw[line width = 1.6pt, color = mRed] (-3,0) ellipse (2.0cm and 0.6cm);
   \draw[line width = 1.6pt, color = mBlue] (3,0) ellipse (2.0cm and 0.6cm);

   \node at (-0.6,0) {$c_1$};
   \node at (5.4,0) {$c_2$};
   \node at (3.15,3) {$d$};

   \node[vertex, fill = mTurquoise] (t1) at (-2.25,3) {};
   \node[vertex, fill = mTurquoise] (t2) at (-1.75,3) {};
   \node[vertex, fill = mTurquoise] (t3) at (-1.25,3) {};
   \node[vertex, fill = mTurquoise] (t4) at (-0.5,3) {};
   \node[vertex, fill = mTurquoise] (t5) at (0.0,3) {};
   \node[vertex, fill = mTurquoise] (t6) at (0.5,3) {};
   \node[vertex, fill = mTurquoise] (t7) at (1.25,3) {};
   \node[vertex, fill = mTurquoise] (t8) at (1.75,3) {};
   \node[vertex, fill = mTurquoise] (t9) at (2.25,3) {};

   \node[vertex, fill = mRed] (r1) at (-4,0) {};
   \node[vertex, fill = mRed] (r2) at (-3,0) {};
   \node[vertex, fill = mRed] (r3) at (-2,0) {};

   \node[vertex, fill = mBlue] (b1) at (2,0) {};
   \node[vertex, fill = mBlue] (b2) at (3,0) {};
   \node[vertex, fill = mBlue] (b3) at (4,0) {};

   \foreach \v/\w in {t1/r1,t2/r1,t3/r1,t4/r2,t5/r2,t6/r2,t7/r3,t8/r3,t9/r3,
                      t1/b1,t2/b2,t3/b3,t4/b1,t5/b2,t6/b3,t7/b1,t8/b2,t9/b3}{
    \draw[thick] (\v) edge (\w);
   }

   \begin{pgfonlayer}{background}
    \draw[mGray!60, fill = mGray!60] (-1.75,3) ellipse (0.8cm and 0.3cm);
    \draw[mGray!60, fill = mGray!60] (0,3) ellipse (0.8cm and 0.3cm);
    \draw[mGray!60, fill = mGray!60] (1.75,3) ellipse (0.8cm and 0.3cm);

    \foreach \v/\w in {t1/r1,t2/r1,t3/r1,t1/b1}{
     \draw[line width = 6.0pt, mOrange!80] (\v.center) edge (\w.center);
    }

    \foreach \v/\w in {t4/r2,t5/r2,t6/r2,t5/b2}{
     \draw[line width = 6.0pt, mViolet!80] (\v.center) edge (\w.center);
    }

    \foreach \v/\w in {t7/r3,t8/r3,t9/r3,t9/b3}{
     \draw[line width = 6.0pt, mDarkBlue!80] (\v.center) edge (\w.center);
    }
   \end{pgfonlayer}
  \end{tikzpicture}
  \caption{The figure depicts the case $q \geq 3$ in the proof of Lemma \ref{la:disjoint-trees}.
   The partition $\CA$ is shown in gray and parts of the sets $E(H_1),E(H_2),E(H_3)$ are marked in orange, violet and blue, respectively.}
  \label{fig:c-twins}
 \end{figure}

 We distinguish two cases.
 First suppose that $q \geq 3$ (see Figure \ref{fig:c-twins}).
 Let $F' \coloneqq F - \{c_i \mid i \in [\ell]\}$.
 Observe that $d$ is a leaf of $F'$.
 Also define
 \[G' \coloneqq \left(G - \bigcup_{i \in [\ell]} V_{c_i}\right)/\CA,\]
 i.e., $G'$ is the graph obtained from $G$ by deleting all vertices from $V_{c_i}$, $i \in [\ell]$, and contracting the sets $A_1^1,\dots,A_q^1$ to single vertices.
 Let $C_R' \coloneqq C_R \setminus \{c_i \mid i \in [\ell]\}$.
 Using Lemma \ref{la:factor-graph-2-wl} and $q \geq 3$, the input $(G',F',(\CP_c)_{c \in C_R'},(f_c)_{c \in C_R'})$ satisfies the Requirements \ref{item:disjoint-trees-input-1} - \ref{item:disjoint-trees-input-4}.
 By the induction hypothesis, there are subgraphs $H_1',H_2',H_3' \subseteq G'$ satisfying \ref{item:disjoint-trees-output-1} - \ref{item:disjoint-trees-output-5}.
 Let $\{A_r\} = V(H_r') \cap V_d$, $r \in \{1,2,3\}$ (recall that $|V(H_r') \cap V_d| = 1$ by Property \ref{item:disjoint-trees-output-3}).
 
 \begin{claim}
  \label{claim:expand-modulo-twins}
  Let $i \in [\ell]$ and $v_1,v_2,v_3 \in V_d$ such that $v_r$ and $v_{r'}$ are not $c_i$-twins for all distinct $r,r' \in \{1,2,3\}$.
  Then there are distinct vertices $w_1,w_2,w_3 \in V_{c_i}$ such that $v_rw_r \in E(G)$ for all $r \in \{1,2,3\}$.
  Moreover, if $c_i \in C_R$, then $|\{w_1,w_2,w_3\} \cap P_1^{c_i}| \leq 2$ and $|\{w_1,w_2,w_3\} \cap P_2^{c_i}| \leq 2$.
 \end{claim}
 \begin{claimproof}
  Consider the bipartite graph $B \coloneqq G[\{v_1,v_2,v_3\},V_{c_i}]$.
  Since $v_1,v_2,v_3$ are pairwise no $c_i$-twins, the graph $B$ satisfies the requirements of Hall's Marriage Theorem.
  Hence, there are $w_1,w_2,w_3 \in V_{c_i}$ such that $v_rw_r \in E(G)$ for all $r \in \{1,2,3\}$.
  
  So it only remains to ensure the second property.
  Suppose that $c_i \in C_R$ and $w_1,w_2,w_3$ are in the same block of $\CP_{c_i}$ (otherwise there is nothing to be done).
  Without loss of generality suppose that $\{w_1,w_2,w_3\} \subseteq P_1^{c_i}$.
  Then one can simply replace $w_1$ by an arbitrary vertex from $N_G(v_1) \cap P_2^{c_i}$.
  Observe that this set can not be empty since $(d,c_i)$ satisfies Item \ref{item:allow-expansion-2}.
  Indeed, if it would not satisfy Item \ref{item:allow-expansion-2}, then $q_i = 2$ by construction.
  But $q_i \geq q \geq 3$.
 \end{claimproof}
 
 Now, we construct $H_r$ from $H_r'$ as follows.
 First, we uncontract the set $A_r$, i.e., we remove vertex $A_r$ from $H_r'$ and add all vertices in $A_r$.
 Let $v_r \in A_r$ be an arbitrary vertex, $r \in \{1,2,3\}$.
 By definition, $v_1, v_2$ and $v_3$ are pairwise not $c$-twins.
 Hence, by Claim \ref{claim:expand-modulo-twins}, there are distinct $w_1,w_2,w_3 \in V_{c}$ such that $v_rw_r \in E(G)$ for all $r \in \{1,2,3\}$.
 Since all vertices in $A_r$ are $c$-twins by definition, it holds that $A_r \subseteq N_G(w_r)$ for $r \in \{1,2,3\}$.
 We add $w_r$ as well as all edges $w_rv_r'$, $v_r' \in A_r$, to the graph $H_r$.
 
 Now, it only remains to cover the classes $V_{c_i}$, $2 \leq i \leq \ell$.
 Fix some $2 \leq i \leq \ell$.
 By Lemma \ref{la:partition-overlap}, there are $v_1^i \in A_1$, $v_2^i \in A_2$ and $v_3^i \in A_3$ that are pairwise not $c_i$-twins.
 Thus, by Claim \ref{claim:expand-modulo-twins}, there are distinct $w_1^i, w_2^i,w_3^i \in V_{c_i}$ such that $v_r^iw_r^i \in E(G)$ for all $r \in \{1,2,3\}$.
 We add these vertices and edges to the graph $H_r$.
 This completes the description of $H_r$.
 
 It is easy to see that Properties \ref{item:disjoint-trees-output-1} - \ref{item:disjoint-trees-output-3} are satisfied.
 Property \ref{item:disjoint-trees-output-4} is satisfied if $c_i \in C_R$, $i \in [\ell]$, by Claim \ref{claim:expand-modulo-twins}.
 For colors $c \in C_R \setminus \{c_i \mid i \in [\ell]\}$, Property \ref{item:disjoint-trees-output-4} follows directly from the induction hypothesis.
 Finally, for Property \ref{item:disjoint-trees-output-5}, observe that $H_r[V_d \cup \bigcup_{i \in [\ell]}V_{c_i}]$ is connected for every $r \in \{1,2,3\}$.
 This follows from the fact that $A_r \subseteq N_G(w_r)$.
 
 \medskip

 \begin{figure}
  \centering
  \begin{tikzpicture}
   \draw[line width = 3.2pt, color = mGray] (0,-0.6) -- (0,0.6);
   \draw[line width = 3.2pt, color = mGray] (0,2.4) -- (0,3.6);

   \draw[line width = 1.6pt, color = mTurquoise] (0,3) ellipse (3.2cm and 0.6cm);
   \draw[line width = 1.6pt, color = mRed] (0,0) ellipse (3.2cm and 0.6cm);

   \node at (3.6,0) {$c$};
   \node at (3.6,3) {$d$};

   \node at (-1.6,-0.8) {$P_1^c$};
   \node at (1.6,-0.8) {$P_2^c$};

   \node at (-1.6,3.8) {$A_1^1$};
   \node at (1.6,3.8) {$A_2^1$};

   \node[vertex, fill = mTurquoise] (t1) at (-2.8,3) {};
   \node[vertex, fill = mTurquoise] (t2) at (-2.0,3) {};
   \node[vertex, fill = mTurquoise] (t3) at (-1.2,3) {};
   \node[vertex, fill = mTurquoise] (t4) at (-0.4,3) {};
   \node[vertex, fill = mTurquoise] (t5) at (0.4,3) {};
   \node[vertex, fill = mTurquoise] (t6) at (1.2,3) {};
   \node[vertex, fill = mTurquoise] (t7) at (2.0,3) {};
   \node[vertex, fill = mTurquoise] (t8) at (2.8,3) {};

   \node[vertex, fill = mRed] (r1) at (-2.4,0) {};
   \node[vertex, fill = mRed] (r2) at (-0.8,0) {};
   \node[vertex, fill = mRed] (r3) at (0.8,0) {};
   \node[vertex, fill = mRed] (r4) at (2.4,0) {};

   \foreach \v/\w in {t1/r1,t2/r1,t3/r2,t4/r2,t5/r3,t6/r3,t7/r4,t8/r4}{
    \draw[thick] (\v) edge (\w);
   }

   \begin{pgfonlayer}{background}
    \draw[mGray!60, fill = mGray!60] (-2.4,3) ellipse (0.6cm and 0.3cm);
    \draw[mGray!60, fill = mGray!60] (-0.8,3) ellipse (0.6cm and 0.3cm);
    \draw[mGray!60, fill = mGray!60] (0.8,3) ellipse (0.6cm and 0.3cm);
    \draw[mGray!60, fill = mGray!60] (2.4,3) ellipse (0.6cm and 0.3cm);
   \end{pgfonlayer}
  \end{tikzpicture}
  \caption{The figure depicts the case $q = 2$ and $|N_G(v) \cap V_c| = 1$ for all $v \in V_d$ in the proof of Lemma \ref{la:disjoint-trees}.
   The partition $\CA^*$ consists of four blocks shown in gray.
   We need to contract the four equivalence classes, since otherwise it may happen that $v_1$ and $v_2$ appear in the same block where $\{v_r\} = H_r' \cap V_d$ and $H_r'$ is the subgraphs obtained from the induction hypothesis, $r \in \{1,2,3\}$.
   In this case, only one of the two subgraphs $H_1'$ and $H_2'$ could be extended to color $c$.}
  \label{fig:partition-a-star}
 \end{figure}
 
 So it only remains to cover the case $q = 2$.
 At this point, we need to define another partition $\CA^*$ of $V_d$ to cover a certain special case.
 If $|N_G(v) \cap V_c| = 1$ for all $v \in V_d$ then we define $\CA^*$ to be the partition into the equivalence classes of the $c$-twin relation (see Figure \ref{fig:partition-a-star}).
 Observe that $|\CA^*| = |V_c|$ in this case.
 Otherwise, $\CA^* \coloneqq \{\{v\} \mid v \in V_d\}$ is defined to be the trivial partition.
 In this case $|\CA^*| = |V_d|$.
 Observe that, in both cases, $|\CA^*| \geq 3$, $\CA^*$ is $\chi$-definable and $\CA^* \preceq \CA$.
 In particular, since $\CA^* \preceq \CA$, we can interpret $\CA$ as a partition of $\CA^*$.
 
 Let $F' \coloneqq F - \{c_i \mid i \in [\ell]\}$.
 Observe that $d$ is a leaf of $F'$.
 Also define
 \[G' \coloneqq \left(G - \bigcup_{i \in [\ell]} V_{c_i}\right)/\CA^*,\]
 i.e., $G'$ is the graph obtained from $G$ by deleting all vertices from $V_{c_i}$, $i \in [\ell]$, and contracting all sets $A^* \in \CA^*$ to a single vertex.
 Here, $\CA^*$ forms the class of vertices of color $d$ in the graph $G'$.
 Moreover, let $C_R' \coloneqq (C_R \setminus \{c_i \mid i \in [\ell]\}) \cup \{d\}$.
 Also, let $\CP_d \coloneqq \CA$ (where $\CA$ is interpreted as a partition of $\CA^*$).
 Recall that $|\CA| = q = 2$ and $\CA$ is $\chi$-definable.
 
 It remains to define $f_d$.
 Observe that $f_d$ needs to be defined on triples of vertices $A_1^*,A_2^*,A_3^* \in \CA^*$.
 Let $A_1^*,A_2^*,A_3^* \in \CA^*$ be distinct such that $1 \leq |\{A_1^*,A_2^*,A_3^*\} \cap P_1^d| \leq 2$.
 To define $f_d(A_1^*,A_2^*,A_3^*)$ we distinguish two further subcases.
 
 First, suppose that $(d,c)$ satisfies Item \ref{item:allow-expansion-2}, meaning that it violates Item \ref{item:allow-expansion-1} (recall that $(d,c)$ does not allow expansion).
 Then $G[V_d,V_c]$ is isomorphic to a disjoint union of two copies of $K_{2,h}$ for some number $h \geq 3$, and $|V_c| = 4$ (see Figure \ref{fig:find-subgraphs}).
 In this case, we pick an arbitrary partition according to the requirements of Item \ref{item:disjoint-trees-input-4}.
 
 In the other case, $(d,c)$ violates Item \ref{item:allow-expansion-2}.
 In particular, $c \in C_R$.
 This means that there are only edges between $P_1^d$ and $P_1^c$ as well as between $P_2^d$ and $P_2^c$ (here, we interpret $\CP_d = \CA$ as a partition of $V_d$ in the graph $G$).
 By the definition of $\CA^*$, there are $w_1,w_2,w_3 \in V_c$ such that $E_G(A_r^*,\{w_r\}) \neq \emptyset$, and $|\{w_1,w_2,w_3\} \cap P_1^c| = |\{A_1^*,A_2^*,A_3^*\} \cap P_1^d|$ as well as $|\{w_1,w_2,w_3\} \cap P_2^c| = |\{A_1^*,A_2^*,A_3^*\} \cap P_2^d|$.
 Let $\CQ_c^{w_1,w_2,w_3} = f_c(w_1,w_2,w_3) = (Q_1^c,Q_2^c,Q_3^c)$.
 We define $\CQ_d^{A_1^*,A_2^*,A_3^*} = f_d(A_1^*,A_2^*,A_3^*) = (Q_1^d,Q_2^d,Q_3^d)$ in such a way that $A_r^* \in Q_r^d$, and every $A^* \in Q_r^d$ contains an element that has a neighbor in the set $Q_r^c$ in the graph $G$.
 Clearly, such a partition exists since $G[V_d,V_c]$ is biregular.
 Moreover, it is easy to see that $\CQ_d^{A_1^*,A_2^*,A_3^*} \preceq \CP_d$.
 
 We apply the induction hypothesis to $(G',F',(\CP_c)_{c \in C_R'},(f_c)_{c \in C_R'})$ using Lemma \ref{la:factor-graph-2-wl}.
 This results in subgraphs $H_1',H_2',H_3' \subseteq G'$ satisfying \ref{item:disjoint-trees-output-1} - \ref{item:disjoint-trees-output-5}.
 Let $A_r^*$ be the unique vertex in the set $V(H_r') \cap \CA^*$, $r \in \{1,2,3\}$ (recall that $|V(H_r') \cap \CA^*| = 1$ by Property \ref{item:disjoint-trees-output-3}).
 Observe that $1 \leq |\{A_1^*,A_2^*,A_3^*\} \cap P_1^d| \leq 2$ and $1 \leq |\{A_1^*,A_2^*,A_3^*\} \cap P_2^d| \leq 2$ by Property \ref{item:disjoint-trees-output-4}.
 Also let $f_d(A_1^*,A_2^*,A_3^*) = (Q_1^d,Q_2^d,Q_3^d)$.
 Note that $A_r^* \in Q_r^d$.
 
 We perform several steps to obtain $H_r$ from the graph $H_r'$.
 As a first step, all vertices from the set $\bigcup_{A^* \in Q_r^d}A^*$ are added to the vertex set of $H_r$.
 
 Next, consider the color $c = c_1$.
 We again need to distinguish between the two cases already used for the definition of $f_d$.
 First, suppose that $(d,c)$ satisfies Item \ref{item:allow-expansion-2}, meaning that it violates Item \ref{item:allow-expansion-1}.
 Recall that, in this case, $G[V_d,V_c]$ is isomorphic to a disjoint union of two copies of $K_{2,h}$ for some number $h \geq 3$, and $|V_c| = 4$.
 Also, $\CA^* = \{\{v\} \mid v \in V_d\}$, i.e., there is a natural one-to-one correspondence between $V_d$ and $\CA^*$.
 Let $v_r \in V_d$ such that $A_r^* = \{v_d\}$.
 Observe that $P_1^d$ and $P_2^d$ correspond to the two connected components of $G[V_d,V_c]$ (here, we again interpret $\CP_d = \CA$ as a partition of $V_d$).
 Hence, $v_1,v_2,v_3$ cover both connected components.
 It is easy to see that there are distinct $w_1,w_2,w_3 \in V_c$ such that $v_rw_r \in E(G)$ for all $r \in \{1,2,3\}$.
 We add vertex $w_r$ as well as the edge $v_rw_r$ to the graph $H_r$.
 Recall that $P_1^d$ and $P_2^d$ are defined as the equivalence classes of the $c$-twins relation, and that $\{Q_1^d,Q_2^d,Q_3^d\} \preceq \{P_1^d,P_2^d\}$.
 This implies that $\bigcup_{A^* \in Q_r^d}A^* \subseteq N_G(w_r)$, $r \in \{1,2,3\}$.
 In particular, $H_r[V_d \cup V_c]$ is connected.
 
 In the other case $(d,c)$ violates Item \ref{item:allow-expansion-2}.
 Let $w_1,w_2,w_3$ be the vertices used for the definition of $f_d(A_1^*,A_2^*,A_3^*) = (Q_1^d,Q_2^d,Q_3^d)$.
 Also suppose that $f_c(w_1,w_2,w_3) = (Q_1^c,Q_2^c,Q_3^c)$.
 Recall that $E_G(A_r^*,\{w_r\}) \neq \emptyset$, and $|\{w_1,w_2,w_3\} \cap P_1^c| = |\{A_1^*,A_2^*,A_3^*\} \cap P_1^d|$ as well as $|\{w_1,w_2,w_3\} \cap P_2^c| = |\{A_1^*,A_2^*,A_3^*\} \cap P_2^d|$.
 Again, we add vertex $w_r$ as well as all edges $v_rw_r$, $v_r \in A_r^*$, to the graph $H_r$ (observe that $w_rv_r \in E(G)$ for all $v_r \in A_r^*$ by the definition of the partition $\CA^*$).
 Observe that Property \ref{item:disjoint-trees-output-4} for color $d$ (ensured by the induction hypothesis), together with conditions above, implies Property \ref{item:disjoint-trees-output-4} for color $c$.
 Also observe that $H_r[V_d \cup V_c]$ is connected after adding all vertices from $Q_r^c$ and turning $Q_r^c$ into clique, as all vertices from $\bigcup_{A^* \in Q_r^d}A^*$ have a neighbor in the set $Q_r^c$ by definition.
 
 Now, we turn to the other leaves $c_2,\dots,c_\ell$.
 Observe that, up to this point and ignoring the leaves $c_2,\dots,c_\ell$, the graphs $H_1,H_2,H_3$ satisfy Properties \ref{item:disjoint-trees-output-1} - \ref{item:disjoint-trees-output-5}.
 So let $i \in \{2,\dots,\ell\}$ and consider the leaf $c_i$.
 
 \begin{claim}
  \label{claim:expand-with-two-twins}
  There are distinct vertices $w_1,w_2,w_3 \in V_{c_i}$ such that, for every $r \in \{1,2,3\}$, there is some $v_r' \in \bigcup_{A^* \in Q_r^d}A^*$ such that $v_r'w_r \in E(G)$.
  Moreover, if $c_i \in C_R$, then $|\{w_1,w_2,w_3\} \cap P_1^{c_i}| \leq 2$ and $|\{w_1,w_2,w_3\} \cap P_2^{c_i}| \leq 2$.
 \end{claim}
 \begin{claimproof}
  Let $\widehat{Q}_r \coloneqq \bigcup_{A^* \in Q_r^d}A^*$.
  Observe that $(\widehat{Q}_1,\widehat{Q}_2,\widehat{Q}_3)$ forms a partition of $V_d$ that refines $\CA$.
  Consider the bipartite graph $B = (\{\widehat{Q}_1,\widehat{Q}_2,\widehat{Q}_3\},V_{c_i},E_B)$ where
  \[E_B \coloneqq \{\widehat{Q}_rw \mid \exists v \in \widehat{Q}_r \colon vw \in E(G)\}.\]
  Suppose towards a contradiction that $B$ that does not satisfy Hall's condition, i.e., there is a set $U \subseteq \{\widehat{Q}_1,\widehat{Q}_2,\widehat{Q}_3\}$ such that $|N_B(U)| < |U|$.
  First observe that $N_B(\{\widehat{Q}_1,\widehat{Q}_2,\widehat{Q}_3\}) = V_{c_i}$ and $|V_{c_i}| \geq 3$ by Condition \ref{item:disjoint-trees-input-1}.
  So $U \neq \{\widehat{Q}_1,\widehat{Q}_2,\widehat{Q}_3\}$ which means that $|U| \leq 2$.
  Next, observe that there are no isolated vertices in $B$, i.e., $|N_B(U)| \geq 1$.
  Together, this means that $|U| = 2$ and $|N_B(U)| = 1$ (since $|N_B(U)| < |U|$ by assumption).
  Without loss of generality suppose that $U = \{\widehat{Q}_1,\widehat{Q}_2\}$ and pick $w \in V_{c_i}$ such that $N_B(U) = \{w\}$.
  Then $N_G(v) \cap V_{c_i} \subseteq \{w\}$ for all $v \in \widehat{Q}_1 \cup \widehat{Q}_2$.
  Since $G[V_d,V_{c_i}]$ is biregular and contains at least one edge, it follows that $|N_G(v) \cap V_{c_i}| = 1$ for every $v \in V_d$.
  Hence, $G[V_d,V_{c_i}]$ is isomorphic to $\ell$ disjoint copies of $K_{1,h}$ for some $\ell,h \geq 1$.
  Also, $|V_{c_i}| = \ell$ and $|V_d| = \ell \cdot h$.
  Now, observe that $N_G(v) \cap V_{c_i} = \{w\}$ for all $v \in \widehat{Q}_1 \cup \widehat{Q}_2$.
  This means that $h \geq |\widehat{Q}_1 \cup \widehat{Q}_2| = |\widehat{Q}_1| + |\widehat{Q}_2|$.
  Since $\{\widehat{Q}_1,\widehat{Q}_2,\widehat{Q}_3\} \preceq \CA$ and $\CA$ forms an equipartition into two blocks, we get that $|\widehat{Q}_1| + |\widehat{Q}_2| \geq \frac{1}{2}|V_d|$.
  Together, this implies that $h \geq \frac{1}{2}|V_d|$ and thus, $\ell \leq 2$.
  But this is a contradiction, since $\ell = |V_{c_i}| \geq 3$ by Condition \ref{item:disjoint-trees-input-1}.
  
  So $B$ satisfies Hall's condition which, by Hall's Marriage Theorem, means that $B$ contains a perfect matching $\{\widehat{Q}_1w_1,\widehat{Q}_2w_2,\widehat{Q}_3w_3\}$.
  
  Next, suppose $c_i \in C_R$ and $w_1,w_2,w_3$ are in the same block of $\CP_{c_i}$ (otherwise there is nothing to be done).
  Without loss of generality suppose that $\{w_1,w_2,w_3\} \subseteq P_1^{c_i}$.
  Since $N_B(\{\widehat{Q}_1,\widehat{Q}_2,\widehat{Q}_3\}) = V_{c_i}$ there is some $r \in \{1,2,3\}$ such that $N_B(\widehat{Q}_r) \cap P_2^{c_i} \neq \emptyset$.
  Hence, one can simply replace $w_r$ by an arbitrary element $w_r' \in N_B(\widehat{Q}_r) \cap P_2^{c_i}$.
  This ensures that $|\{w_1,w_2,w_3\} \cap P_1^{c_i}| \leq 2$ and $|\{w_1,w_2,w_3\} \cap P_2^{c_i}| \leq 2$.
  
  To complete the proof, for every $r \in \{1,2,3\}$, we pick some element $v_r' \in \widehat{Q}_r$ such that $v_r'w_r \in E(G)$ (the existence of such elements follows directly from the definition of $E_B$).
 \end{claimproof}
 
 Let $w_1,w_2,w_3$ and $v_1',v_2',v_3'$ be the vertices from Claim \ref{claim:expand-with-two-twins}.
 We add vertex $w_r$ as well as the edge $v_r'w_r$ to the graph $H_r$.
 
 Applying this procedure for all $i \in \{2,\dots,\ell\}$ completes the description of the graphs $H_1,H_2,H_3$.
 Building on the comments above, it is easy to see that the graphs satisfy Properties \ref{item:disjoint-trees-output-1} - \ref{item:disjoint-trees-output-5}.
 This completes the final case.
\end{proof}

Building on the last lemma, we can now prove Lemma \ref{la:small-separator}.

\begin{proof}[Proof of Lemma \ref{la:small-separator}]
 Let $\chi$ be a $2$-stable coloring such that $|[v]_\chi| = 1$ for all $v \in D$ and $|[w]_\chi| \geq 3$ for all $w \in V(G) \setminus D$.
 Suppose towards a contradiction that $|N_G(Z)| \geq h$, and pick $v_1,\dots,v_h \in N_G(Z)$ to be distinct vertices.
 Let $C \coloneqq \{\chi(v,v) \mid v \in Z\}$ be the set of vertex colors appearing in the set $Z$.
 Note that $(G[[\chi]])[C]$ is connected, and $|V_c| \geq 3$ for all $c \in C$.
 Let $W \coloneqq \{w \in V(G) \mid \chi(w,w) \in C\}$.
 Observe that $W \cap D = \emptyset$.
 By Lemma \ref{la:disjoint-trees}, there are connected, vertex-disjoint subgraphs $H_1,H_2,H_3 \subseteq G[W]$ such that $V(H_r) \cap V_c \neq \emptyset$ for all $r \in \{1,2,3\}$ and $c \in C$.
 
 Now let $i \in [h]$.
 Since $v_i \in N_G(Z)$ there is some vertex $w_i \in Z \subseteq W$ such that $v_iw_i \in E(G)$.
 Let $c_i \coloneqq \chi(w_i,w_i)$.
 Observe that $c_i \in C$.
 Also, $V_{c_i} \subseteq N_G(v_i)$ since $|[v_i]_\chi| = 1$ and $\chi$ is $2$-stable.
 This implies that $N_G(v_i) \cap V(H_r) \neq \emptyset$ for all $r \in \{1,2,3\}$, because $V(H_r) \cap V_{c_i} \neq \emptyset$.
 But this results in a minor isomorphic to $K_{3,h}$ with vertices $v_1,\dots,v_h$ on the right side, and vertices $V(H_1),V(H_2),V(H_3)$ on the left side.
\end{proof}

Besides Lemma \ref{la:small-separator}, we also require a second tool which is used to define the extension sets $\gamma(X_i)$ which we needed to ensure the recursive algorithm makes progress.

\begin{lemma}
 \label{la:find-small-color-class}
 Let $G$ be a graph that excludes $K_{3,h}$ as a minor.
 Also let $X \subseteq V(G)$ and define $D \coloneqq \cl_{h-1,1}^G(X)$.
 Let $Z$ be a connected component of $G - D$.
 Then $|N_G(Z)| < 3$.
\end{lemma}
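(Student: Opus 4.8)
The plan is to imitate the proof of Lemma~\ref{la:small-separator}: assuming $|N_G(Z)| \geq 3$, we produce a $K_{3,h}$ minor of $G$, the difference being that we now exploit the larger colour classes (of size at least $h$) in place of a larger number of disjoint subgraphs. Fix a Color-Refinement-stable (i.e.\ $1$-stable) colouring $\chi$ with $|[v]_\chi| = 1$ for all $v \in D$ and $|[w]_\chi| \geq h$ for all $w \in V(G) \setminus D$; such a $\chi$ exists by the definition of $D = \cl_{h-1,1}^G(X)$. Suppose towards a contradiction that there are distinct $v_1,v_2,v_3 \in N_G(Z)$. Since $Z$ is a connected component of $G - D$, we have $N_G(Z) \subseteq D$, so each $v_i$ forms its own colour class. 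For each $i\in\{1,2,3\}$ pick $w_i \in Z$ with $v_iw_i \in E(G)$ and set $c_i \coloneqq \chi(w_i,w_i)$; since $w_i \notin D$, we have $|V_{c_i}| \geq h$. As $\chi$ is $1$-stable and $[v_i]_\chi = \{v_i\}$, every vertex of colour $c_i$ has the same number of neighbours of colour $\chi(v_i,v_i)$; since $w_i$ has at least one such neighbour and $|[v_i]_\chi| = 1$, this number equals $1$ and the neighbour is always $v_i$. Hence $V_{c_i} \subseteq N_G(v_i)$.

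Let $C \coloneqq \{\chi(z,z)\mid z\in Z\}$ and $W \coloneqq \bigcup_{c\in C}V_c$. Observe that $v_1,v_2,v_3\notin W$, because every class $V_c$ with $c\in C$ has size at least $h\geq 2$ whereas $[v_i]_\chi$ is a singleton. So it suffices to find pairwise vertex-disjoint connected subgraphs $H_1,\dots,H_h$ of $G[W]$ with $V(H_r)\cap V_{c_i}\neq\emptyset$ for all $r\in[h]$ and $i\in\{1,2,3\}$. Indeed, contracting each $H_r$ to a single vertex then gives a $K_{3,h}$ minor of $G$ with the vertices $v_1,v_2,v_3$ on the side of size three: each $H_r$ is adjacent to each $v_i$ since $H_r$ meets $V_{c_i}\subseteq N_G(v_i)$, and $v_1,v_2,v_3$ themselves lie outside $W$ and are pairwise distinct. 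This contradicts the assumption that $G$ excludes $K_{3,h}$ as a minor.

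It remains to construct $H_1,\dots,H_h$. Since $Z$ is connected, the colours appearing in $Z$ form a connected subgraph of $G[[\chi]]$, i.e.\ $(G[[\chi]])[C]$ is connected, and $|V_c| \geq h$ for every $c \in C$. I would now run the induction behind Lemma~\ref{la:disjoint-trees}, with the number $3$ of subgraphs replaced by $h$ and the hypothesis $|V_c| \geq 3$ replaced by $|V_c| \geq h$: fix a spanning tree of $(G[[\chi]])[C]$ — one may even take a Steiner tree for $\{c_1,c_2,c_3\}$ — root it, and build the subgraphs from the leaves towards the root. The basic local tool is that, between two adjacent colour classes $V_a$ and $V_b$ with $a,b\in C$, the bipartite graph $G[V_a,V_b]$ is biregular and hence contains a matching of size $\min(|V_a|,|V_b|)\geq h$; a uniform-flow argument (routing $\min_{c\in C}|V_c|\geq h$ units of flow through the tree) even shows that there are always $h$ vertex-disjoint paths between any two of the classes.

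The main obstacle — just as in Lemma~\ref{la:disjoint-trees} — is to coordinate these choices so that the $h$ subgraphs stay pairwise disjoint while each of them still reaches $V_{c_1}$, $V_{c_2}$ and $V_{c_3}$. When the $h$ subgraphs are to be extended across an edge $(d,c)$ of the tree, the $h$ vertices already selected in $V_d$ may be clumped into too few twin classes of $V_d$ (with respect to $c$) to be matched into $V_c$; the requirement that sufficiently many twin classes be hit then has to be propagated up the tree, using partitions and ``slack'' functions in the role of $\CP_c$ and $f_c$ from Lemma~\ref{la:disjoint-trees}. The genuinely new point compared with that lemma is that, for a $1$-stable colouring, the twin partitions of a colour class need not be equipartitions, so this bookkeeping must be carried out somewhat more carefully; I expect this to be where the bulk of the work lies.
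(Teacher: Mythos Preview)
Your proposal has a genuine gap: it reduces the lemma to finding $h$ pairwise vertex-disjoint connected subgraphs $H_1,\dots,H_h$ of $G[W]$, each meeting $V_{c_1}$, $V_{c_2}$ and $V_{c_3}$, and then stops. You say you would ``run the induction behind Lemma~\ref{la:disjoint-trees}'' with $3$ replaced by $h$, but you yourself flag that for a $1$-stable colouring the twin partitions need not be equipartitions, so the case analysis of that lemma does not transfer. Whether such $H_1,\dots,H_h$ always exist is plausible but not established here, and the bookkeeping you allude to is not a routine matter; a closely related $\ell$-subgraph generalisation of Lemma~\ref{la:disjoint-trees} is in fact posed as an open question in the paper's conclusion.

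The paper's proof avoids all of this by reversing the roles of the two sides of $K_{3,h}$. Rather than putting $v_1,v_2,v_3$ alone on the small side and hunting for $h$ disjoint subgraphs on the large side, it builds \emph{three} branch sets $U_1,U_2,U_3$ on the small side and uses the vertices of a \emph{single} colour class for the large side. Concretely: form a Steiner tree on $\{c_1,c_2,c_3\}$ in the colour-class graph, let $c$ be its branching vertex, and set $U_i \coloneqq \{v_i\} \cup \bigcup_{c'} \chi^{-1}(c')$ where $c'$ ranges over the colours on the tree path from $c_i$ to $c$, excluding $c$ itself. Biregularity between adjacent colour classes makes each $G[U_i]$ connected and gives every vertex of $\chi^{-1}(c)$ a neighbour in each $U_i$; since $|\chi^{-1}(c)| \geq h$, contracting $U_1,U_2,U_3$ already yields a $K_{3,h}$ minor. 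The insight you are missing is that when every colour class outside $D$ has size at least $h$, one such class by itself furnishes the entire large side of the minor, so no disjoint-subgraph machinery is needed at all.
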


The lemma essentially follows from \cite[Lemma 6.2]{Neuen22b}.
For the sake of completeness and due to its simplicity, a complete proof is still given below.

\begin{proof}
 Let $\chi$ be a $1$-stable coloring such that $|[v]_\chi| = 1$ for all $v \in D$ and $|[w]_\chi| \geq h$ for all $w \in V(G) \setminus D$.
 Suppose towards a contradiction that $|N_G(Z)| \geq 3$, and pick $v_1,v_2,v_3 \in N_G(Z)$ to be distinct vertices.
 Let $C \coloneqq \{\chi(v) \mid v \in Z\}$, and define $H$ to be the graph with $V(H) \coloneqq C$ and
 \[E(H) \coloneqq \{c_1c_2 \mid \exists v_1 \in \chi^{-1}(c_1), v_2 \in \chi^{-1}(v_2) \colon v_1v_2 \in E(G)\}.\]
 Let $T$ be a spanning tree of $H$.
 Also, for each $i \in \{1,2,3\}$, fix a color $c_i \in C$  such that $N_G(v_i) \cap \chi^{-1}(c_i) \neq \emptyset$.
 Let $T'$ be the induced subtree obtained from $T$ by repeatedly removing all leaves distinct from $c_1,c_2,c_3$.
 Finally, let $T''$ be the tree obtained from $T'$ by adding three fresh vertices $v_1,v_2,v_3$ where $v_i$ is connected to $c_i$.
 Observe that $v_1,v_2,v_3$ are precisely the leaves of $T''$.
 Now, $T''$ contains a unique node $c$ of degree three (possibly $c = c_i$ for some $i \in \{1,2,3\}$).
 Observe that $|\chi^{-1}(c)| \geq h$.
 We define $C_i$ to be the set of all internal vertices which appear on the unique path from $v_i$ to $c$ in the tree $T''$.
 Finally, define $U_i \coloneqq \{v_i\} \cup \bigcup_{c' \in C_i} \chi^{-1}(c')$.
 
 Since $\chi$ is $1$-stable and $|[v_i]_{\chi}| = 1$ we get that $G[U_i]$ is connected for all $i \in \{1,2,3\}$.
 Also, $E_G(U_i,\{w\}) \neq \emptyset$ for all $w \in \chi^{-1}(c)$ and $i \in \{1,2,3\}$.
 But this provides a minor isomorphic to $K_{3,h}$ with vertices $U_1,U_2,U_3$ on the left side and the vertices from $\chi^{-1}(c)$ on the right side. 
\end{proof}

\section{A Decomposition Theorem}
\label{sec:decomposition}

In the following, we use the insights gained in the last section to prove a decomposition theorem for graphs that exclude $K_{3,h}$ as a minor.
In the remainder of this work, all tree decompositions are rooted, i.e., there is a designated root node and we generally assume all edges to be directed away from the root.

\begin{theorem}
 \label{thm:decomposition-into-2-2-bounded-parts}
 Suppose $h \geq 3$.
 Let $G$ be a $3$-connected graph, and suppose $S \subseteq V(G)$ such that
 \begin{enumerate}[label = (\Alph*)]
  \item $G - E(S,S)$ excludes $K_{3,h}$ as a minor,
  \item $3 \leq |S| \leq h$,
  \item $G - S$ is connected, and
  \item $S = N_G(V(G) \setminus S)$.
 \end{enumerate}
 Then there is a (rooted) tree decomposition $(T,\beta)$ of $G$, a function $\gamma\colon V(T) \rightarrow 2^{V(G)}$, and a vertex-coloring $\lambda$ such that
 \begin{enumerate}[label = (\Roman*)]
  \item\label{item:decomposition-output-1} $|V(T)| \leq 2 \cdot |V(G)|$,
  \item\label{item:decomposition-output-2} the adhesion width of $(T,\beta)$ is at most $h-1$,
  \item\label{item:decomposition-output-3} for every $t \in V(T)$ with children $t_1,\dots,t_\ell$, one of the following options holds:
  \begin{enumerate}[label = (\alph*)]
   \item $\beta(t) \cap \beta(t_i) \neq \beta(t) \cap \beta(t_j)$ for all distinct $i,j \in [\ell]$, or
   \item $\beta(t) = \beta(t) \cap \beta(t_i)$ for all $i \in [\ell]$,
  \end{enumerate}
  \item\label{item:decomposition-output-4} $S \subsetneq \gamma(r)$ where $r$ denotes the root of $T$,
  \item\label{item:decomposition-output-5} $|\gamma(t)| \leq h^4$ for every $t \in V(T)$,
  \item\label{item:decomposition-output-6} $\beta(t) \cap \beta(s) \subseteq \gamma(t) \subseteq \beta(t)$ for all $t \in V(T) \setminus \{r\}$, where $s$ denotes the parent of $t$, and
  \item\label{item:decomposition-output-7} $\beta(t) \subseteq \cl_{2,2}^{(G,\lambda)}(\gamma(t))$ for all $t \in V(T)$.
 \end{enumerate}
 Moreover, the decomposition $(T,\beta)$, the function $\gamma$, and the coloring $\lambda$ can be computed in polynomial time, and the output is isomorphism-invariant with respect to $(G,S,h)$.
\end{theorem}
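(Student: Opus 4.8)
I would build $(T,\beta,\gamma,\lambda)$ by a recursive construction, using induction on $|V(G)|-|S|$ (which is $\geq 1$ by (D)). Given $(G,S,h)$ satisfying (A)--(D), the first step is to fix a canonical \emph{extension} $\gamma_0$ of $S$. By \cite[Corollary 24]{Neuen20} the graph $G-E(S,S)$ is $(h-1,1)$-WL-bounded after individualizing three vertices; hence for every $3$-element subset $Y\subseteq S$, running Color Refinement on $G-E(S,S)$ with $Y$ individualized yields --- unless the coloring is already discrete --- a non-singleton color class of size at most $h-1$, and I add the lexicographically first such class to $\gamma_0$. I also put $S$ into $\gamma_0$ together with the lexicographically first vertex of $V(G)\setminus S$ (which exists by (D) and will serve as an \emph{anchor} inside the piece below). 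Since there are at most $\binom{|S|}{3}\leq h^3$ triples and $|S|\leq h$, this gives $S\subsetneq\gamma_0$ and $|\gamma_0|\leq h+1+h^3(h-1)\leq h^4$. Now set the root bag to $\beta(r):=D:=\cl_{2,2}^{G-E(S,S)}(\gamma_0)$ and $\gamma(r):=\gamma_0$. Since $S\subseteq\gamma_0$ is individualized, adding or deleting the edges of $E(S,S)$ does not change this closure, so $D=\cl_{2,2}^{G}(\gamma_0)$ as well, and by Lemma \ref{la:small-separator} applied to $G-E(S,S)$ (which excludes $K_{3,h}$ as a minor by (A)) every connected component $Z$ of $G-D$ satisfies $|N_G(Z)|=|N_{G-E(S,S)}(Z)|<h$, using $Z\cap S=\emptyset$.

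For the recursion I group the components of $G-D$ by their neighbourhood. For each set $S^{*}$ occurring as $S^{*}=N_G(Z)$ I create a child $u=u_{S^{*}}$ of $r$ with $\beta(u):=\gamma(u):=S^{*}$, and for every component $Z$ with $N_G(Z)=S^{*}$ I attach to $u$ the decomposition of $(G[Z\cup S^{*}],S^{*},h)$ obtained recursively. One checks that $(G[Z\cup S^{*}],S^{*},h)$ again satisfies (A)--(D): (A) because $G[Z\cup S^{*}]-E(S^{*},S^{*})$ is a subgraph of $G-E(S,S)$ (as $Z\cap S=\emptyset$, any $E(S,S)$-edge inside $Z\cup S^{*}$ would lie inside $S^{*}$); (B) because $3\leq|S^{*}|<h$, the lower bound holding since $S^{*}$ either separates $Z$ from a nonempty $V(G)\setminus(Z\cup S^{*})$ (so $|S^{*}|\geq 3$ by $3$-connectivity of $G$) or equals $D\supseteq\gamma_0\supsetneq S$; (C)--(D) because $S^{*}=N_{G[Z\cup S^{*}]}(Z)$ and $G[Z]$ is connected. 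The grouping is exactly what Property \ref{item:decomposition-output-3} needs: at $r$ the adhesions $\beta(r)\cap\beta(u_{S^{*}})=S^{*}$ are pairwise distinct (option (a)), and at each separator node $u$ every adhesion to a child equals $S^{*}=\beta(u)$ (option (b)). The recursion makes progress since the measure drops: for a child call $|V(G[Z\cup S^{*}])|-|S^{*}|=|Z|\leq |V(G)\setminus D|<|V(G)\setminus D|+(|D|-|S|)=|V(G)|-|S|$ (using $\gamma_0\subsetneq D$), and in the degenerate case $G[Z\cup S^{*}]=G$ one still has $|S^{*}|=|D|>|S|$; the base case $D=V(G)$ makes $r$ a leaf.

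It remains to define $\lambda$ and verify Property \ref{item:decomposition-output-7}, which is the technical heart. I let $\lambda(v)$ be an isomorphism-invariant encoding of the root-to-$b(v)$ path in $T$, where $b(v)$ is the node closest to $r$ with $v\in\beta(v)$: along this path I record, at each node, the stable coloring of its bag together with the separator set descended into (these separators consist of vertices that are singletons in the relevant closures, so they are canonically identifiable; isomorphism-invariance is inherited from $T$). Now fix a root-type node $t$ with local graph $G[W]$, so $\beta(t)=\cl_{2,2}^{G[W]-E(S^{*},S^{*})}(\gamma(t))=\cl_{2,2}^{G[W]}(\gamma(t))$ (again using $S^{*}\subseteq\gamma(t)$), where $S^{*}=N_G(W\setminus S^{*})\subseteq\gamma(t)$ and $\gamma(t)$ also contains an anchor $z_0\in W\setminus S^{*}$. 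I would show $\beta(t)\subseteq\cl_{2,2}^{(G,\lambda)}(\gamma(t))$ using the locality of the $2$-dimensional Weisfeiler-Leman algorithm across a fully individualized separator. Since $S^{*}$ and $z_0$ are individualized in $(G,\lambda)$ and $2$-WL recognizes the components of $G-S^{*}$, after stabilization (i) every color class meeting $W\setminus S^{*}$ is contained in $W\setminus S^{*}$ (a vertex with the same color as $v\in W\setminus S^{*}$ must carry the same color towards $z_0$, hence lies in $z_0$'s $S^{*}$-component, which is $W\setminus S^{*}$), and (ii) when $2$-WL on $(G,\lambda)$ with $\gamma(t)$ individualized updates the color of a pair inside $W$ it can recognize which iteration vertices lie in $W$, so the restriction to $W$ is $2$-WL-stable for $G[W]$. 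Together with (i) this shows that the $(2,2)$-stable coloring of $(G,\lambda)$ restricted to $W$ is a $(2,2)$-stable coloring of $G[W]$ refining the initial one, hence --- by minimality --- refines $\cl_{2,2}^{G[W]}$, so every vertex of $\beta(t)$ becomes a singleton within $W$; and by (i) and the individualization of $S^{*}$ a singleton within $W$ is a global singleton. I expect this localization argument --- together with the bookkeeping needed to make the single global coloring $\lambda$ cooperate with the nested local closures and with the $K_{3,h}$-exclusion that only holds after deleting the current $E(S,S)$ --- to be the main obstacle.

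Finally, the remaining properties are routine. Every root-type node's bag strictly contains $\gamma_0$, hence strictly contains the adhesion to its parent, so there are at most $|V(G)|$ root-type nodes; every separator node has at least one root-type child, so $|V(T)|\leq 2|V(G)|$, giving Property \ref{item:decomposition-output-1}. Every adhesion is contained in some $N_G(Z)$, of size $<h$, giving Property \ref{item:decomposition-output-2}. Property \ref{item:decomposition-output-4} is $S\subsetneq\gamma_0$, Property \ref{item:decomposition-output-5} is the bound $|\gamma(t)|\leq h^4$ shown above (separator nodes have $|\gamma(u)|=|S^{*}|<h$), and Property \ref{item:decomposition-output-6} is immediate from the construction of the two node types. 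For the ``moreover'' part, every step --- the Weisfeiler-Leman algorithm, computing $(2,2)$-closures (e.g.\ via Theorem \ref{thm:bounding-group-t-k-wl}), finding and grouping connected components, and the lexicographic choices --- runs in polynomial time, and there are at most $|V(G)|$ recursive calls, so the overall running time is polynomial; isomorphism-invariance with respect to $(G,S,h)$ follows because every choice made in the construction is canonical.
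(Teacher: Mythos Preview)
Your overall recursive strategy mirrors the paper's: extend $S$ to $\gamma(r)$ by adding, for each triple in $S$, a canonically chosen small color class (justified via Lemma~\ref{la:find-small-color-class}); take $\beta(r)=\cl_{2,2}(\gamma(r))$; recurse on the pieces cut off by the components of $G-\beta(r)$; and arrange the tree so that Property~\ref{item:decomposition-output-3} holds by grouping children with equal adhesion. Two points, however, are genuine gaps.

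\textbf{The anchor breaks isomorphism-invariance.} Picking ``the lexicographically first vertex of $V(G)\setminus S$'' presupposes an ordering on $V(G)$ that is not part of the input $(G,S,h)$; an isomorphism $(G,S)\to(G',S')$ need not respect it, so $\gamma_0$ is not invariant and your final claim that ``every choice made in the construction is canonical'' fails. (Choosing the lexicographically first \emph{color} is fine, since the paper fixes a global linear order on colors; choosing a \emph{vertex} this way is not.) The paper avoids any anchor: its argument for Property~\ref{item:decomposition-output-7} is much shorter than your localisation sketch and simply takes $\lambda(v)\coloneqq\{\dist_T(r,t)\mid v\in\beta(t)\}$, which at every level separates the current root bag from its complement and suffices to transfer the inclusion $\beta(t)\subseteq\cl_{2,2}^{(G_i,\lambda)}(\gamma(t))$ from the recursive subgraphs to $G$.

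\textbf{The recursive subgraph need not be $3$-connected.} You recurse on $G[Z\cup S^{*}]$ and verify (A)--(D), but the theorem also requires the ambient graph to be $3$-connected, and $G[Z\cup S^{*}]$ need not be: a vertex $w\in S^{*}$ may have a unique neighbour in $Z$ and no neighbour in $S^{*}$ inside $G[Z\cup S^{*}]$, so deleting that one neighbour already isolates $w$. One level deeper your own argument for $|S^{*}|\ge 3$, which invokes $3$-connectivity of the current graph, then no longer applies. The paper repairs this by recursing on $G_i\coloneqq G[Z_i\cup S_i]$ \emph{with $S_i$ turned into a clique}; this restores $3$-connectivity, while (A) still holds because $G_i-E(S_i,S_i)$ has no edges inside $S_i$ and is therefore a subgraph of $G-E(S,S)$.
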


\begin{proof}
 We give an inductive construction for the tree decomposition $(T,\beta)$ as well as the function $\gamma$ and the coloring $\lambda$.
 We start by arguing how to compute the set $\gamma(r)$.
 
 \begin{claim}
  Let $v_1,v_2,v_3 \in S$ be three distinct vertices, and define
  \[\chi\coloneqq \WL{1}{G,S,v_1,v_2,v_3}.\]
  Then there exists some $v \in V(G) \setminus S$ such that $|[v]_\chi| < h$.
 \end{claim}
 \begin{claimproof}
  Let $H \coloneqq (G - (S \setminus \{v_1,v_2,v_3\})) - E(\{v_1,v_2,v_3\},\{v_1,v_2,v_3\})$.
  It is easy to see that $\chi|_{V(H)}$ is $1$-stable for the graph $H$.
  Observe that $H - \{v_1,v_2,v_3\} = G - S$ is connected.
  Suppose there is no vertex $v \in V(G) \setminus S$ such that $|[v]_\chi| < h$.
  Then $\chi$ is $(h-1)$-CR-stable which implies that $\cl_{h-1,1}^G(v_1,v_2,v_3) = \{v_1,v_2,v_3\}$.
  On the other hand, $Z \coloneqq V(H) \setminus \{v_1,v_2,v_3\}$ induces a connected component of $H - \{v_1,v_2,v_3\}$, and $N_H(Z) = \{v_1,v_2,v_3\}$ since $S = N_G(V(G) \setminus S)$.
  But this contradicts Lemma \ref{la:find-small-color-class}.
 \end{claimproof}
 
 Let $v_1,v_2,v_3 \in S$ be distinct.
 We define $\chi[v_1,v_2,v_3] \coloneqq \WL{1}{G,S,v_1,v_2,v_3}$.
 Also, let $c[v_1,v_2,v_3]$ denote the unique color such that
 \begin{enumerate}
  \item $c[v_1,v_2,v_3] \notin \{\chi[v_1,v_2,v_3](v) \mid v \in S\}$, and
  \item $|(\chi[v_1,v_2,v_3])^{-1}(c[v_1,v_2,v_3])| \leq h-1$
 \end{enumerate}
 and which is minimal with respect to the linear order on the colors in the image of $\chi[v_1,v_2,v_3]$.
 Let $\gamma(v_1,v_2,v_3) \coloneqq (\chi[v_1,v_2,v_3])^{-1}(c[v_1,v_2,v_3])$.
 Observe that $\gamma(v_1,v_2,v_3)$ is defined in an isomorphism-invariant manner given $(G,S,h,v_1,v_2,v_3)$.
 Now, define
 \[\gamma(r) \coloneqq S \cup \bigcup_{v_1,v_2,v_3 \in S \text{ distinct}} \gamma(v_1,v_2,v_3).\]
 Clearly, $\gamma(r)$ is defined in an isomorphism-invariant manner given $(G,S,h)$.
 Moreover,
 \[|\gamma(r)| \leq |S| + |S|^3 \cdot (h-1) \leq |S|^3 \cdot h \leq h^4.\]
 Finally, define $\beta(r) \coloneqq \cl_{2,2}^G(\gamma(r))$.
 
 Let $Z_1,\dots,Z_\ell$ be the connected components of $G - \beta(r)$.
 Also, let $S_i \coloneqq N_G(Z_i)$ and $G_i$ be the graph obtained from $G[S_i \cup Z_i]$ by turning $S_i$ into a clique, $i \in [\ell]$.
 We have $|S_i| < h$ by Lemma \ref{la:small-separator}.
 Also, $|S_i| \geq 3$ and $G_i$ is $3$-connected since $G$ is $3$-connected.
 Clearly, $G_i - S_i$ is connected and $S_i = N_{G_i}(V(G_i) \setminus S_i)$.
 Finally, $G_i - E(S_i,S_i)$ excludes $K_{3,h}$ as a minor because $G - E(S,S)$ excludes $K_{3,h}$ as a minor.
 
 We wish to apply the induction hypothesis to the triples $(G_i,S_i,h)$.
 If $|V(G_i)| = |V(G)|$ then $\ell = 1$ and $S \subsetneq S_i$.
 In this case the algorithm still makes progress since the size of $S$ can be increased at most $h-3$ times.
 
 By the induction hypothesis, there are tree decompositions $(T_i,\beta_i)$ of $G_i$ and functions $\gamma_i\colon V(T_i) \rightarrow 2^{V(G_i)}$ satisfying Properties \ref{item:decomposition-output-1} - \ref{item:decomposition-output-7}.
 We define $(T,\beta)$ to be the tree decomposition where $T$ is obtained from the disjoint union of $T_1,\dots,T_\ell$ by adding a fresh root vertex $r$ which is connected to the root vertices of $T_1,\dots,T_\ell$.
 Also, $\beta(r)$ is defined as above and $\beta(t) \coloneqq \beta_i(t)$ for all $t \in V(T_i)$ and $i \in [\ell]$.
 Finally, $\gamma(r)$ is again defined as above, and $\gamma(t) \coloneqq \gamma_i(t)$ for all $t \in V(T_i)$ and $i \in [\ell]$.
 
 The algorithm clearly runs in polynomial time and the output is isomorphism-invariant (the coloring $\lambda$ is defined below).
 We need to verify that Properties \ref{item:decomposition-output-1} - \ref{item:decomposition-output-7} are satisfied.
 Using the comments above and the induction hypothesis, it is easy to verify that Properties \ref{item:decomposition-output-2}, \ref{item:decomposition-output-4}, \ref{item:decomposition-output-5} and \ref{item:decomposition-output-6} are satisfied.
 
 For Property \ref{item:decomposition-output-7} it suffices to ensure that $\cl_{2,2}^{(G_i,\lambda)}(\gamma(t)) \subseteq \cl_{2,2}^{(G,\lambda)}(\gamma(t))$.
 Towards this end, it suffices to ensure that $\lambda(v) \neq \lambda(w)$ for all $v \in \beta(r)$ and $w \in V(G) \setminus \beta(r)$.
 To ensure this property holds on all levels of the tree, we can simply define $\lambda(v) \coloneqq \{\dist_T(r,t) \mid t \in V(T), v \in \beta(t)\}$.
 
 Next, we modify the tree decomposition in order to ensure Property \ref{item:decomposition-output-3}.
 Consider a node $t \in V(T)$ with children $t_1,\dots,t_\ell$.
 We say that $t_i \sim t_j$ if $\beta(t) \cap \beta(t_i) = \beta(t) \cap \beta(t_j)$.
 Let $A_1,\dots,A_k$ be the equivalence classes of the equivalence relation $\sim$.
 For every $i \in [k]$ we introduce a fresh node $s_i$.
 Now, every $t_j \in A_i$ becomes a child of $s_i$ and $s_i$ becomes a child of $t$.
 Finally, we set $\beta(s_i) = \gamma(s_i) \coloneqq \beta(t) \cap \beta(t_j)$ for some $t_j \in A_i$.
 Observe that after this modification, Properties \ref{item:decomposition-output-2} and \ref{item:decomposition-output-4} - \ref{item:decomposition-output-7} still hold.
 
 Finally, it remains to verify Property \ref{item:decomposition-output-1}.
 Before the modification described in the last paragraph, we have that $|V(T)| \leq |V(G)|$.
 Since the modification process at most doubles the number of nodes in $T$, the bound follows.
\end{proof}

\section{An FPT Isomorphism Test for Graphs of Small Genus}
\label{sec:main-algorithm}

Building on the decomposition theorem given in the last section, we can now prove the main result of this paper.

\begin{theorem}
 Let $G_1,G_2$ be two (vertex- and arc-colored) graphs that exclude $K_{3,h}$ as a minor.
 Then one can decide whether $G_1$ is isomorphic to $G_2$ in time $2^{\CO(h^4 \log h)}n^{\CO(1)}$.
\end{theorem}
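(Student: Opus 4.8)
The plan is to realize the decomposition-and-dynamic-programming strategy sketched in Section~\ref{sec:overview}, using Theorem~\ref{thm:decomposition-into-2-2-bounded-parts} for the structure and Theorems~\ref{thm:bounding-group-t-k-wl} and~\ref{thm:hypergraph-isomorphism-gamma-d} (with $t=k=2$) as the algorithmic engine. First I would reduce to the $3$-connected case: the standard isomorphism-invariant decomposition into triconnected components (see, e.g., \cite{HopcroftT72}) reduces testing $G_1\cong G_2$ in polynomial time to testing isomorphism of the associated trees of triconnected components, each of which is either a cycle or a $3$-connected graph and, being a topological minor of the original graph, still excludes $K_{3,h}$ as a minor. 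So assume $G_1,G_2$ are $3$-connected. Next, picking three distinct $v_1,v_2,v_3\in V(G_1)$ and iterating over all $\CO(n^3)$ ordered triples $w_1,w_2,w_3\in V(G_2)$ of potential images, it suffices to decide, for each such choice, whether there is an isomorphism sending $v_j\mapsto w_j$.

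Fixing such a choice, I would build an isomorphism-invariant rooted tree decomposition of each $G_i$ by combining the closure operator $\cl_{2,2}^{G_i}$ with Theorem~\ref{thm:decomposition-into-2-2-bounded-parts}. Take the root bag to be $\cl_{2,2}^{G_i}(v_1,v_2,v_3)$ (with its $\gamma$-set obtained from small color classes via Lemma~\ref{la:find-small-color-class}, exactly as inside the proof of Theorem~\ref{thm:decomposition-into-2-2-bounded-parts}). By Lemma~\ref{la:small-separator}, every connected component $Z$ of $G_i$ minus the root bag satisfies $|N_{G_i}(Z)|<h$, so the triple $(G_i[N_{G_i}[Z]],N_{G_i}(Z),h)$ (with $N_{G_i}(Z)$ turned into a clique) meets the hypotheses of Theorem~\ref{thm:decomposition-into-2-2-bounded-parts}, and recursion yields a rooted tree $(T_i,\beta_i)$ with functions $\gamma_i$ and a coloring $\lambda_i$. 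The decisive properties are: $|V(T_i)|=\CO(n)$; the adhesion width is at most $h-1$; every bag satisfies $\beta_i(t)\subseteq\cl_{2,2}^{(G_i,\lambda_i)}(\gamma_i(t))$ with $|\gamma_i(t)|\le h^4$; and for every node the adhesion sets of its children are either pairwise distinct or all equal to the bag itself (Property~\ref{item:decomposition-output-3}). All of this is computed in polynomial time and isomorphism-invariantly.

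The core of the algorithm is a simultaneous bottom-up dynamic program over $T_1$ and $T_2$. For a node $t$ of $T_i$ let $G_i^{(t)}$ be the subgraph induced by the union of all bags in the subtree rooted at $t$, and let $A_i(t)=\beta_i(t)\cap\beta_i(\parent(t))$ be its adhesion set to the parent. For every pair $(t_1,t_2)\in V(T_1)\times V(T_2)$ with $|A_1(t_1)|=|A_2(t_2)|$ I would compute the set $\CT(t_1,t_2)$ of all bijections $A_1(t_1)\to A_2(t_2)$ extending to a color-preserving isomorphism $G_1^{(t_1)}\cong G_2^{(t_2)}$, represented as a bounded union of cosets of permutation groups on the (size $<h$) adhesion sets. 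Assuming these are available for all children pairs, I build colored graphs $H_1,H_2$ by augmenting $G_i[\beta_i(t_i)]$ with simple isomorphism-invariant gadgets: for each child $s_1$ of $t_1$ a gadget attached to $A_1(s_1)\subseteq\beta_1(t_1)$ that encodes the collection $\{\CT(s_1,s_2)\mid s_2 \text{ a child of } t_2\}$. By Property~\ref{item:decomposition-output-3}, either the sets $A_1(s_1)$ are pairwise distinct, in which case the correspondence of children is forced by the adhesion sets and one gadget per child suffices, or all $A_1(s_1)$ equal $\beta_1(t_1)$, in which case the gadgets only record the multiset of realizable coset-types and the combinatorial matching of subtrees is deferred to the group machinery. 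The construction is arranged so that $H_1\cong H_2$ (respecting colors and gadgets) iff there is an isomorphism $\beta_1(t_1)\to\beta_2(t_2)$ matching up the subtrees compatibly, and — crucially — since $\beta_i(t_i)\subseteq\cl_{2,2}^{(G_i,\lambda_i)}(\gamma_i(t_i))$ and the gadgets are chosen ``closure-friendly'', $H_i$ becomes $(2,2)$-WL-bounded once the at most $h^4$ vertices of $\gamma_i(t_i)$ are individualized. I therefore individualize $\gamma_1(t_1)$ at a cost of $|\gamma_1(t_1)|!\le (h^4)!=2^{\CO(h^4\log h)}$, iterate over the few admissible images inside $\gamma_2(t_2)$, and for each choice invoke Theorems~\ref{thm:bounding-group-t-k-wl} and~\ref{thm:hypergraph-isomorphism-gamma-d} (the input group is a $\mgamma_2$-group) to compute $\Iso(H_1,H_2)$ in polynomial time; restricting to $A_1(t_1)$ gives $\CT(t_1,t_2)$. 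At the root, $G_1\cong G_2$ (for the fixed triple of images) iff $\CT(\mathrm{root},\mathrm{root})\neq\emptyset$. The running time is $\CO(n^3)$ (image triples) times $\CO(n^2)$ (node pairs) times $2^{\CO(h^4\log h)}n^{\CO(1)}$ per pair, i.e.\ $2^{\CO(h^4\log h)}n^{\CO(1)}$.

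I expect the main obstacle to be the gadget construction and its interaction with the two cases of Property~\ref{item:decomposition-output-3}: the gadgets must be isomorphism-invariant, must encode the sets $\CT(s_1,s_2)$ faithfully enough that $H_1\cong H_2$ is equivalent to the existence of compatible partial isomorphisms, and must not destroy $(2,2)$-WL-boundedness of $H_i$ after individualizing $\gamma_i(t_i)$ — in particular in the case where an unbounded number of children share the whole bag as their adhesion set, where the actual matching of subtrees has to be delegated to the $\mgamma_2$-hypergraph-isomorphism subroutine rather than handled by brute force. Getting the bookkeeping of cosets and gadget colors right so that everything is simultaneously isomorphism-invariant, polynomial-time computable, and $(2,2)$-WL-bounded is the delicate part; the structural, combinatorial, and group-theoretic ingredients themselves are all already in place.
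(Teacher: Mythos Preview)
Your plan closely mirrors the paper's proof and gets the overall structure right, but there is a genuine gap in case~(b) of Property~\ref{item:decomposition-output-3}, where all children of a node $t$ share the same adhesion set, equal to $\beta(t)$ itself. You propose to encode this case with gadgets as well and to ``defer the matching of subtrees to the $\mgamma_2$-hypergraph-isomorphism subroutine.'' This does not work as stated: if many children have the same isomorphism type (the same $\CT$-data), then the gadgets attached to them are indistinguishable --- they have the same neighborhood $\beta_i(t_i)$, the same vertex color, and the same incident arc colors --- so after individualizing $\gamma_i(t_i)$ the auxiliary graph $H_i$ is \emph{not} $(2,2)$-WL-bounded on the gadget vertices; they sit in a color class of unbounded size. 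Hence Theorem~\ref{thm:bounding-group-t-k-wl} gives you nothing useful on that part of the domain, and the permutation group you actually need (permuting the isomorphic children freely) is a full symmetric group, not a $\mgamma_2$-group.

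The paper handles case~(b) by a completely different, much simpler mechanism. Since in case~(b) the bag $\beta(t)$ coincides with an adhesion set, Property~\ref{item:decomposition-output-2} gives $|\beta(t)|\le h-1$. One therefore enumerates all at most $(h-1)!$ bijections $\sigma\colon\beta_1(t_1)\to\beta_2(t_2)$ directly; for each fixed $\sigma$, deciding whether the children can be matched compatibly is a bipartite perfect-matching problem (child $s_1$ of $t_1$ may be matched to child $s_2$ of $t_2$ iff $\sigma\in\CT(s_1,s_2)$), solvable in polynomial time. The gadget-and-group machinery is invoked \emph{only} in case~(a), where the pairwise-distinct adhesion sets $S_j^i$ ensure that, once $\beta_i(t_i)$ is individualized, every gadget vertex lands in a singleton color class --- precisely what is needed for the $(2,2)$-closure argument (your ``closure-friendly'' requirement) to go through.
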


\begin{proof}
 Suppose $G_i = (V(G_i),E(G_i),\chi_V^i,\chi_E^i)$ for $i \in \{1,2\}$.
 Using standard reduction techniques (see, e.g., \cite{HopcroftT72}) we may assume without loss of generality that $G_1$ and $G_2$ are $3$-connected.
 Pick an arbitrary set $S_1 \subseteq V(G_1)$ such that $|S_1| = 3$ and $G_1 - S_1$ is connected.
 For every $S_2 \subseteq V(G_2)$ such that $|S_2| = 3$ and $G_2 - S_2$ is connected, the algorithm tests whether there is an isomorphism $\varphi\colon G_1 \cong G_2$ such that $S_1^\varphi = S_2$.
 Observe that $S_i = N_{G_i}(V(G_i) \setminus S_i)$ for both $i \in \{1,2\}$ since $G_1$ and $G_2$ are $3$-connected.
 This implies that the triple $(G_i,S_i,h)$ satisfies the requirements of Theorem \ref{thm:decomposition-into-2-2-bounded-parts}.
 Let $(T_i,\beta_i)$ be the tree decomposition, $\gamma_i\colon V(T_i) \rightarrow 2^{V(G_i)}$ be the function, and $\lambda_i$ be the vertex-coloring computed by Theorem \ref{thm:decomposition-into-2-2-bounded-parts} on input $(G_i,S_i,h)$.
 
 Now, the basic idea is compute isomorphisms between $(G_1,S_1)$ and $(G_2,S_2)$ using dynamic programming along the tree decompositions.
 More precisely, we aim at recursively computing the set
 \[\Lambda \coloneqq \Iso((G_1,\lambda_1,S_1),(G_2,\lambda_1,S_2))[S_1]\]
 (here, $\Iso((G_1,\lambda_1,S_1),(G_2,\lambda_1,S_2))$ denotes the set of isomorphisms $\varphi\colon G_1 \cong G_2$ which additionally respect the vertex-colorings $\lambda_i$ and satisfy $S_1^\varphi = S_2$).
 Throughout the recursive algorithm, we maintain the property that $|S_i| \leq h$.
 Also, we may assume without loss of generality that $S_i$ is $\lambda_i$-invariant (otherwise, we replace $\lambda_i$ by $\lambda_i'$ defined via $\lambda_i'(v) \coloneqq (1,\lambda_i(v))$ for all $v \in S_i$, and $\lambda_i'(v) \coloneqq (0,\lambda_i(v))$ for all $v \in V(G_i) \setminus S_i$).
 
 Let $r_i$ denote the root node of $T_i$.
 Let $\ell$ denote the number of children of $r_i$ in the tree $T_i$ (if the number of children is not the same, the algorithm concludes that $\Iso((G_1,\lambda_1,S_1),(G_2,\lambda_1,S_2)) = \emptyset$).
 Let $t_1^i,\dots,t_\ell^i$ be the children of $r_i$ in $T_i$, $i \in \{1,2\}$.
 For $i \in \{1,2\}$ and $j \in [\ell]$ let $V_j^i$ denote the set of vertices appearing in bags below (and including) $t_j^i$.
 Also let $S_j^i \coloneqq \beta_i(r_i) \cap \beta_i(t_j^i)$ be the adhesion set to the $j$-th child, and define $G_{j}^i \coloneqq G_i[V_j^i]$.
 Finally, let $T_j^i$ denote the subtree of $T_i$ rooted at node $t_j^i$, and $\beta_j^i \coloneqq \beta_i|_{V(T_j^i)}$, $\gamma_j^i \coloneqq \gamma_i|_{V(T_j^i)}$ and $\lambda_j^i \coloneqq \lambda_i|_{V_j^i}$.
 
 For every $i,i' \in \{1,2\}$, and every $j,j' \in [\ell]$, the algorithm recursively computes the set
 \[\Lambda_{j,j'}^{i,i'} \coloneqq \Iso((G_j^i,\lambda_j^i,S_j^i),(G_{j'}^{i'},\lambda_{j'}^{i'},S_{j'}^{i'}))[S_j^i].\]
 We argue how to compute the set $\Lambda$.
 
 Building on Theorem \ref{thm:decomposition-into-2-2-bounded-parts}, Item \ref{item:decomposition-output-3}, we may assume that
 \begin{enumerate}[label = (\alph*)]
  \item\label{item:option-all-adhesion-sets-distinct} $S_j^i \neq S_{j'}^i$ for all distinct $j,j' \in [\ell]$ and $i \in \{1,2\}$, or
  \item\label{item:option-all-adhesion-sets-equal} $\beta(r_i) = S_j^i$ for all $j \in [\ell]$ and $i \in \{1,2\}$
 \end{enumerate}
 (if $r_1$ and $r_2$ do not satisfy the same option, then $\Iso((G_1,\lambda_1,S_1),(G_2,\lambda_1,S_2)) = \emptyset$).
 
 We first cover Option \ref{item:option-all-adhesion-sets-equal}.
 In this case $|\beta(r_i)| = |S_j^i| \leq h-1$ by Theorem \ref{thm:decomposition-into-2-2-bounded-parts}, Item \ref{item:decomposition-output-2}.
 The algorithm iterates over all bijections $\sigma \colon \beta(r_1) \rightarrow \beta(r_2)$.
 Now,
 \[\sigma \in \Iso((G_1,\lambda_1,S_1),(G_2,\lambda_1,S_2))[\beta(r_1)] \;\;\;\Leftrightarrow\;\;\; \exists \rho \in \Sym([\ell])\; \forall j \in [\ell]\colon \sigma \in \Lambda_{j,\rho(j)}^{1,2}.\]
 To test whether $\sigma$ satisfies the right-hand side condition, the algorithm constructs an auxiliary graph $H_\sigma$ with vertex set $V(H_\sigma) \coloneqq \{1,2\} \times [\ell]$ and edge set
 \[E(H_\sigma) \coloneqq \{(1,j)(2,j') \mid \sigma \in \Lambda_{j,j'}^{1,2}\}.\]
 Observe that $H_\sigma$ is bipartite with bipartition $(\{1\} \times [\ell], \{2\} \times [\ell])$.
 Now,
 \[\sigma \in \Iso((G_1,\lambda_1,S_1),(G_2,\lambda_1,S_2))[\beta(r_1)] \;\;\;\Leftrightarrow\;\;\; H_\sigma \text{ has a perfect matching}.\]
 It is well-known that the latter can be checked in polynomial time.
 This completes the description of the algorithm in case Option \ref{item:option-all-adhesion-sets-equal} is satisfied.
 
 Next, suppose Option \ref{item:option-all-adhesion-sets-distinct} is satisfied.
 Here, the central idea is to construct auxiliary vertex- and arc-colored graphs $H_i = (V(H_i),E(H_i),\mu_V^i,\mu_E^i)$ and sets $A_i \subseteq V(H_i)$ such that
 \begin{enumerate}
  \item $\beta_i(r_i) \subseteq A_i$ and $A_i \subseteq \cl_{2,2}^{H_i}(\gamma_i(r_i))$, and
  \item $\Iso(H_1,H_2)[S_1] = \Iso(H_1[A_1],H_2[A_2])[S_1] = \Lambda$.
 \end{enumerate}
 Towards this end, we set
 \[V(H_i) \coloneqq V(G_i) \uplus \{(S_j^i,\gamma) \mid j \in [\ell], \gamma \in \Lambda_{j,j}^{i,i}\}\]
 and
 \[E(H_i) \coloneqq E(G_i) \cup \{(S_j^i,\gamma)v \mid j \in [\ell], \gamma \in \Lambda_{j,j}^{i,i}, v \in S_j^i\}.\]
 Also, we set
 \[A_i \coloneqq \beta(r_i) \cup \{(S_j^i,\gamma) \mid j \in [\ell], \gamma \in \Lambda_{j,j}^{i,i}\}.\]
 The main idea is to use the additional vertices attached to the set $S_j^i$ to encode the isomorphism type of the graph $(G_j^i,\lambda_j^i,S_j^i)$.
 This information is encoded by the vertex- and arc-coloring building on sets $\Lambda_{j,j'}^{i,i'}$ already computed above.
 Let $\CS \coloneqq \{S_j^i \mid i \in \{1,2\}, j \in [\ell]\}$, and define $S_j^i \sim S_{j'}^{i'}$ if $\Lambda_{j,j'}^{i,i'} \neq \emptyset$.
 Observe that $\sim$ is an equivalence relation.
 Let $\{\CP_1,\dots,\CP_k\}$ be the partition of $\CS$ into the equivalence classes. 
 We set
 \[\mu_V^i(v) \coloneqq (0,\chi_V^i(v),\lambda_i(v))\]
 for all $v \in S_i$,
 \[\mu_V^i(v) \coloneqq (1,\chi_V^i(v),\lambda_i(v))\]
 for all $v \in \gamma_i(r_i) \setminus S_i$,
 \[\mu_V^i(v) \coloneqq (2,\chi_V^i(v),\lambda_i(v))\]
 for all $v \in \beta_i(r_i) \setminus \gamma_i(r_i)$,
 \[\mu_V^i(v) \coloneqq (3,\chi_V^i(v),\lambda_i(v))\]
 for all $v \in V(G_i) \setminus \beta_i(r_i)$, and
 \[\mu_V^i(S_j^i,\gamma) \coloneqq (4,q,q)\]
 for all $q \in [k]$, $S_j^i \in \CP_q$, and $\gamma \in \Lambda_{j,j}^{i,i}$.
 For every $q \in [k]$ fix some $i(q) \in \{1,2\}$ and $j(q) \in [\ell]$ such that $S_{j(q)}^{i(q)} \in \CP_q$ (i.e., for each equivalence class, the algorithm fixes one representative).
 Also, for every $q \in [k]$ and $S_j^i \in \CP_q$, fix a bijection $\sigma_j^i \in \Lambda_{j(q),j}^{i(q),i}$ such that $\sigma_{j(q)}^{i(q)}$ is the identity mapping.
 Finally, for $q \in [k]$, fix a numbering $S_{j(q)}^{i(q)} = \{u_1^q,\dots,u_{s(q)}^q\}$.

 With this, we are ready to define the arc-coloring $\mu_E^i$.
 First, we set
 \[\mu_E^i(v,w) \coloneqq (0,\chi_E^i(v,w))\]
 for all $vw \in E(G_i)$.
 Next, consider an edge $(S_j^i,\gamma)v$ where $j \in [\ell]$, $\gamma \in \Lambda_{j,j}^{i,i}$, and $v \in S_j^i$.
 Suppose $S_j^i \in \CP_q$.
 We set
 \[\mu_E^i(v,(S_j^i,\gamma)) = \mu_E^i((S_j^i,\gamma),v) \coloneqq (1,c)\]
 for the unique $c \in [s(q)]$ such that
 \[v = (u_c^q)^{\sigma_j^i\gamma}.\]
 
 This completes the description of the graphs $H_i$ and the sets $A_i$, $i \in \{1,2\}$.
 Next, we verify that they indeed have the desired properties.
 
 \begin{claim}
  \label{claim:closure-contains-root-bag}
  $\beta_i(r_i) \subseteq A_i$ and $A_i \subseteq \cl_{2,2}^{H_i}(\gamma_i(r_i))$.
 \end{claim}
 \begin{claimproof}
  The first part holds by definition of the set $A_i$.
  So let us verify the second part.
  We have that $V(G_i) \subseteq V(H_i)$ by definition.
  Also, $V(G_i)$ is $\mu_V^i$-invariant and $(\mu_V^i)[V(G_i)] \preceq \lambda_i$.
  This implies that
  \[\beta_i(r_i) \subseteq \cl_{2,2}^{(G_i,\lambda_i)}(\gamma(r_i)) \subseteq \cl_{2,2}^{H_i}(\gamma_i(r_i))\]
  using Theorem \ref{thm:decomposition-into-2-2-bounded-parts}, Item \ref{item:decomposition-output-7}.
  By definition of the set $A_i$, it remains to prove that
  \[\{(S_j^i,\gamma) \mid j \in [\ell], \gamma \in \Lambda_{j,j}^{i,i}\} \subseteq \cl_{2,2}^{H_i}(\gamma_i(r_i)).\]
  Actually, it suffices to prove that
  \[\{(S_j^i,\gamma) \mid j \in [\ell], \gamma \in \Lambda_{j,j}^{i,i}\} \subseteq \cl_{2,2}^{H_i}(\beta_i(r_i)).\]
  Let $\chi_i^*$ denote the $(2,2)$-WL-stable coloring after individualizing all vertices from $\beta_i(r_i)$.
  We have that $N_{H_i}(S_j^i,\gamma) = S_j^i \subseteq \beta_i(r_i)$ for all $j \in [\ell]$ and $\gamma \in \Lambda_{j,j}^{i,i}$.
  Recall that $S_j^i \neq S_{j'}^i$ for all distinct $j,j' \in [\ell]$.
  This implies that $\chi_i^*(S_j^i,\gamma) \neq \chi_i^*(S_{j'}^i,\gamma')$ for all distinct $j,j' \in [\ell]$, $\gamma \in \Lambda_{j,j}^{i,i}$ and $\gamma' \in \Lambda_{j',j'}^{i,i}$.
  So pick $\gamma,\gamma' \in \Lambda_{j,j}^{i,i}$ to be distinct.
  We need to show that $\chi_i^*(S_j^i,\gamma) \neq \chi_i^*(S_j^i,\gamma')$.
  Here, we use the arc-coloring $\mu_E^i$.
  Indeed, by definition, all edges incident to the vertex $(S_j^i,\gamma)$ receive pairwise different colors.
  Now pick $w \in S_j^i$ such that $w^{\gamma} \neq w^{\gamma'}$ and suppose that $w = (u_c^q)^{\sigma_j^i}$ where $q \in [k]$ is the unique number for which $S_j^i \in \CP_q$.
  Then $(S_j^i,\gamma)$ and $(S_j^i,\gamma')$ reach different neighbors via $(1,c)$-colored edges.
  Since all neighbors of $(S_j^i,\gamma)$ and $(S_j^i,\gamma')$ are already individualized in $\chi_i^*$, we conclude that $\chi_i^*(S_j^i,\gamma) \neq \chi_i^*(S_j^i,\gamma')$.
  This implies that 
  \[\{(S_j^i,\gamma) \mid j \in [\ell], \gamma \in \Lambda_{j,j}^{i,i}\} \subseteq \cl_{2,2}^{H_i}(\beta_i(r_i))\]
  as desired.
 \end{claimproof}
 
 \begin{claim}
  \label{claim:equal-isomorphism-sets}	
  \[\Iso(H_1,H_2)[S_1] = \Iso(H_1[A_1],H_2[A_2])[S_1] = \Iso((G_1,S_1),(G_2,S_2))[S_1].\]
 \end{claim}
 \begin{claimproof}
  We show the claim by proving the following three inclusions.
  \begin{description}
   \item[{$\Iso(H_1,H_2)[S_1] \subseteq \Iso(H_1[A_1],H_2[A_2])[S_1]$:}] Observe that $A_i$ is $\mu_V^i$-invariant. Hence, this inclusion trivially holds.
   \item[{$\Iso(H_1[A_1],H_2[A_2])[S_1] \subseteq \Iso((G_1,S_1),(G_2,S_2))[S_1]$:}] Pick $\psi \in \Iso(H_1[A_1],H_2[A_2])$.
    First observe that, by the vertex-coloring of $H_i$, we have that $S_1^\psi = S_2$.
    From the structure of $H_i[A_i]$ it follows that there is some permutation $\rho \in \Sym([\ell])$ such that, for every $j \in [\ell]$ and $\gamma \in \Lambda_{j,j}^{1,1}$ there is some $\gamma' \in \Lambda_{\rho(j),\rho(j)}^{2,2}$ such that
    \[\psi(S_j^1,\gamma) = (S_{\rho(j)}^2,\gamma')\]
    Moreover, $S_j^1 \sim S_{\rho(j)}^2$ for all $j \in [\ell]$.
    Now, fix some $j \in [\ell]$.
    We argue that $\psi[S_j^i] \in \Lambda_{j,\rho(j)}^{1,2}$.
    Towards this end, pick $q \in [k]$ such that $S_j^1 \in \CP_q$.
    Observe that also $S_{\rho(j)}^2 \in \CP_q$.
    
    Let $\gamma \in \Lambda_{j,j}^{1,1}$ be the identify mapping, and pick $\gamma' \in \Lambda_{\rho(j),\rho(j)}^{2,2}$ such that $\psi(S_j^1,\gamma) = (S_{\rho(j)}^2,\gamma')$.
    Let $v_1 \in S_j^1$ and pick $c \in [s(q)]$ such that
    \[\mu_E^1(v_1,(S_j^1,\gamma)) = (1,c).\]
    Also, let $v_2 \coloneqq \psi(v_1)$.
    Since $\psi$ is an isomorphism, we conclude that
    \[\mu_E^2(v_2,(S_j^2,\gamma')) = (1,c).\]
    By definition of the arc-colorings, this means that
    \[v_1 = (u_c^q)^{\sigma_j^1\gamma} = (u_c^q)^{\sigma_j^1}\]
    and
    \[v_2 = (u_c^q)^{\sigma_{\rho(j)}^2\gamma'}.\]
    Together, this means that
    \[v_2 = v_1^{(\sigma_j^1)^{-1}\sigma_{\rho(j)}^2\gamma'}.\]
    In particular, $\psi[S_j^i] = (\sigma_j^1)^{-1}\sigma_{\rho(j)}^2\gamma' \in \Lambda_{j,\rho(j)}^{1,2}$.
    
    So let
    \[\varphi_{j} \in \Iso((G_j^1,S_j^1),(G_{\rho(j)}^{2},S_{\rho(j)}^{2}))\]
    such that $\varphi_j[S_j^1] = \psi[S_j^1]$ for all $j \in [\ell]$.
    Finally, define $\varphi\colon V(G_1) \rightarrow V(G_2)$ via $\varphi(v) \coloneqq \psi(v)$ for all $v \in \beta(r_1)$, and $\varphi(v) \coloneqq \varphi_j(v)$ for all $v \in V_j^1$.
    It is easy to see that $\varphi \in \Iso((G_1,S_1),(G_2,S_2))$ and $\varphi[S_1] = \psi[S_1]$.
   \item[{$\Iso((G_1,S_1),(G_2,S_2))[S_1] \subseteq \Iso(H_1,H_2)[S_1]$:}]
    Let $\varphi \in \Iso((G_1,S_1),(G_2,S_2))$ be an isomorphism.
    We have that $(T_i,\beta_i)$, the function $\gamma_i$, and the coloring $\lambda_i$ are computed in an isomorphism-invariant fashion.
    Hence, $\varphi\colon H_1[V(G_1)] \cong H_2[V(G_2)]$.
    Also, there is a permutation $\rho \in \Sym([\ell])$ such that $(S_j^1)^\varphi = S_{\rho(j)}^2$.
    
    We now define a bijection $\psi\colon V(H_1) \rightarrow V(H_2)$ as follows.
    First of all, $\psi(v) \coloneqq \varphi(v)$ for all $v \in V(G_i) \subseteq V(H_i)$.
    Now, let $j \in [\ell]$ and $\gamma \in \Lambda_{j,j}^{1,1}$.
    Observe that $\varphi[S_j^1] \in \Lambda_{j,\rho(j)}^{1,2}$.
    Pick $q \in [k]$ such that $S_j^1,S_{\rho(j)}^2 \in \CP_q$.
    We have that
    \[\varphi[S_j^1] = \gamma^{-1}(\sigma_j^1)^{-1}\sigma_{\rho(j)}^2\gamma'\]
    for some $\gamma' \in \Lambda_{\rho(j),\rho(j)}^{2,2}$.
    We set
    \[\varphi(S_j^1,\gamma) \coloneqq (S_{\rho(j)}^2,\gamma').\]
    By definition, $\mu_V^1(S_j^1,\gamma) = (4,q,q) = \mu_V^2(S_{\rho(j)}^2,\gamma')$.
    It only remains to verify that, for all $v \in S_j^1$, we have that
    \[\mu_E^1((S_j^1,\gamma),v) = \mu_E^2((S_{\rho(j)}^2,\gamma'),\varphi(v)).\]
    Suppose that $\mu_E^1((S_j^1,\gamma),v) = (1,c)$.
    Then $v = (u_c^q)^{\sigma_j^1\gamma}$.
    On the other hand,
    \[(u_c^q)^{\sigma_{\rho(j)}^2\gamma'} = v^{\gamma^{-1}(\sigma_j^1)^{-1}\sigma_{\rho(j)}^2\gamma'} = v^{\varphi}.\]
    This implies that $\mu_E^2((S_{\rho(j)}^2,\gamma'),\varphi(v)) = (1,c)$.
    Overall, we get that $\psi \colon H_1 \cong H_2$.
  \end{description}
 \end{claimproof}

 Recall that the algorithm aims at computing the set $\Lambda$.
 Building on the previous claim, we can simply compute $\Iso(H_1[A_1],H_2[A_2])[S_1]$.
 Towards this end, the algorithm iterates through all bijections $\tau\colon\gamma_1(r_1)\rightarrow\gamma_2(r_2)$, and wishes to test whether there is an isomorphism $\varphi \in \Iso(H_1[A_1],H_2[A_2])$ such that $\varphi[\gamma_1(r_1)] = \tau$.
 Note that, since $\gamma_i(r_i)$ is $\mu_V^i$-invariant, it now suffices to solve this latter problem.
 
 So fix a bijection $\tau\colon\gamma_1(r_1)\rightarrow\gamma_2(r_2)$ (if $|\gamma_1(r_1)| \neq |\gamma_2(r_2)|$ then the algorithm returns $\Lambda = \emptyset$).
 Let $\mu_1^*(v) \coloneqq (1,v)$ for all $v \in \gamma_1(r_1)$, $\mu_1^*(v) \coloneqq (0,\mu_V^1)$ for all $v \in V(H_1) \setminus \gamma_1(r_1)$, and
 $\mu_2^*(v) \coloneqq (1,\tau^{-1}(v))$ for all $v \in \gamma_2(r_2)$ and $\mu_2^*(v) \coloneqq (0,\mu_V^2)$ for all $v \in V(H_2) \setminus \gamma_2(r_2)$.
 
 Intuitively speaking, $\mu_1^*$ and $\mu_2^*$ are obtained from $\mu_V^1$ and $\mu_V^2$ by individualizing all vertices from $\gamma_1(r_1)$ and $\gamma_r(r_2)$ according to the bijection $\tau$.
 Now, we can apply Theorem \ref{thm:bounding-group-t-k-wl} on input graph $H_1^* = (H_1,\mu_1^*)$ and $H_2^* = (H_2,\mu_2^*)$, and parameters $t = k \coloneqq 2$.
 
 Building on Claim \ref{claim:closure-contains-root-bag}, we obtain a $\mgamma_2$-group $\Gamma \leq \Sym(A_1)$ and a bijection $\theta\colon A_1 \rightarrow A_2$ such that $\Iso(H_1^*[A_1],H_2^*[A_2]) \subseteq \Gamma\theta$.
 Now, we can determine whether $H_1^*[A_1] \cong H_2^*[A_2]$ using Theorem \ref{thm:hypergraph-isomorphism-gamma-d}.
 Using Claim \ref{claim:equal-isomorphism-sets}, this provides the answer to whether $\tau[S_1] \in \Lambda$ (recall that $S_1 \subseteq \gamma_1(r_1)$ by Theorem \ref{thm:decomposition-into-2-2-bounded-parts}, Items \ref{item:decomposition-output-4} and \ref{item:decomposition-output-6}).
 
 Overall, this completes the description of the algorithm.
 It only remains to analyze its running time.
 Let $n$ denote the number of vertices of $G_1$ and $G_2$.
 
 The algorithm iterates over at most $n^3$ choices for the initial set $S_2$, and computes the decompositions $(T_i,\beta_i)$, the functions $\gamma_i$, and the colorings $\lambda_i$ in polynomial time.
 For the dynamic programming tables, the algorithm needs to compute $\CO(n^2)$ many $\Lambda$-sets (using Theorem \ref{thm:decomposition-into-2-2-bounded-parts}, Item \ref{item:decomposition-output-1}), each of which contains at most $h! = 2^{\CO(h \log h)}$ many elements by Theorem \ref{thm:decomposition-into-2-2-bounded-parts}, Item \ref{item:decomposition-output-2}.
 Hence, it remains to analyze the time required to compute the set $\Lambda$ given the $\Lambda_{j,j'}^{i,i'}$-sets.
 For Option \ref{item:option-all-adhesion-sets-equal}, this can clearly be done in time $2^{\CO(h \log h)}n^{\CO(1)}$.
 
 So consider Option \ref{item:option-all-adhesion-sets-distinct}.
 The graph $H_i$ can clearly be computed in time polynomial in its size.
 We have that $|V(H_i)| = 2^{\CO(h \log h)}n$.
 Afterwards, the algorithm iterates over $|\gamma_1(r_1)|!$ many bijections $\tau$.
 By Theorem \ref{thm:decomposition-into-2-2-bounded-parts}, Item \ref{item:decomposition-output-5}, we have that $|\gamma_1(r_1)|! = 2^{\CO(h^4 \log h)}$.
 For each bijection, the algorithm then requires polynomial computation time by Theorems \ref{thm:bounding-group-t-k-wl} and \ref{thm:hypergraph-isomorphism-gamma-d}.
 Overall, this proves the bound on the running time.
\end{proof}

\begin{remark}
 The algorithm from the last theorem can be extended in two directions.
 First, if one of the input graphs does not exclude $K_{3,h}$ as a minor, it can be modified to either correctly conclude that $G_1$ has a minor isomorphic to $K_{3,h}$, or to correctly decide whether $G_1$ is isomorphic to $G_2$.
 Indeed, the only part of the algorithm that exploits that the input graphs do not have minor isomorphic to $K_{3,h}$ is the computation of the tree decompositions $(T_i,\beta_i)$ from Theorem \ref{thm:decomposition-into-2-2-bounded-parts}.
 In turn, this theorem only exploits forbidden minors via Lemmas \ref{la:small-separator} and \ref{la:find-small-color-class}.
 An algorithm can easily detect if one of the implications of those two statements is violated, in which case it can infer the existence of a minor $K_{3,h}$.
 
 Secondly, using standard reduction techniques (see, e.g., \cite{Mathon79}), one can also compute a representation of the set of all isomorphisms $\Iso(G_1,G_2)$ in the same time.
\end{remark}

Since every graph $G$ of Euler genus $g$ excludes $K_{3,4g+3}$ as a minor \cite{Ringel65}, we obtain the following corollary.

\begin{corollary}
 Let $G_1,G_2$ be two (vertex- and arc-colored) graphs of Euler genus at most $g$.
 Then one can decide whether $G_1$ is isomorphic to $G_2$ in time $2^{\CO(g^4 \log g)}n^{\CO(1)}$.
\end{corollary}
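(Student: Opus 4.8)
The plan is to derive this directly from the main theorem together with a classical bound on the size of complete bipartite minors in graphs of bounded Euler genus. First I would invoke the result of Ringel~\cite{Ringel65} stating that every graph embeddable on a surface of Euler genus at most $g$ excludes $K_{3,h}$ as a minor whenever $h \geq 4g+3$; equivalently, setting $h \coloneqq 4g+3$, every graph of Euler genus at most $g$ is $K_{3,h}$-minor-free. This already reduces the corollary to an instance of the theorem on $K_{3,h}$-minor-free graphs, with a fixed choice of the parameter $h$ in terms of $g$.

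Next I would simply apply the main theorem with this value of $h$. Since $G_1$ and $G_2$ both have Euler genus at most $g$, they both exclude $K_{3,4g+3}$ as a minor, so the hypothesis of the theorem is satisfied and isomorphism can be decided in time $2^{\CO(h^4 \log h)}n^{\CO(1)}$. Substituting $h = 4g+3 = \CO(g)$ and observing that $h^4 \log h = \CO(g^4 \log g)$ then yields the claimed running time $2^{\CO(g^4 \log g)}n^{\CO(1)}$.

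One small point worth addressing is that the algorithm need not be given an embedding of the input graphs, nor does it need to verify that the inputs actually have Euler genus at most $g$: it simply runs the $K_{3,h}$-minor-free isomorphism test with $h = 4g+3$, and correctness follows from Ringel's bound. If desired, the first extension discussed in the remark after the theorem can be used to additionally certify, within the same time bound, the presence of a $K_{3,h}$ minor whenever the structural consequences of Lemmas~\ref{la:small-separator} and~\ref{la:find-small-color-class} are violated. Since the reduction amounts to plugging a concrete parameter into an already-established theorem, I do not anticipate any genuine obstacle here; the only thing to be careful about is bookkeeping the constants so that the exponent $\CO(g^4\log g)$ is stated correctly, which is immaterial up to the hidden constant.
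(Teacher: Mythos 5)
Your proposal is correct and matches the paper's argument exactly: the corollary is obtained by setting $h = 4g+3$ via Ringel's bound and plugging this into the main theorem, with $h^4\log h = \CO(g^4\log g)$. Nothing further is needed.
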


\section{Conclusion}

We presented an isomorphism test for graphs excluding $K_{3,h}$ as a minor running in time $2^{\CO(h^4 \log h)}n^{\CO(1)}$.
For this, we provided a polynomial-time isomorphism algorithm for $(t,k)$-WL-bounded graphs and argued that graphs excluding $K_{3,h}$ as a minor can be decomposed into parts that are $(2,2)$-WL-bounded after individualizing a small number of vertices.

Still, several questions remain open.
Maybe most notable, can isomorphism testing for graphs excluding $K_h$ as a minor be tested in time $f(h)n^{O(1)}$ for some \emph{computable} function $f$.
As graphs of bounded genus form an important subclass of graphs excluding $K_h$ as a minor, the techniques developed in this paper may provide an alternative approach to this question compared to the algorithm from \cite{LokshtanovPPS22}.

As an intermediate problem, it might also be interesting to consider the class $\CG_h$ of all graphs $G$ for which there is a set $X \subseteq V(G)$ of size $|X| \leq h$ such that $G - X$ is planar.
Is there a simple and efficient fpt isomorphism test for the class $\CG_h$ parameterized by $h$?

\bibliographystyle{plainurl}
\small
\bibliography{literature}

\end{document}